\newtheorem{thm}{\protect\theoremname}
\theoremstyle{plain}
\newtheorem{lem}{\protect\lemmaname}
\theoremstyle{plain}
\newtheorem{rem}[thm]{\protect\remarkname}
\theoremstyle{plain}
\newtheorem*{lem*}{\protect\lemmaname}
\theoremstyle{plain}
\newtheorem*{thm*}{\protect\theoremname}
\theoremstyle{plain}
\theoremstyle{plain}
\newtheorem{cor}[thm]{\protect\corollaryname}
\newtheorem{defn}{Definition}
  \providecommand{\corollaryname}{Corollary}
  \providecommand{\lemmaname}{Lemma}
  \providecommand{\propositionname}{Proposition}
  \providecommand{\remarkname}{Remark}
\providecommand{\theoremname}{Theorem}
\newcommand{\CC}{\mathbb{C}}
\newcommand{\coef}{\mu}
\newcommand{\eigen}{\lambda}
\renewcommand{\Im}{\operatorname{Im}}
\title{Learning $k$-body Hamiltonians via compressed sensing}
\author[1,6]{Muzhou Ma}
\author[2,3]{Steven T. Flammia}
\author[1,7]{John Preskill}
\author[1,4,5]{Yu Tong}
\affil[1]{Institute for Quantum Information and Matter, California Institute of Technology, CA 91125, USA}
\affil[2]{Department of Computer Science, Virginia Tech, Alexandria, VA 22314, USA}
\affil[3]{Phasecraft Inc., Washington DC 20001, USA}
\affil[4]{Department of Mathematics, Duke University, Durham, NC 27708, USA}
\affil[5]{Department of Electrical and Computer Engineering, Duke University, Durham, NC 27708, USA}
\affil[6]{Department of Electronic Engineering, Tsinghua University, Beijing, China}
\affil[7]{AWS Center for Quantum Computing, Pasadena, CA 91125, USA}
\date{\today}
\begin{document}

\maketitle

\begin{abstract}
    We study the problem of learning a $k$-body Hamiltonian with $M$ unknown Pauli terms that are not necessarily geometrically local. We propose a protocol that learns the Hamiltonian to precision $\epsilon$ with total evolution time ${\mathcal{O}}(M^{1/2+1/p}/\epsilon)$ up to logarithmic factors, where the error is quantified by the $\ell^p$-distance between Pauli coefficients. Our learning protocol uses only single-qubit control operations and a GHZ state initial state, is non-adaptive, is robust against SPAM errors, and performs well even if $M$ and $k$ are not precisely known in advance or if the Hamiltonian is not exactly $M$-sparse. Methods from the classical theory of compressed sensing are used for efficiently identifying the $M$ terms in the Hamiltonian from among all possible $k$-body Pauli operators. We also provide a lower bound on the total evolution time needed in this learning task, and we discuss the operational interpretations of the $\ell^1$ and $\ell^2$ error metrics. In contrast to most previous works, our learning protocol requires neither geometric locality nor any other relaxed locality conditions. 
\end{abstract}
\clearpage
\tableofcontents
\clearpage

\section{Introduction}
\label{sec:introduction}
Hamiltonian learning is a fundamental problem in quantum physics, crucial for understanding and controlling quantum systems. 
Accurate models of Hamiltonians are needed for tasks ranging from quantum simulation to quantum error correction, where precise knowledge of interactions allows for better optimization and control. 
It is also a fundamental problem in computer science, where the classically analogous task of learning undirected graphical models is central in many models of machine learning. 

Learning Hamiltonians, particularly in large systems with many-body interactions, is complicated by the fact that dynamical access to the Hamiltonian naturally leads to entangled quantum states. 
Inferring a general unknown Hamiltonian from measurements in the computational basis is therefore nontrivial. 
Similarly, access via Gibbs states introduces challenges in state preparation and post-processing of measurement results. 
Prior work, which we review below, has nonetheless shown that an unknown Hamiltonian can be learned in polynomial time from its dynamics \cite{HaahKothariTang2022optimal,StilckFrança2024,yu2023robust,gu2022practical,HuangTongFangSu2023learning,Caro_2024} or from samples of a Gibbs state when the bipartite graph connecting qubits to interactions has bounded degree~\cite{anshu2021sample,HaahKothariTang2022optimal,BakshiLiuMoitraTang2024learning}. 

The classical theory of compressed sensing~\cite{FoucartRauhut2013, candes2008introduction, chen2001atomic, candes2006near, rudelson2008sparse, candes2006stable, candes2011probabilistic, eldar2012compressed} is a promising avenue for improving Hamiltonian learning and extending it to the setting where the interactions are sparse, or approximately sparse, in a known basis.
It also has many applications in other types of learning tasks involving quantum states or processes \cite{Seif2021compressed, gross2010quantum, shabani2011efficient, shabani2011estimation, flammia2012quantum, Kalev2015, roth2018recovering, harper2021fast}.
Classically, compressed sensing efficiently solves the following problem: given a linear measurement matrix $A :\mathbb{C}^{n} \to \mathbb{C}^m$ and an unknown vector $x \in \mathbb{C}^n$ which is promised to be $s$-sparse in a given basis, reconstruct $x$ from $y = Ax$ when $m = \mathcal{O}(s \log n)$. 
At first glance this seems impossible for at least two reasons. First, the well-known Nyquist theorem states that learning a signal with bandwidth $n$ requires measuring $\Omega(n)$ frequencies. 
This however does not take into account the sparsity of the underlying signal. 
But second, naively taking the sparsity into account quickly runs into fundamental computational bottlenecks, as finding the sparsest vector consistent with affine constraints is $\mathsf{NP}$-hard \cite{Natarajan1995sparse}. 

Under surprisingly mild conditions on the measurement matrix $A$, compressed sensing overcomes these two obstacles. 
Moreover, the compressed-sensing estimate of the signal is robust to measurement noise due to sampling, robust to the signal being only ``near'' to a sparse signal, and is even efficient to compute via a reduction to an easy convex optimization problem. 
It is therefore natural to hope that compressed sensing can offer a powerful approach to reconstructing sparse Hamiltonians. 

\paragraph{Main results.}
We consider the class of $n$-qubit $k$-body Hamiltonians, where all couplings are written in the Pauli basis and involve Paulis of weight at most $k$ with $k = \mathcal{O}(1)$. 
Let $M \le \sum_{l=0}^k 3^l\binom{n}{l} = \mathcal{O}(n^k)$ be the total number of nonzero couplings in the Hamiltonian. 
Write $H = \sum_{a=1}^M \coef_a P_a$ for the explicit unknown Hamiltonian, where the coefficients $\coef_a$ are assumed to be normalized to lie in the interval $[-1,1]$. 
Our goal is to output the unknown Paulis $P_a$ and estimates $\widehat{\coef}_a$ of the coefficients that are accurate in $\ell^p$ error, defined as $\bigl\|\widehat{\boldsymbol{\coef}}-\boldsymbol{\coef}\bigr\|_p = \bigl(\sum_{a=1}^M |\widehat{\coef}_a - \coef_a|^{p}\bigr)^{1/p}$ for suitable $p$. 

In what follows, we use $\widetilde{\mathcal{O}}(f)$ to represent $\mathcal{O}(f\mathrm{polylog}(n,M,1/\delta,1/\epsilon,2^k))$ where $\delta$ is the failure probability and $\epsilon$ is the precision that we want to achieve.

\begin{thm*}[Informal version of Theorem~\ref{thm:ham_learning_upper_bound}]
    There exists a learning protocol that uses $N=\widetilde{\mathcal{O}}(M)$ independent non-adaptive experiments to learn, with probability at least $1-\delta$, an estimate $\widehat{\boldsymbol{\coef}}$ of all coefficients with weight $\le k$ that has $\ell^p$ error ($1\leq p\leq 2$) at most $\epsilon$, and uses a total evolution time
    \[
        T =\widetilde{\mathcal{O}}\left(\frac{M^{1/p+1/2}}{\epsilon}\right).
    \]
    The protocol is robust to a constant amount of state preparation and measurement (SPAM) noise. 
\end{thm*}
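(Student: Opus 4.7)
The strategy is to linearize the short-time Heisenberg dynamics around $t=0$, employ a randomized single-qubit Clifford control scheme applied to a GHZ reference state so as to construct a structured random measurement operator, and then invoke a compressed sensing decoder to recover the sparse coefficient vector $\boldsymbol{\coef}$. Concretely, I will show that the vector of empirical expectation values can be written as $y = t\,A\,\boldsymbol{\coef} + \boldsymbol{\eta}$, where the rows of $A$ form a bounded orthonormal system indexed by $k$-body Paulis and $\boldsymbol{\eta}$ aggregates the $\mathcal{O}(t^2)$ Taylor remainder, finite-shot noise, and SPAM noise. Recovering $\boldsymbol{\coef}$ is then a compressed sensing problem in ambient dimension $D = \mathcal{O}(n^k)$ with sparsity $M$.

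First I would fix the per-experiment primitive: prepare the GHZ state on all $n$ qubits, apply a uniformly random layer of single-qubit Cliffords $U$, evolve under $H$ for a short time $t$, apply $U^\dagger$, and estimate the expectation of a tensor-product Pauli $Q$. A Duhamel expansion yields $\langle Q\rangle = \langle Q\rangle_0 + it\,\langle [H,Q]\rangle_0 + \mathcal{O}(t^2)$, where $\langle\cdot\rangle_0$ denotes the expectation in the Clifford-rotated GHZ state. Subtracting a calibration round with $t=0$ cancels the constant term, leaving a signal of the form $t\sum_a A_{i,a}\coef_a$ in which $A_{i,a}$ is a bounded function of the Clifford layer, of $Q$, and of $P_a$; moreover $\mathbb{E}[A_{i,a}A_{i,b}^*] \propto \delta_{ab}$, so the rows of $A$ are the rows of a bounded orthonormal system in the sense required for compressed sensing.

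The main obstacle is to verify that this bounded orthonormal system has good isotropy constants, so that the restricted isometry property (RIP) of order $M$ holds for $A/\sqrt{N}$ with only $N=\widetilde{\mathcal{O}}(M)$ random rows. I would follow the Rauhut--Rudelson framework for BOS-based compressed sensing: the crucial input is a uniform bound $\max_{i,a} |A_{i,a}| = \mathcal{O}(1)$ (dependent on $k$ but not on $n$), which in turn rests on the fact that the GHZ reference state together with the single-qubit Clifford twirl spreads the signal across every $k$-body Pauli with comparable expected weight. Controlling the joint second moments of the Clifford-random matrix elements for arbitrary pairs of $k$-body Paulis, and handling the sign issues introduced by $Q$ anticommuting with $P_a$, is the delicate combinatorial step.

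Given RIP, the basis-pursuit denoising program $\min\|\widehat{\boldsymbol{\coef}}\|_1$ subject to $\|y - tA\widehat{\boldsymbol{\coef}}\|_2 \le \|\boldsymbol{\eta}\|_2$ satisfies the standard guarantee $\|\widehat{\boldsymbol{\coef}} - \boldsymbol{\coef}\|_2 \lesssim \|\boldsymbol{\eta}\|_2/(t\sqrt{N})$, and the effective $M$-sparsity upgrades this to $\|\widehat{\boldsymbol{\coef}} - \boldsymbol{\coef}\|_p \lesssim M^{1/p - 1/2}/t$ for $1 \le p \le 2$ via H\"older interpolation. Choosing $t = \Theta(\epsilon/M^{1/p-1/2})$, while keeping $t$ small enough that the Taylor remainder stays below the shot-noise floor, drives the $\ell^p$ error below $\epsilon$ and gives total evolution time $T = Nt = \widetilde{\mathcal{O}}(M^{1/p+1/2}/\epsilon)$. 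Finally, a constant amount of SPAM noise only rescales the rows of $A$ by a known constant and adds a bounded additive term to $\boldsymbol{\eta}$; both effects are absorbed into the BP-denoising robustness bound without affecting the scaling, establishing the claim.
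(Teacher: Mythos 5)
Your plan departs fundamentally from the paper's, and the departure is fatal to the claimed $1/\epsilon$ (Heisenberg) scaling. You linearize the Duhamel expansion at small $t$, so your per-experiment signal is $\mathcal{O}(t)$ while each single-shot outcome still has $\mathcal{O}(1)$ variance. That is a classical, shot-noise-limited estimation problem. Concretely: to make the basis-pursuit bound $\|\widehat{\boldsymbol{\coef}}-\boldsymbol{\coef}\|_p\lesssim M^{1/p-1/2}/t$ reach $\epsilon$ you must take $t=\Theta(M^{1/p-1/2}/\epsilon)$, which grows as $\epsilon\to 0$; but your own Taylor remainder is $\mathcal{O}(\|H\|^2 t^2)=\mathcal{O}(M^2t^2)$, and with $t\gtrsim 1/\epsilon$ this term swamps the $\mathcal{O}(t)$ signal. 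If instead you shrink $t$ until $M^2 t^2$ is below the shot noise, you need $\Theta(1/(\epsilon' t)^2)$ repetitions per row to resolve the $\mathcal{O}(t)$ signal to relative precision $\epsilon'$, and a short computation gives $T=\Omega(N\,M^2/\epsilon'^3)$, i.e.\ $1/\epsilon^3$ scaling (up to $M$ factors), not $1/\epsilon$. You cannot have $t$ both large enough for Heisenberg and small enough for the linearization; the sentence ``keeping $t$ small enough that the Taylor remainder stays below the shot-noise floor'' hides a contradiction, not a constraint satisfiable by tuning $t$.

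The paper's protocol is structurally different precisely to avoid this. Rather than a single Clifford conjugation $U^\dagger e^{-iHt}U$ followed by linearization, it interleaves random Pauli operators from an abelian stabilizer group $\mathcal{K}_\beta$ \emph{throughout} the evolution (Hamiltonian reshaping, Theorem~\ref{thm:hamiltonian_reshaping}), producing an effective Hamiltonian $H_{\mathrm{eff}}$ that is diagonal in a known product basis. The initial GHZ-type state $\tfrac{1}{\sqrt{2}}(\ket{0}_\beta+\ket{b}_\beta)$ is then a superposition of two eigenstates, so at \emph{long} $t$ the signal is a clean sinusoid $\cos\!\bigl((\lambda_b-\lambda_0)t\bigr)$ and robust phase estimation (Theorem~\ref{thm:robust_frequency_estimation}) extracts $\lambda_b-\lambda_0$ to precision $\eta$ with total time $\widetilde{\mathcal{O}}(1/\eta)$ -- Heisenberg-limited and SPAM-robust because a constant additive error on $\cos(\theta t)$ does not corrupt the bisection. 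Compressed sensing then enters in the \emph{eigenvalue} domain through the exact linear relation $\boldsymbol{\lambda}=\mathbf{H}\,\boldsymbol{\coef}_\beta$ (the weight-$k$ Hadamard matrix), not through a first-order Taylor expansion in $t$. Your plan is missing exactly this reshaping step; without it the evolved state at large $t$ is not a two-level coherence and there is no sinusoid to phase-estimate. Relatedly, your SPAM argument (``a constant amount of SPAM noise only rescales the rows of $A$'') does not hold once shot noise is accounted for: the residual error after the $t=0$ calibration subtraction is $\mathcal{O}(1/\sqrt{R})$ plus an $\mathcal{O}(\epsilon_{\mathrm{SPAM}}t)$ term, and even granting the latter, the former destroys Heisenberg scaling. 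The paper achieves SPAM robustness through the thresholded phase-estimation decision rule, a different mechanism that survives constant additive perturbations of the signal.
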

In the above, by \emph{total evolution time}, we mean the total time needed to evolve with the unknown Hamiltonian in all experiments.
Our method is also robust to modeling errors (i.e., the Hamiltonian may not be exactly $k$-local or have more than $M$ terms), as discussed in Section~\ref{sec:robustness_to_modeling_errs}.
The operational interpretations of the $\ell^p$ error metric for $p=1$ and $2$ are discussed in Section~\ref{sec:operational_interpretation} and correspond to notions of worst- and average-case error, respectively.
It is notable that our algorithm nearly achieves Heisenberg-limited scaling, ${\mathcal{O}}(1/\epsilon)$, in the total evolution time. 
Also importantly, the classical computation required by our algorithm is polynomial. 
In fact, our estimator reduces to a convex optimization of the form
\begin{equation}
\label{eq:exabsopt}\operatornamewithlimits{minimize}_{\boldsymbol{x}}\|\boldsymbol{x}\|_1 \quad \text{ subject to }\ \|\boldsymbol{A} \boldsymbol{x}-\boldsymbol{y}\|_2 \leq \alpha.
\end{equation}
In Theorem~\ref{thm:compressed_sensing_approx_solution}, we give an explicit bound on the computational resources required to obtain an (approximately) optimal solution to Eq.~(\ref{eq:exabsopt}). 
As our analysis of this optimization is self-contained (it does not rely on reducing Eq.~\ref{eq:exabsopt} to a semidefinite program) and appears to be new, it may be of interest to a wider community. 

We also prove a lower bound for the total evolution time for the case of $\ell^1$ error. 
\begin{thm*}[Informal version of Theorem~\ref{thm:lower_bound}]
    Suppose a learning algorithm estimates $\widehat{\boldsymbol{\coef}}$ for the unknown Hamiltonian $H=\sum_{a=1}^M \coef_aP_a$ with $\|\widehat{\boldsymbol{\coef}}-\boldsymbol{\coef}\|_1 \le \epsilon$ in expectation value over experimental outcomes. 
    Then 
    \begin{equation}
        T = {\Omega}\biggl(\frac{M}{\epsilon\log(1/\gamma)}\biggr).
    \end{equation}
    Here $\gamma$ is the amount of measurement error (see \eqref{eq:global_depolarizing}).
\end{thm*}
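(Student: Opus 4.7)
The strategy is an information-theoretic lower bound: reduce to hypothesis testing over a hypercube of Hamiltonians and bound the information extractable per unit of total evolution time under noise.

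\emph{Packing and reduction.} I would first construct $M$ commuting weight-$k$ Paulis $P_1, \ldots, P_M$ on disjoint qubit supports (single-qubit $Z$'s when $M \le n$; fold into weight-$k$ blocks otherwise) and consider the family $H_\sigma = c\epsilon \sum_a \sigma_a P_a$, $\sigma \in \{\pm 1\}^M$, for a small constant $c$. Distinct members are at $\ell^1$-Pauli-coefficient distance $\ge 2c\epsilon$, so by Markov's inequality any estimator with $\mathbb{E}\|\widehat{\boldsymbol{\coef}}-\boldsymbol{\coef}\|_1 \le \epsilon$ identifies $\sigma$ (by rounding to the nearest packing point) with constant probability, reducing the learning task to identification over this $2^M$-element packing.

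\emph{Per-experiment KL bound.} For the $i$-th experiment (evolution time $t_i$), the pre-measurement states under two Hamming-1 neighbors $H_\sigma, H_{\sigma^{(a)}}$ satisfy $\|\rho_\sigma^{(i)} - \rho_{\sigma^{(a)}}^{(i)}\|_1 \le 2 t_i \|H_\sigma - H_{\sigma^{(a)}}\|_{\mathrm{op}} = O(t_i \epsilon)$ via the Duhamel identity $\|e^{-iHt}-e^{-iH't}\| \le t\|H-H'\|_{\mathrm{op}}$. The global depolarizing noise of Eq.~\eqref{eq:global_depolarizing} lower-bounds every outcome probability by $\Omega(\gamma)$, capping log-likelihood ratios at $\le \log(1/\gamma)$ pointwise; combining this cap with the TV bound via $D(P\|Q) \le 2\|P-Q\|_{\mathrm{TV}}\cdot\sup_x|\log(P(x)/Q(x))|$ yields $D(P_\sigma^{(i)} \| P_{\sigma^{(a)}}^{(i)}) = O(t_i \epsilon \log(1/\gamma))$, linear in $t_i$. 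Summing across experiments (additivity for non-adaptive protocols; chain rule of mutual information for adaptive ones) gives a per-coordinate total KL of order $O(T \epsilon \log(1/\gamma))$.

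\emph{Aggregation and main obstacle.} Feeding this per-coordinate KL into an Assouad-style decomposition --- which converts the $\ell^1$-learning guarantee into $M$ individual coordinate-testing problems, each requiring distinguishability $\Omega(1)$ between $\sigma$ and $\sigma^{(a)}$ --- yields $T \epsilon \log(1/\gamma) \ge \Omega(M)$, hence $T = \Omega(M/(\epsilon \log(1/\gamma)))$. The aggregation is the key obstacle. A naive Fano reduction over the full $2^M$-hypothesis family gives only $\Omega(1/(\epsilon\log(1/\gamma)))$, because the per-pair KL summed over experiments scales with the Hamming distance between hypotheses, and the average pairwise Hamming distance $M/2$ cancels the $\log(2^M) = M$ appearing on the Fano side. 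The $M$ factor therefore must be drawn out coordinate-by-coordinate through an Assouad-type bound, and one has to do so while simultaneously preserving the $\log(1/\gamma)$ noise dependence: a standard Pinsker-based $\mathrm{TV}^2/\gamma$ conversion would both yield the wrong noise dependence and (via Bretagnolle-Huber) only produce a $\log M$-type scaling. Handling the adaptive case cleanly via the conditional chain rule, and verifying the packing members are valid $k$-body Hamiltonians within the paper's normalization, are minor additional bookkeeping tasks.
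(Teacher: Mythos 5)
Your high-level architecture --- a $\{\pm 1\}^M$ hypercube of Hamiltonians plus an Assouad-style coordinate-by-coordinate decomposition, with adaptivity handled by a chain-rule/tree induction --- is the same one the paper uses, and you correctly diagnose that a whole-family Fano argument cannot extract the $M$ factor. However, two steps in your plan do not go through as stated.

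First, the aggregation step, which you yourself flag as ``the key obstacle,'' is not resolved by your argument. You fix the packing radius at $c\epsilon$ and then assert ``$T\epsilon\log(1/\gamma)\ge\Omega(M)$'' from the requirement that each of the $M$ coordinate tests be $\Omega(1)$-distinguishable. But the $M$ coordinate tests share the same observations, so ``each needs $\Omega(1)$ of KL'' does not sum to $\Omega(M)$ total KL; with the packing scale tied to $\epsilon$, Assouad (or any two-point argument on Hamming-1 neighbors) only forces the affinity $\|P_\tau\wedge P_{\tau'}\|$ down to $O(1/M)$, which via the affinity bound $\|P_\tau\wedge P_{\tau'}\|\ge\gamma^{\|H-H'\|T}=\gamma^{2c\epsilon T}$ yields $T\gtrsim\log M/(\epsilon\log(1/\gamma))$, not $M/(\epsilon\log(1/\gamma))$. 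The paper gets the full $M$ by \emph{not} coupling the packing scale to $\epsilon$: it sets the hypercube coefficients at $\pm h$ with $h$ a free parameter, so Assouad gives $\sup_\tau\mathbb{E}\|\hat{\boldsymbol{\coef}}-\boldsymbol{\coef}\|_1\ge Mh\,\gamma^{2hT}$, and then optimizes $h=1/(2T\log(1/\gamma))$ so that $\gamma^{2hT}=e^{-1}$, yielding a loss lower bound of $M/(eT\log(1/\gamma))$. Equating this to $\epsilon_1$ gives the stated $T\ge M/(e\epsilon_1\log(1/\gamma))$. Your proposal omits this optimization, which is where the linear-in-$M$ dependence actually comes from.

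Second, the per-experiment KL bound $D(P_\sigma^{(i)}\|P_{\sigma^{(a)}}^{(i)})=O(t_i\epsilon\log(1/\gamma))$ rests on the claim that global depolarizing noise ``lower-bounds every outcome probability by $\Omega(\gamma)$,'' hence caps log-likelihood ratios at $\log(1/\gamma)$. This is false for general POVMs: the noisy outcome probability is $(1-\gamma)p_i+\gamma\operatorname{tr}(M_i)/2^{n'}$, and $\operatorname{tr}(M_i)/2^{n'}$ can be exponentially small (e.g., a computational-basis measurement on all $n'$ qubits has $\operatorname{tr}(M_i)/2^{n'}=2^{-n'}$), so the log-likelihood ratio can be of order $n'+\log(1/\gamma)$, and one can build examples where $D(\tilde q\|\tilde p)\approx\gamma n\gg\mathrm{TV}(\tilde p,\tilde q)\log(1/\gamma)$. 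The paper avoids KL entirely: it uses the exact multiplicative identity $\mathrm{TV}(\tilde p,\tilde q)=(1-\gamma)\mathrm{TV}(p,q)$ at each node, inducts on the tree of adaptive experiments to get $\mathrm{TV}(p^{(\mathcal{T})},p'^{(\mathcal{T})})\le 1-\gamma^{\|H-H'\|T}$, and feeds the resulting affinity bound $\|P_\tau\wedge P_{\tau'}\|\ge\gamma^{\|H-H'\|T}$ directly into Assouad's lemma. You should adopt this route; it also makes the ``reduce to exact identification by rounding'' step unnecessary, since Assouad works directly with the expected $\ell^1$-loss.
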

Interestingly, our lower bound holds for learning $n$-qubit Hamiltonians with the $M$ terms \textit{fixed} and having \textit{known} support. 
This naturally implies a lower bound for learning $n$-qubit $M$-term Hamiltonians without prior knowledge of which $M$ terms are in the Hamiltonian. 
This suggests that parameter estimation, rather than support identification, may be the primary obstacle to improving the time complexity for Hamiltonian learning.

\paragraph{Techniques.}
The lack of locality presents a considerable challenge to the Hamiltonian learning problem, as it prevents us from decoupling the learning problem into small problems that can be efficiently solved.
We are therefore compelled to deal with the entire quantum system as a whole, and new techniques are needed to circumvent the exponential cost that usually comes from such problems.
Compressed sensing utilizes sparsity to greatly reduce the complexity of reconstructing a high-dimensional vector, but prior to this work, it was far from clear how it can be applied to Hamiltonian learning. 
Although the Hamiltonian coefficients can be collected into a sparse vector, a generic observable does not depend on these coefficients linearly. 
The non-commutativity of the terms presents additional difficulties.

One key observation that led to this work is that we can apply compressed sensing to learn a completely commuting Hamiltonian, i.e., a Hamiltonian whose Pauli terms commute \emph{on each qubit}. 
For such a Hamiltonian, the eigenvalue differences depend linearly on the coefficients, and we can estimate these differences efficiently as the eigenstates are all product states.

We then extend this method to general $k$-body Hamiltonians with $M$ terms using Hamiltonian reshaping. 
Hamiltonian reshaping was used to decouple the dynamics and choose local bases in \cite{HuangTongFangSu2023learning}, but here we use it to obtain an effective Hamiltonian that is diagonal in a global basis set of our choosing. 
This effective Hamiltonian can then be learned using compressed sensing.
However, this procedure will only allow us to learn the terms that are diagonal in the given basis, and multiple choices of bases are needed to cover all terms in the Hamiltonian. 
We then choose the basis randomly, and this allows us to cover all $k$-body terms with mild overhead. 
This idea comes from randomized measurement \cite{evans2019scalablebayesianhamiltonianlearning,huang2020predicting,Elben_2022}, but to the best of our knowledge this is the first time it has been applied to a dynamical setting.

\paragraph{Prior work.}
Hamiltonian learning has been the focus of many previous works \cite{Seif2021compressed,evans2019scalablebayesianhamiltonianlearning,li2020hamiltonian,che2021learning,HaahKothariTang2022optimal,yu2023robust,hangleiter2024robustlylearninghamiltoniandynamics,StilckFrança2024,ZubidaYitzhakiEtAl2021optimal,BaireyAradEtAl2019learning, bairey2020learning,GranadeFerrieWiebeCory2012robust,gu2022practical,wilde2022learnH,KrastanovZhouEtAl2019stochastic,Caro_2024,MobusBluhmCaroEtAl2023dissipation,HolzapfelEtAl2015scalable, HuangTongFangSu2023learning,dutkiewicz2023advantage,MiraniHayden2024learning,NiLiYing2024quantum,LiTongNiGefenYing2023heisenberg,BoixoSomma2008parameter,bakshi2024structure,WangLi2024simulation,odake2023universal}. 
Most early works focused on achieving the Heisenberg-limited scaling, i.e., estimating parameters of the Hamiltonian to precision $\epsilon$ with $\mathcal{O}(1/\epsilon)$ cost \cite{higgins2007entanglement}, for small quantum systems, or scalability for large quantum systems but without achieving the Heisenberg-limited scaling. 
Recently, \cite{HuangTongFangSu2023learning} proposed a method to learn an $n$-qubit many-body Hamiltonian with total evolution time $\mathcal{O}(\log(n)/\epsilon)$, where $\epsilon$ is the $\ell^\infty$-error on the coefficients, when the Hamiltonian is low-intersection, i.e., the number of Pauli terms acting on a qubit is a constant that is independent of the system size, and the terms in the Hamiltonian are known. 
This condition can be seen as a slight relaxation of geometric locality. 
The main technique used in \cite{HuangTongFangSu2023learning} is Hamiltonian reshaping, which inserts random Pauli operators during the experiment to obtain an effective Hamiltonian that is easy to learn and preserves some of the coefficients. 
Later works obtained similar results for other control models \cite{dutkiewicz2023advantage} and in bosonic and fermionic scenarios \cite{LiTongNiGefenYing2023heisenberg,NiLiYing2024quantum,MiraniHayden2024learning}. 
These works all assume that the Hamiltonian terms are known and we only need to learn the coefficients.

The first work to consider achieving the Heisenberg-limited scaling for Hamiltonians with unknown terms is \cite{bakshi2024structure}, the authors of which use a bootstrapping approach (as was also done in \cite{dutkiewicz2023advantage}) to gradually refine the estimation of the Hamiltonian. 
A rough estimate of the Hamiltonian is used in the time evolution to partially cancel out the actual Hamiltonian, so that a finer estimate of the residue can be obtained. 
This approach necessitates implementing multi-qubit operations during the time evolution which are adapted based on previous measurement results.

GHZ states and compressed sensing have been used previously to learn dephasing noise with long-range correlation and sparsity \cite{Seif2021compressed}, but the dissipative nature of the noise makes it impossible to achieve the Heisenberg limit. Compared to the setup in \cite{Seif2021compressed}, we also need to deal with the non-commutativity of the Hamiltonian terms.

Our learning protocol takes several important steps beyond prior works. 
First, it only applies single-qubit operations during time evolution, which is an enormous advantage for near-term implementation of algorithms that may have practical importance like Hamiltonian learning. 
Second, we do not require any locality assumptions beyond there being $M$ terms of bounded strength, each of which is $k$-body. 
Lastly, our algorithm is completely non-adaptive. Compared to \cite{bakshi2024structure} in particular, the method in this work uses less total evolution time and number of experiments, which we will illustrate using several physically relevant examples below.

\paragraph{Applications.}
\textit{Sparse Sachdev-Ye model:} We consider a generalization of the Sachdev-Ye (SY) Hamiltonian for qubit system \cite{sachdev1993gapless}. In particular, we remove interaction between qubits independently to obtain a sparse SY model. The Hamiltonian takes the following form:
\begin{equation}
\label{eq:qubit_SY_Hamiltonian}
    H_{SY} = \sum_{1\leq i<j\leq n} \mu_{ij}(\sigma^x_i\sigma^x_j+\sigma^y_i\sigma^y_j+\sigma^z_i\sigma^z_j),
\end{equation}
where the sum over $i, j$ extends over $n$ sites, 
$\mu_{ij} = \chi_{ij}J_{ij}$.
The exchange constants $J_{ij}$ are mutually uncorrelated and selected i.i.d. standard Gaussian random variables. 
Random variables $\chi_{ij}$ are also i.i.d., with $\chi_{ij} = 1$ with probability $\mathfrak{p}$ and $\chi_{ij}=0$ with probability $1-\mathfrak{p}$. The original SY model is recovered when $\mathfrak{p}=1$. We denote the coefficient vector consisting of all $\mu_{ij}$ as $\boldsymbol{\mu}$. We are primarily interested in the case where $\mathfrak{p}=\Omega(1/n^\alpha)$ for $0\leq \alpha<1$.

We will first analyze how prior results can be applied to learn this Hamiltonian.
The method in Ref.~\cite{HuangTongFangSu2023learning}, based on Hamiltonian reshaping to decouple the dynamics, will have a total evolution time of $\exp(\mathcal{O}(n)/\epsilon)$ because each qubit is acted on by $\mathcal{O}(n)$ terms instead of $\mathcal{O}(1)$. 
Ref.~\cite{HaahKothariTang2022optimal} and Ref.~\cite{StilckFrança2024}, based on the cluster expansion, will have unknown complexity due to the lack of Lieb-Robinson bound. 
Ref.~\cite{caro2022learning} will have a total evolution time $\widetilde{\mathcal{O}}(M^{4/p}n\|H_{SY}\|^3/\epsilon^4)$ for $\ell^p$ error to be below $\epsilon$, where $M$ is the number of terms in the Hamiltonian of sparse SY model as explicitly expressed in \eqref{eq:num_terms_sparse_SY}, and $\|H\|=\mathcal{O}(M)$.
Ref.~\cite{bakshi2024structure} will have a total evolution time $\widetilde{\mathcal{O}}(M^{1+1/p}/\epsilon)$ for $\ell^p$ error to be below $\epsilon$, where $M$ is the number of terms in the Hamiltonian of sparse SY model as explicitly expressed in \eqref{eq:num_terms_sparse_SY}, and $\|H\|=\mathcal{O}(M)$.

We then analyze the cost of this task using the protocol in this work.
The probability that the absolute value of any fixed coefficient exceeding $\xi$ is $e^{-\Omega(\xi^2)}$, and consequently by a union bound, the probability of there existing one term in $H_{SY}$ such that the absolute value of its coefficient is larger than $\xi$ is $1-n^2e^{-\Omega(\xi^2)}$.
Then we can choose $\xi = \mathcal{O}(\sqrt{\log(n/\delta_{SY})})$ to ensure that all coefficients are bounded by $\xi$ with probability at least $1-\delta_{SY}/2$. Therefore we can consider the rescaled Hamiltonian $\xi^{-1}H_{SY}$ whose coefficients are all bounded by $1$ with $1-\delta_{SY}/2$ probability.
By the Chernoff-Hoeffding theorem, we can also see that the number of non-zero terms in the Hamiltonian is $n^2 \mathfrak{p} + \mathcal{O}(n\sqrt{\mathfrak{p}\log(1/\delta_{SY})})$ with probability at least $1-\delta_{SY}/2$. Therefore with probability at least $1-\delta_{SY}$, we only need to deal with a Hamiltonian with
\begin{equation}
\label{eq:num_terms_sparse_SY}
    M = n^2 \mathfrak{p} + \mathcal{O}(n\sqrt{\mathfrak{p}\log(1/\delta_{SY})})
\end{equation}
terms, each of which is bounded by $1$. We need to ensure the $\ell^p$-error in the end is $\epsilon/\xi$ to account for the rescaling. Based on the above, we use our main result Theorem~\ref{thm:ham_learning_upper_bound} to arrive at the following result:
\begin{cor}[The sparse Sachdev-Ye model]
    To learn the Hamiltonian  defined in \eqref{eq:qubit_SY_Hamiltonian} with an estimate $\hat{\boldsymbol{\mu}}$ of all coefficients that have $\ell^p$ error ($1\leq p\leq 2$) at most $\epsilon$ using the proposed algorithm with probability at least $(1-\delta)(1-\delta_{SY})$, the total evolution time complexity used is
    \[
        T = \widetilde{\mathcal{O}}\left(\frac{(n^2 \mathfrak{p})^{1/p+1/2}}{\epsilon}\right).
    \]
\end{cor}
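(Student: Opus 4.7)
The plan is to combine the two high-probability bounds that have already been established in the text (on the maximum coefficient magnitude and on the number of nonzero terms) with the informal main Theorem~\ref{thm:ham_learning_upper_bound}, applied to a rescaled Hamiltonian whose coefficients fit into the interval $[-1,1]$ required by the main theorem.

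First, I would condition on the good event, which holds with probability at least $1-\delta_{SY}$, on which simultaneously (i) every nonzero coefficient satisfies $|\mu_{ij}|\le \xi$ for $\xi=\mathcal{O}(\sqrt{\log(n/\delta_{SY})})$, as follows from a standard Gaussian tail bound plus a union bound over the $\mathcal{O}(n^2)$ pairs $(i,j)$; and (ii) the number of nonzero terms $M$ is $n^2\mathfrak{p}+\mathcal{O}(n\sqrt{\mathfrak{p}\log(1/\delta_{SY})})$, by Chernoff--Hoeffding applied to the i.i.d.\ Bernoulli random variables $\chi_{ij}$. Under the assumption $\mathfrak{p}=\Omega(1/n^\alpha)$ with $0\le\alpha<1$, the dominant term in $M$ is $n^2\mathfrak{p}$, so $M=\Theta(n^2\mathfrak{p})$.

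Second, I would consider the rescaled Hamiltonian $\tilde H_{SY}=\xi^{-1}H_{SY}$, whose coefficients now lie in $[-1,1]$ as required. Each Pauli appearing in $H_{SY}$ is $2$-local, so the $k=2$ hypothesis of the main theorem is satisfied. Applying Theorem~\ref{thm:ham_learning_upper_bound} with target $\ell^p$ accuracy $\epsilon/\xi$ (to undo the rescaling) and failure probability $\delta$ gives an estimator $\widehat{\boldsymbol{\mu}}$ with $\|\widehat{\boldsymbol{\mu}}-\boldsymbol{\mu}\|_p\le\epsilon$ using total evolution time
\[
T = \widetilde{\mathcal{O}}\!\left(\frac{M^{1/p+1/2}}{\epsilon/\xi}\right)
  = \widetilde{\mathcal{O}}\!\left(\frac{\xi\,(n^2\mathfrak{p})^{1/p+1/2}}{\epsilon}\right).
\]

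Third, since $\xi=\mathcal{O}(\sqrt{\log(n/\delta_{SY})})$ is polylogarithmic in $n$ and $1/\delta_{SY}$, it is absorbed into the $\widetilde{\mathcal{O}}$, yielding the claimed bound. Combining the two independent events via a union bound gives a success probability of at least $(1-\delta)(1-\delta_{SY})$, as stated. None of the steps presents a serious obstacle: the heavy lifting is done by Theorem~\ref{thm:ham_learning_upper_bound}, and the only mildly delicate point is bookkeeping the rescaling factor $\xi$ and verifying that it is indeed polylogarithmic, so that the $\widetilde{\mathcal{O}}$ notation conceals it.
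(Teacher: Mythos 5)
Your proof follows the paper's reasoning essentially step for step: Gaussian tail plus union bound to get $\xi=\mathcal{O}(\sqrt{\log(n/\delta_{SY})})$, Chernoff--Hoeffding for $M=\Theta(n^2\mathfrak{p})$, rescaling by $\xi^{-1}$ to fit the $[-1,1]$ hypothesis of Theorem~\ref{thm:ham_learning_upper_bound}, targeting accuracy $\epsilon/\xi$, and absorbing the polylogarithmic $\xi$ into the $\widetilde{\mathcal{O}}$. One minor slip: in the final step you invoke a ``union bound'' but then (correctly) write the product $(1-\delta)(1-\delta_{SY})$ --- this product form comes from conditioning on the SY good event and using the independence of the learning algorithm's randomness from the instance randomness, not from a union bound (which would give the weaker $1-\delta-\delta_{SY}$).
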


Note that in the above we hide the $\mathrm{polylog}(1/\delta_{SY})$ factor.
Compared to previous works \cite{HaahKothariTang2022optimal,HuangTongFangSu2023learning, StilckFrança2024, caro2022learning, bakshi2024structure}, our method has the advantage of guaranteed $\mathrm{poly}(n)$ scaling and achieves the Heisenberg limit. Moreover, it can utilize sparsity to gain further advantage since the cost decreases with $\mathfrak{p}$.

\vspace{0.5em}
\noindent \textit{Hamiltonians with power law interactions:} 
For the power law interaction, the Hamiltonian (on a $D$-dimensional lattice) is
\begin{equation}
    \label{eq:power_law_ham}
    H = \sum_{i\leq j} H_{ij},
\end{equation}
where $H_{ij}$ is supported on qubits $i$ and $j$, and $\|H_{ij}\|\leq 1/(1+d(i,j)^\alpha)$, where $d(i,j)$ is the Euclidean distance between sites $i$ and $j$. We will consider two separate cases: $\alpha\leq D$ or $\alpha> D$. 

For $\alpha\leq D$, which includes Hamiltonians of some interesting physical systems such as the Coulomb potential system ($\alpha=1$, $D=3$), Ref.~\cite{HuangTongFangSu2023learning} will have a total evolution time of $\exp(\mathcal{O}(n)/\epsilon)$ because each qubit is acted on by $\mathcal{O}(n)$ terms instead of $\mathcal{O}(1)$. Ref.~\cite{HaahKothariTang2022optimal} and Ref.~\cite{StilckFrança2024} will have an unknown complexity due to the lack of Lieb-Robinson bound \cite{lieb1972,Tran_2021_Lieb, Hastings_2006, Else_2020, Tran_2019, Chen_2019, Kuwahara_2020, Eldredge_2017, Tran_2020, Tran_2021_Optimal, Nachtergaele_2006_Propagation, Nachtergaele_2006_Lieb, Gong_2014, Storch_2015, Nachtergaele_2008, Pr_mont_Schwarz_2010, PhysRevA.81.062107}. 
Ref.~\cite{caro2022learning} will have $\widetilde{\mathcal{O}}(n^{(1+8/p)}\|H\|^3/\epsilon^4)$ for $\ell^p$ error to be below $\epsilon$, where $\|H\|=\mathcal{O}(n^{2-\alpha/D})$.
Ref.~\cite{bakshi2024structure} will have a total evolution time $\widetilde{\mathcal{O}}(n^{(2/p+2)}/\epsilon$ for $\ell^p$ error to be below $\epsilon$.

For $\alpha>D$, where the decaying effect is more significant and the interactions tend to be more local in space compared to the $\alpha\leq D$ case, Ref.~\cite{HuangTongFangSu2023learning} will have a total evolution time of $\exp(\mathcal{O}(n)/\epsilon)$ because each qubit is acted on by $\mathcal{O}(n)$ terms instead of $\mathcal{O}(1)$, the same reason as in the $\alpha\leq D$ case. 
Ref.~\cite{HaahKothariTang2022optimal} and Ref.~\cite{StilckFrança2024} will have a total evolution time of $\widetilde{\mathcal{O}}(n^{4/p}/\epsilon^2)$ for $\ell^p$ error to be below $\epsilon$ when there is a Lieb-Robinson bound \cite{Tran_2021_Lieb} (each coefficient needs to be estimated to precision $\epsilon n^{-2/p}$), and an unknown complexity when there is not. Ref.~\cite{caro2022learning} will have $\widetilde{\mathcal{O}}(n^{(1+8/p)}\|H\|^3/\epsilon^4)$ for $\ell^p$ error to be below $\epsilon$, where $\|H\|=\mathcal{O}(n)$.
Ref.~\cite{bakshi2024structure} will have a total evolution time $\widetilde{\mathcal{O}}(\min(n^{(2/p+2)}/\epsilon,n^{2(1+\kappa)/p}/\epsilon^{(1+\kappa)}))$ for $\ell^p$ error to be below $\epsilon$, with $\kappa=2D/(\alpha+D)$.

However, with the protocol in this work, we can efficiently learn any Hamiltonian of $\alpha$-power law interactions with the Heisenberg scaling regardless of $\alpha$:
\begin{cor}[The power law interaction Hamiltonians]
    To learn a $\alpha$-power law Hamiltonian of $n$-qubit on a $D$-dimensional lattice as defined in \eqref{eq:power_law_ham} with an estimate $\hat{\boldsymbol{\coef}}$ of all coefficients that have $\ell^p$ error ($1\leq p\leq 2$) at most $\epsilon$ using the proposed algorithm with probability at least $1-\delta$, the total evolution time complexity used is
        \[
            T = \widetilde{\mathcal{O}}\left(\frac{n^{2/p+1}}{\epsilon}\right).
        \]
\end{cor}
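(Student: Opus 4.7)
The plan is to derive this corollary as a direct application of Theorem~\ref{thm:ham_learning_upper_bound}. All that is needed is (i) a normalization check on the individual Pauli coefficients of $H$ and (ii) a count of the total number of Pauli terms $M$. The compressed sensing guarantee of the main theorem is uniform in the internal structure of the Hamiltonian once these two quantities are controlled, so the power-law profile of $H_{ij}$ enters only through these bookkeeping estimates.

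For (i), expand each two-qubit block in the two-qubit Pauli basis as $H_{ij} = \sum_{P} c^{(ij)}_P P$ with $c^{(ij)}_P = \tfrac{1}{4}\mathrm{Tr}(P H_{ij})$. A short Cauchy--Schwarz argument, using $\|P\|_{\mathrm{HS}}=2$ and $\|H_{ij}\|_{\mathrm{HS}}\leq 2\|H_{ij}\|$, gives $|c^{(ij)}_P| \leq \|H_{ij}\| \leq 1/(1+d(i,j)^\alpha) \leq 1$. Hence every Pauli coefficient of $H$ already lies in $[-1,1]$ and no rescaling of the Hamiltonian is required. For (ii), each pair $(i,j)$ contributes at most $16$ two-qubit Pauli terms, and there are $\binom{n}{2}$ pairs, so the total number of Pauli terms of weight at most $k=2$ satisfies $M \leq \mathcal{O}(n^2)$, a bound that is \emph{independent of $\alpha$}.

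Plugging $M=\mathcal{O}(n^2)$ and $k=2$ into Theorem~\ref{thm:ham_learning_upper_bound} immediately yields
\[
    T = \widetilde{\mathcal{O}}\!\left(\frac{M^{1/p+1/2}}{\epsilon}\right) = \widetilde{\mathcal{O}}\!\left(\frac{n^{2/p+1}}{\epsilon}\right),
\]
with success probability at least $1-\delta$, which is exactly the claimed bound. There is no substantive obstacle here: the only subtlety is the coefficient-normalization step, and the rest is counting. What is worth emphasizing, however, is that the estimate is \emph{uniform in $\alpha$} precisely because we never exploit the decay of $\|H_{ij}\|$ with distance. For large $\alpha$ one might hope to sharpen the bound by treating the many small long-range terms as modelling error and invoking the robustness results of Section~\ref{sec:robustness_to_modeling_errs}, but this refinement is not needed for the corollary as stated.
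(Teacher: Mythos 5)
Your proposal matches the paper's proof, which simply observes that there are $\mathcal{O}(n^2)$ nonzero Pauli terms and applies Theorem~\ref{thm:ham_learning_upper_bound} with $M=\mathcal{O}(n^2)$, $k=2$. Your Cauchy--Schwarz normalization check is a useful supplement that the paper leaves implicit, but be aware of one subtlety in that step: the inference from $|c^{(ij)}_P|\le 1$ to ``every Pauli coefficient of $H$ lies in $[-1,1]$'' is valid only if each Pauli term appears in the expansion of at most one block $H_{ij}$. This is automatic for weight-$2$ Paulis, which are determined by their support $\{i,j\}$, but a weight-$1$ Pauli on qubit $i$ could in principle pick up contributions from $H_{ii}$ and from every $H_{ij}$ with $j\neq i$; when $\alpha\le D$ that sum can scale like $n^{1-\alpha/D}$, which would force a rescaling and change the stated complexity. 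The natural convention — and the one the paper implicitly adopts — is that for $i<j$ each $H_{ij}$ is a genuine two-body coupling, traceless on each subsystem, so no such accumulation occurs and your argument goes through exactly as written.
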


The total evolution time above directly follows from the fact that there are $\mathcal{O}(n^2)$ non-zero terms in the Hamiltonian.

We summarize the comparison of results in Table~\ref{tab:comparison}.
\begin{table}[ht!]
\label{tab:comparison}
  \centering 
  \begin{threeparttable}
  \begin{tabular}{cccc}
    \toprule
     \multirow{2}{*}{}& \multirow{2}{*}{Sparse SY model} &  Power law interactions& Power law interactions\\
     & & $\alpha\leq D$ & $\alpha > D$ \\
    \midrule
    \multirow{2}{*}{\cite{HaahKothariTang2022optimal}}  & \multirow{2}{*}{unknown\tnote{1}}  & \multirow{2}{*}{unknown\tnote{1}} & $\widetilde{\mathcal{O}}(n^{4/p}/\epsilon^2)$ (with L-R bound)\\ & & & unknown\tnote{1} (without L-R bound)\\
    \midrule
    
    \cite{HuangTongFangSu2023learning}  & $\exp(\mathcal{O}(n)/\epsilon)$  & $\exp(\mathcal{O}(n)/\epsilon)$ & $\exp(\mathcal{O}(n)/\epsilon)$ \\
    \midrule

    \multirow{2}{*}{\cite{StilckFrança2024}}  & \multirow{2}{*}{unknown\tnote{1}}  & \multirow{2}{*}{unknown\tnote{1}} & $\widetilde{\mathcal{O}}(n^{4/p}/\epsilon^2)$ (with L-R bound)\\ & & & unknown\tnote{1} (without L-R bound)\\
    \midrule
    
    \cite{caro2022learning} & $\widetilde{\mathcal{O}}\left(\frac{(n^2\mathfrak{p})^{3+4/p}n}{\epsilon^4}\right)$ & $\widetilde{\mathcal{O}}\left(\frac{n^{7+8/p-3\alpha/D}}{\epsilon^4}\right)$ & $\widetilde{\mathcal{O}}\left(\frac{n^{4+8/p}}{\epsilon^4}\right)$\\
    \midrule
    
    \cite{bakshi2024structure}\tnote{3}  & $\widetilde{\mathcal{O}}\left(\frac{(n^2 \mathfrak{p})^{1/p+1}}{\epsilon}\right)$  & $\widetilde{\mathcal{O}}\left(\frac{n^{2/p+2}}{\epsilon}\right)$ & $\widetilde{\mathcal{O}}\left(\min\left(\frac{n^{2/p+2}}{\epsilon},\frac{n^{2(1+\kappa)/p}}{\epsilon^{(1+\kappa)}}\right)\right)$\tnote{2}\\
    \midrule

    This work   & $\widetilde{\mathcal{O}}\left(\frac{(n^2 \mathfrak{p})^{1/p+1/2}}{\epsilon}\right)$  & $\widetilde{\mathcal{O}}\left(\frac{n^{2/p+1}}{\epsilon}\right)$ & $\widetilde{\mathcal{O}}\left(\frac{n^{2/p+1}}{\epsilon}\right)$\\
    \bottomrule
  \end{tabular}
  \begin{tablenotes}\footnotesize
    \item[1] Lack of Lieb-Robinson bound.
    \item[2] $1+\kappa=1+2D/(\alpha+D)>1$.
    \item[3] Using the method in \cite[Theorem~5.1]{bakshi2024structure}, the total numbers of experiments for these examples are $\widetilde{\mathcal{O}}(n^4\mathfrak{p}^2)$, $\widetilde{\mathcal{O}}(n^4)$,  and $\widetilde{\mathcal{O}}(\min\{n^4,n^{4\kappa/p}/\epsilon^{2\kappa}\})$ respectively, compared to $\widetilde{\mathcal{O}}(n^2\mathfrak{p})$, $\widetilde{\mathcal{O}}(n^2)$, $\widetilde{\mathcal{O}}(n^2)$ using the method in this work.
    \end{tablenotes}
  \caption{Total evolution times of different methods to achieve $\ell^p$ error $\epsilon$.}
  \end{threeparttable}
\end{table}

\section{Overview}
\label{sec:overview}

Below we will provide an overview of the results and techniques in this work.

\paragraph{Background.}
Our results depend on several important tools developed in previous works. We will briefly introduce these tools here.
Hamiltonian reshaping, proposed in \cite{HuangTongFangSu2023learning} is an approach to reshape the Hamiltonian during time evolution into a new Hamiltonian $H_{\mathrm{eff}}$ that is easy to learn and also contains useful information about the original Hamiltonian. 

In this work we will need to use the Hamiltonian reshaping technique to obtain an effective Hamiltonian that is diagonal with respect to a given single-qubit Pauli eigenbasis. 
More precisely, given a vector $\beta=(\beta_1,\beta_2,\cdots,\beta_n)$, where $\beta_i\in\{x,y,z\}$, we want the effective Hamiltonian to only contain Pauli terms that are either $I$ or $\sigma^{\beta_i}$ ($\sigma^{x,y,z}$ refers to the Pauli-X, Y, and Z matrices respectively) on the $i$th qubit. To achieve this, we apply, with an interval of $\tau$, Pauli operators randomly drawn from the set $\mathcal{K}_{\beta}$ which is a qubit-wise abelian subgroup of the Pauli group on $n$ qubits with $|\mathcal{K}_{\beta}|=2^n$. This results in an effective Hamiltonian
\begin{equation}
\label{eq:effective_ham_overview}
    H_{\mathrm{eff}} = \frac{1}{2^n}\sum_{Q\in \mathcal{K}_{\beta}} QHQ = \sum_{a:P_a\in \mathcal{K}_\beta} \coef_a P_a.
\end{equation}
Note that because $\mathcal{K}_\beta$ is abelian, $H_{\mathrm{eff}}$ consists of commuting Pauli terms, making it easier to learn the coefficients, as will be discussed in Section~\ref{sec:learning_a_commuting_ham}. 
For a more detailed discussion of the Hamiltonian reshaping technique see Section~\ref{sec:hamiltonian_reshaping}.

Compressed sensing provides a powerful tool to reconstruct a sparse high-dimensional vector from few measurements. 
This is relevant to the problem we are studying: the Hamiltonian coefficients form an $M$-sparse vector, but this vector is at the same time high-dimensional, as any of the $\mathcal{O}(n^k)$ $k$-local Pauli terms can be present.

Compressed sensing involves solving the $\ell^1$-minimization problem stated in \eqref{eq:exabsopt}. For the coefficient matrix $\mathbf{A}$, we will use what we call a \emph{weight-$k$ Hadamard matrix} (Definition~\ref{defn:weight-$k$ Hadamard matrix}), which is a $2^n\times D$ submatrix of the Walsh-Hadamard matrix of size $2^n\times 2^n$. Here $D=\sum_{l=0}^k\binom{n}{l}$ is the total number of bit-strings with Hamming weight at most $k$.
In Section~\ref{sec:completely_commuting_hamiltonians}, we will see that the transformation from coefficients to eigenvalues in the completely commuting Hamiltonian $H_{\mathrm{eff}}$ is described by this matrix. Therefore by solving the corresponding $\ell^1$ minimization problem we can reconstruct the sparse coefficient vector with bounded error from few eigenvalue estimates (more precisely, differences between eigenvalues). For a more detailed discussion see Section~\ref{sec:compressed_sensing}. In particular we prove that the weight-$k$ Hadamard matrix has the desired restricted isometry property (RIP).

\paragraph{The Hamiltonian learning protocol.}
We focus on learning a $k$-body Hamiltonian with $M$ terms, which takes the form
\begin{equation}
H = \sum_{P\in\mathbb{P}_n} \coef_P P,
\end{equation}
where $\mathbb{P}_n$ is the set of all Pauli matrices, $-1\leq \coef_P\leq 1$, and $\coef_P \neq 0$ only for $M$ Pauli terms $P_a$ with $\mathrm{wt}(P_a)\leq k$. Here we will briefly describe our protocol for learning this Hamiltonian.

First, the Hamiltonian reshaping procedure discussed in Section~\ref{sec:hamiltonian_reshaping} provides an effective Hamiltonian consisting of completely commuting Pauli terms.
Section~\ref{sec:completely_commuting_hamiltonians} contains a detailed discussion of the eigenvalues and eigenvectors of these Hamiltonians, which we will briefly summarize here.
As stated in \eqref{eq:effective_ham_overview}, the effective Hamiltonian takes the form
$H_{\mathrm{eff}} = \sum_{P\in \mathcal{K}_\beta} \coef_P P$,
where  
\begin{equation}
\label{eq:K_beta_overview}
    \mathcal{K}_{\beta} = \left\{\bigotimes_{j=1}^n Q^j:Q^j=I\text{ or }\sigma^{\beta_j}\right\},
\end{equation}
as given in \eqref{eq:random_unitaries_basis}. 
We note that within the set $\mathcal{K}_\beta$, each element can be represented by a bit-string $c\in\{0,1\}^n$. This is because, for the $i$th qubit, the Pauli operator has a component on it that is either the identity or $\sigma^{\beta_i}$, and we choose $c_i=0$ for the former and $c_i=1$ for the latter.
Based on the above, we can rewrite $H_{\mathrm{eff}}$ as $H_{\mathrm{eff}} = \sum_{c\in\{0,1\}^n} \coef_c P_c$, where $P_c = \bigotimes_i (\sigma^{\beta_i})^{c_i}, c_i\in\{0,1\}$ with the coefficient $\coef_c=\coef_P$ if $P_c=P$, and $\coef_c=0$ otherwise. 
We then collect all coefficients $\coef_c$ into a $2^n$-dimensional vector $\boldsymbol{\coef}_\beta$. Note that this vector contains at most $M$ non-zero entries.

Each eigenvector can similarly be uniquely associated with a bit-string $b$, and a Pauli term $P_c$ contributes to the corresponding eigenvalue by an amount of $(-1)^{c\cdot b}$.
From the above observation, we can see that the relation between the eigenvalues and coefficients of $H_{\mathrm{eff}}$ can be written down in a more compact form:
\begin{equation}
    \label{eq:hadamard_transform_eigenvalue_coef_overview}
    \boldsymbol{\eigen} = \mathbf{H}\boldsymbol{\coef}_\beta,
\end{equation}
where $\mathbf{H}$ is the $2^n\times 2^n$ Hadamard matrix defined in Definition~\ref{defn:weight-$k$ Hadamard matrix}. Therefore we can reconstruct the coefficients of the Hamiltonian by $ \boldsymbol{\coef}_\beta =\mathbf{H}^{-1}\boldsymbol{\eigen} =  \frac{1}{2^n} \mathbf{H}\boldsymbol{\eigen}$. This approach has the drawback that we will need to estimate all $2^n$ of the eigenvalues. 
While techniques such as ``bins and peelings'' for Walsh-Hadamard transforms with sparsity in the transform domain \cite{Scheibler_2015,li2015sprightfastrobustframework} could theoretically reduce the computational cost, as used in \cite{harper2021fast,yu2023robust}, they cannot be used to achieve the Heisenberg-limited scaling and typically assume that the sparse support is randomly distributed across the entire vector space, rather than localized to the subspace with weight at most $k$.

We therefore need to consider an alternative approach, which is through compressed sensing. Using \eqref{eq:hadamard_transform_eigenvalue_coef_overview} to provide the constraint in \eqref{eq:exabsopt} and minimizing the $\ell^1$ norm of the coefficients, we will be able to recover the coefficients $\boldsymbol{\mu}_\beta$.
Quantitatively, if $H_{\mathrm{eff}}$ contains $M$ non-zero terms, then the coefficients can be recovered with  $\widetilde{\mathcal{O}}(M)$ eigenvalue estimates. For a more detailed discussion see Section~\ref{sec:learning_a_commuting_ham}.

To complete our description of the learning protocol for $H_{\mathrm{eff}}$, we only need to describe the experimental setup to learn the eigenvalue corresponding to a given eigenvector, which is a product state. We call such an experiment a phase estimation experiment. It consists of the following steps: (1) Prepare the initial state $\frac{1}{\sqrt{2}}(\ket{0}_{\beta}+\ket{b}_{\beta})$ as in \eqref{eq:experiment_initial_state}.
(2) Let the system evolve for time $t$ while applying random Pauli operators from $\mathcal{K}_\beta$ (defined in \eqref{eq:K_beta_overview}) with interval $\tau$.
(3) Measure the observables $X^b_{\beta}$ or $Y^b_{\beta}$ (by measuring every qubit in either $\sigma^x$, $\sigma^y$, or $\sigma^z$ basis) as defined in \eqref{eq:observable_Xb} and \eqref{eq:observable_Yb} to obtain a $\pm 1$ outcome.

\begin{figure}[t]
    \centering
    \includegraphics[width=0.8\linewidth]{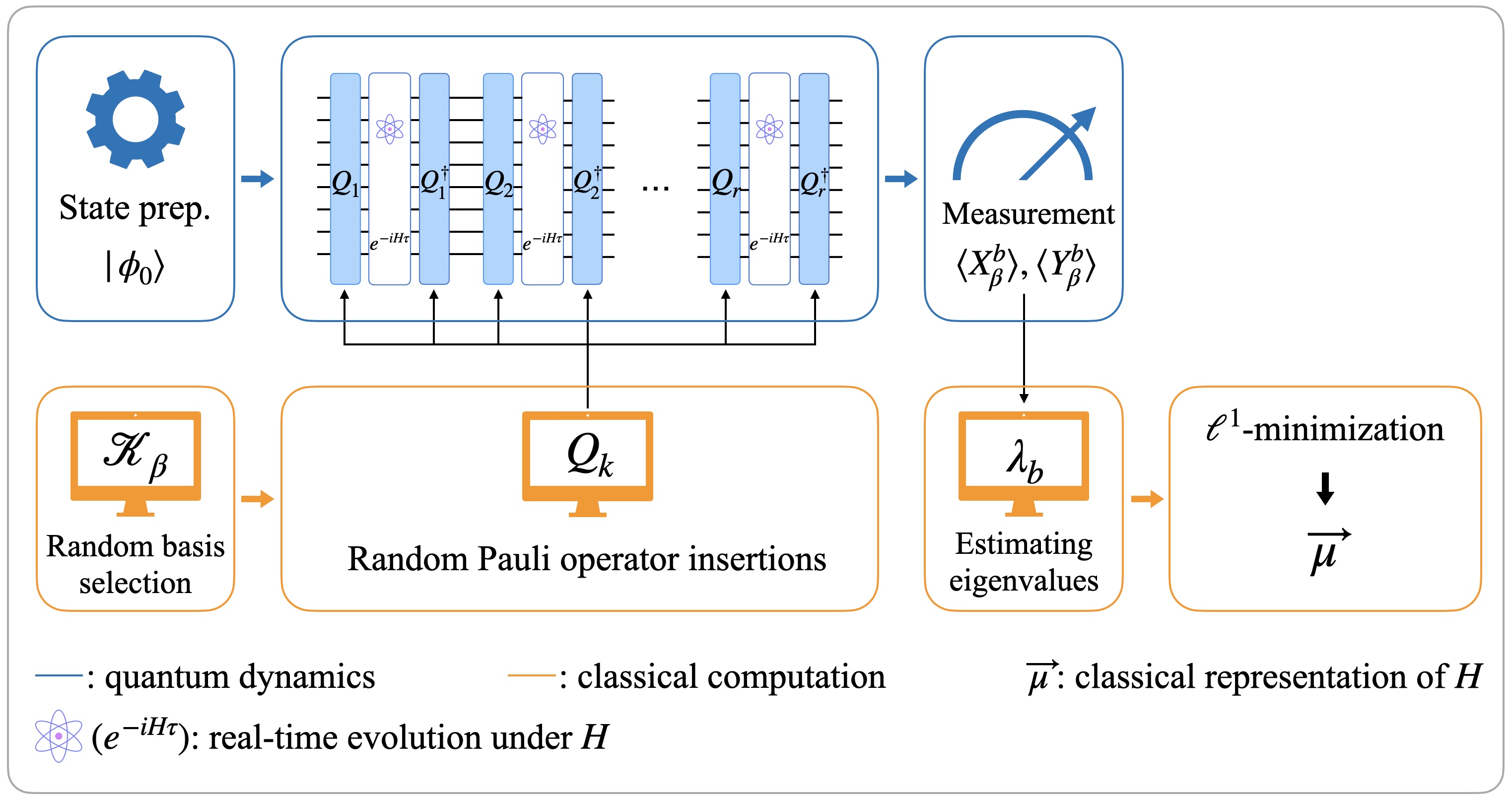}
    \caption{\textbf{The learning protocol.} 
    Quantum processes are illustrated in blue blocks, and classical processes are illustrated in orange blocks. 
    State preparation and measurement (SPAM) errors are taken into account in the experiments.}
    \label{fig:Illustration_of_the_experiment}
\end{figure}

With these experiments we can estimate eigenvalue difference $\eigen_b-\eigen_0$ through the robust frequency estimation protocol introduced in Section~\ref{sec:robust_frequency_estimation}. 
In particular, these estimations have Heisenberg-limited scaling and are SPAM-robust.
We will further discuss how we estimate $\eigen_b-\eigen_0$ and the complexity of the number of independent non-adaptive experiments and total evolution time in Section~\ref{sec:experimental_setup}.
Combined with compressed sensing, we can now learn all coefficients in $H_{\mathrm{eff}}$ efficiently and with Heisenberg-limited scaling. Theorem~\ref{thm:compressed_sensing_main_thm} also tells us that in order to keep the $\ell^1$-error of the coefficient estimates below $\epsilon$ we need to estimate the randomly sampled eigenvalues to precision $\widetilde{\mathcal{O}}(\epsilon M^{-1/2})$, which results in a total evolution time of $\widetilde{\mathcal{O}}(M^{1/2}/\epsilon)$ for eigenvalue estimate using robust phase estimation according to Theorem~\ref{thm:robust_frequency_estimation}. The total evolution time needed for learning $H_{\mathrm{eff}}$ with $\ell^1$-error at most $\epsilon$ is therefore $\widetilde{\mathcal{O}}(M^{1.5}/\epsilon)$.

We can now learn $H_{\mathrm{eff}}$, whose terms are all contained in the set  $\mathcal{K}_\beta$ as defined in \eqref{eq:K_beta_overview}.
In other words, for each Pauli term $P$ in $H_{\mathrm{eff}}$, its component on the $j$th qubit is either $I$ or $\sigma^{\beta_j}$ where $\beta_j\in\{x,y,z\}$.
We will next focus on the way to choose a set of $\beta$ so that all the terms in the Hamiltonians are covered with large probability. We can show that uniformly randomly choosing $\beta$ is a good strategy.
Since each Pauli term $P$ in the Hamiltonian is $k$-body, for a uniformly randomly generated $\beta$, $P\in \mathcal{K}_\beta$ with probability at least $3^{-k}$. 
It therefore takes $L = 3^k\log(M/\delta_{\mathrm{basis}})$ samples of $\beta$ to ensure that all the $M$ terms are included in at least one of the $L$ instances with probability at least $1-\delta_{\mathrm{basis}}$. This ensures that each term $P$ is learned via one of the $H_{\mathrm{eff}}$. 
Taking this overhead $L$ into account, we then arrive at our main result (using the notation of $\mathcal{K}_\beta$ as defined in \eqref{eq:K_beta_overview} and $\mathbb{P}_n^{(k)}$ for the set of Pauli operators on at most $k$ qubits as defined in \eqref{eq:k_body_pauli}):
\begin{thm*}[Learning a $k$-body Hamiltonian containing $M$ terms]
     We assume that the quantum system is evolving under a $k$-body Hamiltonian with $M$ terms (Definition~\ref{defn:k_body_ham}).
    With $N_{\mathrm{exp}}$ independent non-adaptive $(\beta,b_j,t_j,\tau)$-phase estimation experiments (Definition~\ref{defn:phase_estimation_experiment}), $j=1,2,\cdots,N_{\mathrm{exp}}$, with the SPAM error (Definition~\ref{defn:SPAM_err}) satisfying $\epsilon_{\mathrm{SPAM}}\leq 1/(3\sqrt{2})$, we can obtain estimates $\hat{\coef}_P$ for every $P\in \mathbb{P}_n^{(k)}$ such that, with probability at least $1-\delta$
    \begin{equation}
        \left(\sum_{P\in\mathbb{P}_n^{(k)}\setminus\{I\}} |\hat{\coef}_P-\coef_P|^p \right)^{1/p}\leq \epsilon,
    \end{equation}
    for $1\leq p\leq 2$.
    In the above $N_{\mathrm{exp}}$, $\{t_j\}$, and $\tau$ satisfy
    \[
    N_{\mathrm{exp}} = \widetilde{\mathcal{O}}(3^k M),
    \]
    \[
    T = \sum_j t_j =\widetilde{\mathcal{O}}\left(\frac{9^k M^{1/p+1/2}}{\epsilon}\right),
    \]
    and $\tau = \Omega(3^{-k}\epsilon/(M^{1/p+3/2}\log(M/\delta)))$.
\end{thm*}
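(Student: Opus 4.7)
The plan is to combine three ingredients already developed in the overview—Hamiltonian reshaping, the RIP of the weight-$k$ Hadamard matrix together with compressed-sensing $\ell^1$-recovery, and robust frequency estimation—and glue them with a random-basis covering argument. First I would draw $L = \widetilde{\mathcal{O}}(3^k)$ independent uniform bases $\beta^{(1)},\ldots,\beta^{(L)} \in \{x,y,z\}^n$. Since any fixed $k$-body Pauli $P_a$ lies in $\mathcal{K}_{\beta^{(\ell)}}$ with probability at least $3^{-k}$, a union bound over the $M$ nonzero terms gives that, with probability at least $1-\delta_{\mathrm{basis}}$, each $P_a$ belongs to at least one $\mathcal{K}_{\beta^{(\ell)}}$. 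For each $\ell$, the reshaping procedure with interval $\tau$ effectively implements evolution under the completely-commuting Hamiltonian $H_{\mathrm{eff}}^{(\ell)} = \sum_{P\in\mathcal{K}_{\beta^{(\ell)}}} \coef_P P$, which inherits $M$-sparsity from $H$.

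For each $\ell$, I would run $N = \widetilde{\mathcal{O}}(M)$ $(\beta^{(\ell)},b_j,t_j,\tau)$-phase estimation experiments, with $b_j$ drawn from the distribution used to subsample rows of the weight-$k$ Hadamard matrix. Each experiment, fed to the robust frequency estimation protocol of Theorem~\ref{thm:robust_frequency_estimation}, yields an estimate of $\eigen_{b_j} - \eigen_0$ with additive error $\sigma$ in total evolution time $\widetilde{\mathcal{O}}(1/\sigma)$, and remains valid under the stated SPAM threshold $\epsilon_{\mathrm{SPAM}} \le 1/(3\sqrt{2})$. Collecting the outputs as $\mathbf{y}^{(\ell)} = \mathbf{A}\boldsymbol{\coef}_{\beta^{(\ell)}} + \mathbf{e}$, with $\mathbf{A}$ the subsampled weight-$k$ Hadamard matrix, the RIP guarantee of Section~\ref{sec:compressed_sensing} together with Theorem~\ref{thm:compressed_sensing_main_thm} converts the convex program \eqref{eq:exabsopt} into an estimate $\widehat{\boldsymbol{\coef}}_{\beta^{(\ell)}}$ whose $\ell^p$ error is $\widetilde{\mathcal{O}}(M^{1/p-1/2}\sigma)$ for $1 \le p \le 2$.

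To achieve total $\ell^p$ error $\epsilon$, I would allocate a per-basis budget $\eta = \epsilon/L$, which is safe by triangle inequality for $p=1$ and only loose by a factor $L^{1-1/p}$ for $p>1$ (absorbed into the $9^k$). Inverting the per-basis error formula gives $\sigma = \widetilde{\Theta}(\eta M^{1/2-1/p})$, so the per-basis total evolution time is $N/\sigma = \widetilde{\mathcal{O}}(M^{1/p+1/2}/\eta)$. Summing over the $L$ bases produces $N_{\mathrm{exp}} = \widetilde{\mathcal{O}}(3^k M)$ and $T = \widetilde{\mathcal{O}}(L^2 M^{1/p+1/2}/\epsilon) = \widetilde{\mathcal{O}}(9^k M^{1/p+1/2}/\epsilon)$. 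The reshaping interval $\tau$ is then chosen small enough that the deviation between the actual reshaped dynamics and $e^{-itH_{\mathrm{eff}}^{(\ell)}}$, accumulated over total evolution time $T$, injects at most an additional $O(\sigma)$ of noise per eigenvalue estimate, which yields the stated $\tau = \Omega(3^{-k}\epsilon/(M^{1/p+3/2}\log(M/\delta)))$.

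The main obstacle is not any single step—each is essentially a direct application of an earlier theorem—but the careful coupling of three interacting error scales: the basis-covering failure (controlled by $L$), the compressed-sensing recovery noise (controlled by $\sigma$ and the RIP constants of $\mathbf{A}$), and the reshaping approximation (controlled by $\tau$). These must be tuned simultaneously to meet the target $\ell^p$ error, the promised total evolution time, and the probability budget $\delta$. A secondary point is that a given Pauli $P$ may lie in several $\mathcal{K}_{\beta^{(\ell)}}$; taking the estimate from any one of them as $\widehat{\coef}_P$ costs only a constant in the final error bound, so no sophisticated aggregation across bases is needed.
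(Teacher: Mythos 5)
Your proposal is correct and follows essentially the same route as the paper's proof: random-basis covering (with $L = \widetilde{\mathcal{O}}(3^k)$ bases and a union bound over the $M$ terms), Hamiltonian reshaping to produce completely commuting effective Hamiltonians, robust frequency estimation of eigenvalue differences $\eigen_b - \eigen_0$, compressed-sensing recovery through the RIP of the subsampled weight-$k$ Hadamard matrix, and finally an aggregation of the per-basis guarantees (the paper uses the crude factor $L$ rather than the sharper $L^{1/p}$, exactly as you observe, and absorbs the slack into $9^k$).

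The one place where your justification does not quite match the analysis is the bound on the reshaping interval $\tau$. Theorem~\ref{thm:hamiltonian_reshaping} controls a \emph{per-experiment} diamond-norm error of order $M^2 t_j^2/r_j = M^2 t_j \tau$; this error, together with $\epsilon_{\mathrm{SPAM}}$, only needs to stay below a fixed constant (the paper demands $< 1/(3\sqrt{2})$ to match the SPAM budget) so that the hypotheses of Theorem~\ref{thm:robust_frequency_estimation} hold. Since $\max_j t_j = \mathcal{O}(1/\sigma)$ with $\sigma$ the eigenvalue-estimation precision, that constant requirement already yields $\tau = \Omega(\sigma/M^2)$, which, after substituting $\sigma = \widetilde{\Theta}(\epsilon/(L M^{1/p-1/2}))$ and $L = \widetilde{\Theta}(3^k)$, reproduces the stated $\tau = \Omega(3^{-k}\epsilon/(M^{1/p+3/2}\log(M/\delta)))$. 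Your phrasing—requiring the reshaping deviation ``accumulated over total evolution time $T$'' to inject only $O(\sigma)$ of noise—if taken literally would demand a much stricter $\tau = \Omega(\sigma^2/M^2)$ or smaller and would not give the quoted formula. The final $\tau$ you write down is right, but the intermediate reasoning for it should be replaced by the constant-error-per-experiment condition above.
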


This theorem is stated as Theorem~\ref{thm:ham_learning_upper_bound} later in the text. 
Although $k = O(1)$ throughout, we keep factors $2^{O(k)}$ explicit to help inform the reader on some $k$ dependence.
More precise results are in Remark~\ref{rem:ham_learning_upper_bound_precise_statement}. 

We also discuss the robustness of our method to modeling errors, i.e., the Hamiltonian may contain more than $M$ terms and may not be exactly $k$-body, in Section~\ref{sec:robustness_to_modeling_errs}. The computational complexity of solving the $\ell^1$-minimization problem in \eqref{eq:exabsopt} is discussed in Section~\ref{sec:Computational_complexity}. Operational interpretations of the $\ell^1$- and $\ell^2$-error metrics are discussed in Section~\ref{sec:operational_interpretation}.

\paragraph{Lower bounds.} 
Besides the learning protocol we also provide a lower bound for the Hamiltonian learning task we consider. 
Specifically we consider the dependence on the tolerated $\ell^1$-error $\epsilon$ and the number of Pauli terms $M$. 
The adaptive experiments are modeled as in \cite{huang2022foundations}, where all experiments form a tree such that the outcome of the experiment at a vertex determines which child leaf to move to, thus determining the next experiment to perform.
Our main tool for the total evolution time lower bound is Assouad's lemma \cite{yu1997assouad}, which provides a lower bound on the achievable $\ell^1$-error given two prerequisites: 
(1) an estimate of how hard it is to distinguish two output probability distributions if they come from two Hamiltonians that differ slightly, and 
(2) a lower bound on the penalty in $\ell^1$-error if such a pair of Hamiltonians are not correctly distinguished. 
The second prerequisite is easy to fulfill, as discussed in the proof of Theorem~\ref{thm:lower_bound}. 
For the first prerequisite, we follow \cite{HuangTongFangSu2023learning} to capture the difficulty of correctly distinguishing the output probability distributions by induction on the tree of adaptive experiments.
We show that for a series of adaptive experiments with Hamiltonians $H$ and $H'$ respectively, if each of them involve a SPAM error of order $\gamma$, then the total variation distance between the output distributions is upper bounded by $1-\gamma^{\|H-H'\|T}$. Combining this with Assouad's lemma allows us to prove the lower bound:
If a learning algorithm can learn the coefficients of an $n$-qubit Hamiltonian $H=\sum_{a=1}^M \coef_aP_a$ with any unknown parameters $\abs{\coef_a}\leq 1$, i.e., after multiple rounds of noisy experiments, the algorithm can estimate any $\boldsymbol{\coef}= \left(\coef_1,\ldots,\coef_M\right)$ to $\epsilon_1$-error in the $\ell^1$-norm in expectation value averaged over experimental outcomes, then, 
    \begin{equation}
        T\geq \frac{M}{\epsilon_1 e \log(1/\gamma)}.
    \end{equation}
Details of the proofs are in Section~\ref{sec:lower_bounds} and Appendix~\ref{sec:Technical lemmas for the lower bound}.

\section{Notations}
\label{sec:notations}
In this work we use $\|\mathbf{x}\|_p$ to denote the $\ell^p$-norm of vector $\mathbf{x}$. 
For a bit string $b$, we use $|b|$ to denote its Hamming weight. 
The Pauli matrices are denoted by $\sigma^{x},\sigma^{y},\sigma^{z}$.
We use the following notation to denote the Pauli eigenstates:
\begin{equation}
    \label{eq:pauli_eigenstates}
    \begin{aligned}
        &\ket{1,z} = \ket{0},\quad \ket{-1,z} = \ket{1},\quad \ket{1,x} = \ket{+},\quad \ket{-1,x} = \ket{-},\\
        &\ket{1,y} = \frac{1}{\sqrt{2}}(\ket{0}+i\ket{1}),\quad \ket{-1,y} = \frac{1}{\sqrt{2}}(\ket{0}-i\ket{1}).
    \end{aligned}
\end{equation}
We denote the set of all $n$-qubit Pauli matrices by $\mathbb{P}_n$:
\begin{equation}
    \label{eq:defn_pauli_matrices}
    \mathbb{P}_n = \left\{\bigotimes_{i=1}^n P_i:P_i=I,\sigma^x,\sigma^y,\text{ or }\sigma^z\right\}.
\end{equation}
For each $P=\bigotimes_{i=1}^n P_i\in \mathbb{P}_n$, we define its weight by
\begin{equation}
    \label{eq:defn_Pauli_weight}
    \mathrm{wt}(P) = |\{i:P_i\neq I\}|.
\end{equation}
We denote the set of Pauli operators that are at most $k$-body by
\begin{equation}
    \label{eq:k_body_pauli}
    \mathbb{P}_n^{(k)} = \{P\in\mathbb{P}_n:\mathrm{wt}(P)\leq k\}.
\end{equation}

\section{Background}
\label{sec:background}

\subsection{Hamiltonian reshaping}
\label{sec:hamiltonian_reshaping}

The basic idea behind \cite{HuangTongFangSu2023learning} is that we can reshape the Hamiltonian during time evolution into a new Hamiltonian that is easy to learn and also contains useful information about the original Hamiltonian. In this section we will provide a brief introduction to this technique.

Hamiltonian reshaping is done by doing random unitary transformations to the time evolution operator corresponding to the unknown Hamiltonian and obtaining the effective Hamiltonian $H_{\mathrm{eff}}$. 
At each time step we apply a unitary transformation that is randomly sampled from an ensemble, which we will describe later in this section. 
As a result the system evolves under a random Hamiltonian transformed under the unitary operation. We assume that these unitary transformations are applied instantaneously.
The same analysis as the one underlying the randomized Hamiltonian simulation algorithm known as qDRIFT \cite{Campbell2019random,BerryChildsEtAl2020time,chen2021concentration} then tells us that the effective Hamiltonian $H_{\mathrm{eff}}$ is the ensemble average of the random Hamiltonians.
This approach is similar in spirit to dynamical decoupling \cite{ViolaLloyd1998dynamical,violaKnillLloyd1999dynamical}, which applies unitary operations in a deterministic manner. However, the randomness in our protocol is helpful as it allows us to efficiently apply random unitaries from a potentially exponentially large set.

In this work we will need to use the Hamiltonian reshaping technique to obtain an effective Hamiltonian that is diagonal with respect to a given single-qubit Pauli eigenbasis. 
More precisely, given a vector $\beta=(\beta_1,\beta_2,\cdots,\beta_n)$, where $\beta_i\in\{x,y,z\}$, we want the effective Hamiltonian to only contain Pauli terms that are either $I$ or $\sigma^{\beta_i}$ on the $i$th qubit. To achieve this, we apply, with an interval of $\tau$, Pauli operators randomly drawn from the set

\begin{equation}
\label{eq:random_unitaries_basis}
    \mathcal{K}_{\beta} = \left\{\bigotimes_{j=1}^n Q^j:Q^j=I\text{ or }\sigma^{\beta_j}\right\}.
\end{equation}
Here we can see that $|\mathcal{K}_{\beta}|=2^n$. $\mathcal{K}_{\beta}$ here is an abelian subgroup of the Pauli group on $n$ qubits. 
Also, $\tau$ needs to be sufficiently small as will be analyzed in Theorem~\ref{thm:hamiltonian_reshaping}.
More precisely, the evolution of the quantum system is described by 
\begin{equation}
    \label{eq:hamiltonian_reshaping}
    Q_r e^{-iH\tau}Q_r\cdots Q_2 e^{-iH\tau}Q_2Q_1 e^{-iH\tau}Q_1,
\end{equation}
where $Q_k$, $k=1,2,\cdots,r$, is uniformly randomly drawn from the set $\mathcal{K}_{\beta}$. In one time step of length $\tau$, the quantum state evolves as
\begin{equation}
\begin{aligned}
    \rho &\mapsto \frac{1}{2^n}\sum_{Q\in \mathcal{K}_\beta} Qe^{-iH\tau}Q\rho Qe^{iH\tau}Q 
    = \rho-\frac{1}{2^n}\sum_{Q\in \mathcal{K}_\beta} i\tau [QHQ,\rho]+\mathcal{O}(\tau^2) \\
    &= \rho - i\tau [H_{\mathrm{eff}},\rho]+\mathcal{O}(\tau^2) 
    =e^{-iH_{\mathrm{eff}}\tau}\rho e^{iH_{\mathrm{eff}}\tau} + \mathcal{O}(\tau^2),
\end{aligned}
\end{equation}
where $H_{\mathrm{eff}}$, the effective Hamiltonian, is
\begin{equation}
    H_{\mathrm{eff}} = \frac{1}{2^n}\sum_{Q\in \mathcal{K}_{\beta}} QHQ.
\end{equation}
In Appendix~\ref{sec:ham_reshape_err_bound} we will provide a bound on how far the actual dynamics deviate from the dynamics induced by the effective Hamiltonian. 

The above can be seen as a linear transformation of the Hamiltonian $H$. It is therefore reasonable to consider what is the effect of the above transformation on each Pauli term in the Hamiltonian. For a Pauli term $P_a$, we note that there are two possible outcomes:

\begin{lem}
    \label{lem:Hamiltonian_reshaping_Pauli_term}
    Let $P$ be a Pauli operator and let $\mathcal{K}_\beta$ be as defined in \eqref{eq:random_unitaries_basis}. Then
    \begin{equation}
    \frac{1}{2^n}\sum_{Q\in \mathcal{K}_\beta} QPQ=
    \begin{cases}
        P,\text{ if } P\in C(\mathcal{K}_\beta),\\
        0,\text{ if } P\notin C(\mathcal{K}_\beta),
    \end{cases}
\end{equation}
where $C(\mathcal{K}_\beta)$ denotes the centralizer of the subgroup $\mathcal{K}_\beta$.
\end{lem}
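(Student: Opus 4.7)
The plan is to exploit the fact that any two Pauli operators either commute or anticommute, and then to reduce the claim to a statement about characters of the abelian group $\mathcal{K}_\beta \cong (\mathbb{Z}/2)^n$.

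First I would note that for every $Q \in \mathcal{K}_\beta$ and every Pauli $P$ we have $QPQ = \chi(Q)\, P$, where $\chi(Q) = +1$ when $Q$ and $P$ commute and $\chi(Q) = -1$ when they anticommute (using $Q^2 = I$ since $Q \in \mathcal{K}_\beta$ is a tensor product of single-qubit Paulis and identities). This reduces the sum to
\begin{equation}
\frac{1}{2^n}\sum_{Q\in\mathcal{K}_\beta} QPQ \;=\; \left(\frac{1}{2^n}\sum_{Q\in\mathcal{K}_\beta}\chi(Q)\right) P.
\end{equation}

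Next I would verify that $\chi:\mathcal{K}_\beta \to \{\pm 1\}$ is a group homomorphism. Since $\mathcal{K}_\beta$ is abelian and consists of self-inverse Pauli products, for $Q_1,Q_2 \in \mathcal{K}_\beta$ one has $(Q_1Q_2)P(Q_1Q_2) = Q_1(Q_2 P Q_2)Q_1 = \chi(Q_2)\, Q_1 P Q_1 = \chi(Q_1)\chi(Q_2)\, P$, so $\chi(Q_1Q_2) = \chi(Q_1)\chi(Q_2)$.

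Then I would split into the two cases. If $P \in C(\mathcal{K}_\beta)$, then by definition $\chi \equiv 1$, so the prefactor is $1$ and the sum equals $P$. If $P \notin C(\mathcal{K}_\beta)$, then $\chi$ is a nontrivial homomorphism into $\{\pm 1\}$; its kernel is therefore an index-$2$ subgroup, meaning exactly $2^{n-1}$ elements of $\mathcal{K}_\beta$ map to $+1$ and exactly $2^{n-1}$ map to $-1$. Hence $\sum_{Q}\chi(Q) = 0$ and the sum vanishes, completing the proof.

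There is no real obstacle here; the only point requiring a little care is verifying that $\chi$ is a homomorphism (equivalently, that the kernel is genuinely a subgroup of index $2$ when $P \notin C(\mathcal{K}_\beta)$), which is immediate from the abelianness of $\mathcal{K}_\beta$ and the fact that all its elements square to $I$. One could equivalently cite character orthogonality for $(\mathbb{Z}/2)^n$, but the direct counting argument above seems cleanest.
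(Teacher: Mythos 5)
Your proof is correct and is essentially the same argument as the paper's: the paper also reduces to the observation that $QPQ=\pm P$ and then, when $P\notin C(\mathcal{K}_\beta)$, picks an anticommuting $Q_0$ and uses the bijection $Q\mapsto Q_0Q$ to show the $+1$ and $-1$ contributions cancel in pairs, which is exactly the index-$2$-kernel counting you phrase in terms of the character $\chi$.
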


\begin{proof}
    If $P\in C(\mathcal{K}_\beta)$, then $QPQ=P$ for all $Q\in \mathcal{K}_\beta$, and averaging over all $Q$ therefore yields $P$. 
    If $P\notin \mathcal{K}_\beta$, then there exists $Q_0\in \mathcal{K}_\beta$ such that $PQ_0 = -Q_0 P$ (because two Pauli matrices either commute or anti-commute). 
    Consider the mapping $\phi: \mathcal{K}_\beta\to \mathcal{K}_\beta$ defined by $\phi(Q)=Q_0 Q$. 
    This is a bijection from $\mathcal{K}_\beta$ to itself, and it can be readily checked that if $Q$ commutes with $P$, then $\phi(Q)$ anti-commutes with $P$; if $Q$ anti-commutes with $P$, then $\phi(Q)$ commutes with $P$. 
    Consequently $QPQ=P$ for half of all $Q\in \mathcal{K}_\beta$ and $QPQ=-P$ for the other half. 
    Thus taking the average yields $0$. 
\end{proof}

The above lemma allows us to select which Pauli terms we want to preserve in the Hamiltonian and which ones we want to discard. 
Because $C(\mathcal{K}_\beta)=\mathcal{K}_\beta$, we have
\begin{equation}
    \label{eq:effective_hamiltonian_terms}
    H_{\mathrm{eff}} = \sum_{a:P_a\in \mathcal{K}_\beta} \coef_a P_a.
\end{equation}
Note that because $\mathcal{K}_\beta$ is abelian, $H_{\mathrm{eff}}$ consists of commuting Pauli terms, making it easier to learn the coefficients, as will be discussed in Section~\ref{sec:learning_a_commuting_ham}. 
Also the coefficients $\coef_a$ of $P_a$ that are in $\mathcal{K}_\beta$ are preserved in this effective Hamiltonian. We therefore need to choose a set of $\beta$ so that each Pauli term of $H$ is contained in at least one of the $H_{\mathrm{eff}}$ corresponding to a $\beta$. We will discuss the way to do this in Section~\ref{sec:randomized_basis_selection}.

While Lemma~\ref{lem:Hamiltonian_reshaping_Pauli_term} concerns the uniform average over all $2^n$ elements of a set of commuting Pauli operators, in our learning protocol we will randomly sample from this set. To assess the protocol's accuracy, we will use the following theorem.

\begin{thm}
    \label{thm:hamiltonian_reshaping}
    Let $\beta$ be a length-$n$ string of $x,y,z$, and let $\mathcal{K}_{\beta}$ be as defined in \eqref{eq:random_unitaries_basis}.
    Let $U$ be the random unitary defined in \eqref{eq:hamiltonian_reshaping}. Let $V=e^{-iH_{\mathrm{eff}}t}$ for $H_{\mathrm{eff}}$ given in \eqref{eq:effective_hamiltonian_terms}, and $t=r\tau$.
    We define the quantum channels $\mathcal{U}$ and $\mathcal{V}$ be
    \[
    \mathcal{U}(\rho) = \mathbb{E}[U\rho U^\dag],\quad \mathcal{V}(\rho) = V\rho V^\dag.
    \]
    Then 
    \[
    \|\mathcal{U}-\mathcal{V}\|_{\diamond}\leq \frac{4M^2 t^2}{r}.
    \]
\end{thm}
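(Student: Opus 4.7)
The plan is to decompose the diamond-norm difference into per-step contributions and then estimate a single step via Taylor expansion, relying on the fact that the first-order-in-$\tau$ terms of $\mathcal{U}$ and $\mathcal{V}$ agree by construction of $H_{\mathrm{eff}}$.

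First I would write $\mathcal{U} = \mathcal{U}_1^{\circ r}$, where the one-step channel is
\[
\mathcal{U}_1(\rho) = \frac{1}{2^n}\sum_{Q\in\mathcal{K}_\beta} Q e^{-iH\tau} Q\, \rho\, Q e^{iH\tau} Q = \mathbb{E}_{Q\sim\mathcal{K}_\beta}\bigl[e^{-i(QHQ)\tau}\rho\, e^{i(QHQ)\tau}\bigr],
\]
using $Q e^{-iH\tau}Q = e^{-i(QHQ)\tau}$ since $Q^2 = I$. Similarly set $\mathcal{V}_1(\rho) = e^{-iH_{\mathrm{eff}}\tau}\rho\, e^{iH_{\mathrm{eff}}\tau}$ so that $\mathcal{V} = \mathcal{V}_1^{\circ r}$. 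A telescoping argument combined with the fact that the diamond norm is sub-multiplicative under composition and that channels are diamond-norm contractions yields
\[
\|\mathcal{U}-\mathcal{V}\|_{\diamond} \leq r\,\|\mathcal{U}_1-\mathcal{V}_1\|_{\diamond}.
\]
So the task reduces to a per-step bound of order $\tau^2$.

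Next I would Taylor expand both conjugation channels. For any $K$ hermitian, $e^{-iK\tau}\rho e^{iK\tau} = \rho - i\tau[K,\rho] - \tfrac{\tau^2}{2}[K,[K,\rho]] + R(K,\tau,\rho)$, with a remainder whose operator norm is bounded by $C\tau^3\|K\|^3\|\rho\|_1$ (this goes through unchanged on $\rho\otimes\text{ancilla}$, which is how I bound the diamond norm). Averaging over $Q$ and invoking the key identity $\mathbb{E}_Q QHQ = H_{\mathrm{eff}}$ from Lemma~\ref{lem:Hamiltonian_reshaping_Pauli_term}, the first-order terms of $\mathcal{U}_1$ and $\mathcal{V}_1$ both equal $-i\tau[H_{\mathrm{eff}},\rho]$ and therefore cancel in the difference. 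What remains is
\[
\mathcal{U}_1(\rho)-\mathcal{V}_1(\rho) = -\frac{\tau^2}{2}\Bigl(\mathbb{E}_Q[QHQ,[QHQ,\rho]] - [H_{\mathrm{eff}},[H_{\mathrm{eff}},\rho]]\Bigr) + O(\tau^3\|H\|^3),
\]
and each double commutator is bounded in operator norm by $4\|H\|^2$ times the trace-norm of the (possibly extended) state. Since every coefficient satisfies $|\coef_a|\le 1$ and there are $M$ terms, $\|H\|\le M$ and likewise $\|H_{\mathrm{eff}}\|\le M$, yielding $\|\mathcal{U}_1-\mathcal{V}_1\|_{\diamond} \le 4 M^2\tau^2$ for $\tau$ small enough that the $\tau^3$ remainder is absorbed into the stated bound.

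Combining the two estimates and using $\tau = t/r$ gives
\[
\|\mathcal{U}-\mathcal{V}\|_{\diamond} \le r\cdot 4M^2\tau^2 = \frac{4M^2 t^2}{r}.
\]
The main obstacle is bookkeeping: ensuring that (i) the Taylor remainders truly are controllable so that the constant 4 in front of $M^2\tau^2$ is preserved after summing over $r$ steps, and (ii) the single-step estimate is established directly in the diamond norm (not merely on a fixed state), which I would handle by noting that the one-step bound depends on $\rho$ only through $\|\rho\|_1$ and $[QHQ,\cdot]$ commutes with tensoring an ancilla, so the same constants apply to $\mathcal{U}_1\otimes\mathrm{id}$ acting on a purification. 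The cumulative higher-order remainder is of order $r\tau^3\|H\|^3 = \|H\|^3 t^3/r^2$, which is smaller than $4M^2 t^2/r$ whenever $\tau$ is chosen small enough, and the detailed accounting of that subleading term is the only delicate step.
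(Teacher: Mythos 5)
Your overall strategy coincides with the paper's: telescope the difference of channels into $r$ per-step contributions, show that the first-order-in-$\tau$ terms cancel because $\mathbb{E}_Q\,QHQ = H_{\mathrm{eff}}$, and bound what remains at order $\tau^2\|H\|^2$. The telescoping step, the identification of the single-step channel, and the cancellation of the linear terms are all correct and identical in spirit to the paper's proof.

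However, your per-step estimate contains a genuine gap in how the remainder is handled. You Taylor-expand the conjugation to \emph{second} order, writing
\[
e^{-iK\tau}\rho\,e^{iK\tau}=\rho-i\tau[K,\rho]-\tfrac{\tau^2}{2}[K,[K,\rho]]+R,\qquad \|R\|_1\lesssim \tau^3\|K\|^3\|\rho\|_1 .
\]
After averaging over $Q$ and subtracting, your explicit second-order term alone is bounded by $\frac{\tau^2}{2}\bigl(4\|H\|^2+4\|H_{\mathrm{eff}}\|^2\bigr)\leq 4M^2\tau^2$, and this inequality is \emph{tight} when $\|H\|=\|H_{\mathrm{eff}}\|=M$ (for example $H=\sum_{a=1}^{M}\sigma^z_a$ on $M\leq n$ qubits with $\beta=(z,\ldots,z)$, so $H_{\mathrm{eff}}=H$). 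In that case there is no slack at all, and the additional positive $O(\tau^3\|H\|^3)$ contribution pushes the per-step bound strictly above $4M^2\tau^2$. The ``absorb it for $\tau$ small enough'' move therefore cannot recover the stated constant, and the theorem carries no smallness hypothesis on $\tau$ (it is applied downstream for a range of $\tau$ determined only by the requirement $t=r\tau$).

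The fix is precisely what the paper does: stop the Taylor expansion at \emph{first} order and bound the entire tail with the Lagrange form of the remainder. Concretely, for $g(s)=e^{-iKs}\rho\,e^{iKs}$ one has $g''(s)=-[K,[K,g(s)]]$, hence $\|g''(s)\|_1\leq 4\|K\|^2\|\rho\|_1$ for every $s$, so the first-order remainder satisfies $\|g(\tau)-g(0)-\tau g'(0)\|_1\leq 2\|K\|^2\tau^2\|\rho\|_1$ with no higher-order residue. Applying this once with $K=QHQ$ (so $\|K\|=\|H\|$), averaging over $Q$, and once with $K=H_{\mathrm{eff}}$, and then using $\mathbb{E}_Q\,QHQ=H_{\mathrm{eff}}$ to cancel the linear terms, yields the exact per-step bound $2\|H\|^2\tau^2+2\|H_{\mathrm{eff}}\|^2\tau^2\leq 4M^2\tau^2$. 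Telescoping and substituting $\tau=t/r$ then gives $\|\mathcal{U}-\mathcal{V}\|_\diamond\leq 4M^2t^2/r$ with no restriction on $\tau$. Your ancilla and $\|\rho\|_1$ bookkeeping for converting this into a diamond-norm bound is fine and is the same argument the paper uses.
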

Here the norm is the diamond norm (or completely-bounded norm), given in terms of the Schatten $L_1$ norm by $\|\mathcal{U}\|_\diamond := \max_{X : \|X\|_1 \le 1}\bigl\|\bigl(\mathcal{U} \otimes \mathbbm{1}_n\bigl)(X)\bigr\|_1$. 
The proof can be found in Appendix~\ref{sec:ham_reshape_err_bound}, where we analyze how the error in each segment of time-evolution contributes to the total diamond distance between the quantum channels $\mathcal{U}$ and $\mathcal{V}$. 
The error in each segment is analyzed using a Taylor expansion.

\subsection{Compressed sensing}
\label{sec:compressed_sensing}
The task of learning a sparse coefficient vector is studied in the field of compressed sensing, which is related to learning the $M$-sparse coefficient vector in the Hamiltonian learning scenario.
Compressed sensing provides a powerful tool to reconstruct a sparse high-dimensional vector from few measurements. 
This is relevant to the problem we are studying: the Hamiltonian coefficients form an $M$-sparse vector, but this vector is at the same time high-dimensional, as any of the $\mathcal{O}(n^k)$ $k$-local Pauli terms can be present.

The basic setup of compressed sensing is as follows: we want to reconstruct a vector $\mathbf{x}\in\CC^{D}$ based on observations $\mathbf{y}\in\CC^{\mathrm{\Gamma}}$ of the form
\begin{equation}
    \mathbf{y} = \mathbf{A}\mathbf{x} + \mathbf{e},
\end{equation}
where $\mathbf{A}$ is a $\Gamma\times D$ matrix and $\mathbf{e}\in \CC^\Gamma$ represents (typically small) noise in the observation. 
When $\Gamma<D$, we will not be able to uniquely determine $\mathbf{x}$ in general. 
However, below we will use $\mathbf{A}$ that has the \emph{restricted isometry property} (RIP), which will be discussed in detail in Appendix~\ref{sec:techinical_lemmas_compressed_sensing}. 
The restricted isometry property of $\mathbf{A}$ guarantees that the $\ell^2$-norm of any sparse vector $\mathbf{x}$ will be approximately preserved after applying $\mathbf{A}$. 
This intuitively implies that we can have enough information to reconstruct $\mathbf{x}$ given access to $\mathbf{Ax}$.
A matrix satisfying the RIP is nearly an isometry when acting on sufficiently sparse vectors, and thus the vector we want to reconstruct is not distorted beyond recognition when transformed through this matrix. 

We further assume that $\mathbf{x}$ is $M$-sparse and solve the following $\ell^1$-minimization problem,
\begin{equation}
\label{eq:l1_minimization}
    \operatornamewithlimits{minimize}_{\mathbf{z} \in \mathbb{C}^D}\|\mathbf{z}\|_1 \quad \text { subject to }\|\mathbf{A z}-\mathbf{y}\|_2 \leq \eta \sqrt{\Gamma},
\end{equation}
with $\Gamma$ being sufficiently large but potentially small compared to $D$. 
The solution $\mathbf{x}^{\sharp}\in \CC^D$ is guaranteed to be close to the actual vector $\mathbf{x}$ if each entry of the error vector $\mathbf{e}$ is at most $\eta$ in absolute value. 
The optimization problem \eqref{eq:l1_minimization} can be solved efficiently as we show in Section~\ref{sec:Computational_complexity}.

In this work, we will focus on what we call a \emph{weight-$k$ Hadamard matrix}, which we define as follows: 
\begin{defn}[Weight-$k$ Hadamard matrix $\mathbf{H}^{(k)}$]
    \label{defn:weight-$k$ Hadamard matrix} 
    Let $\mathbf{H}$ be the Hadamard matrix of size $2^n\times 2^n$, i.e.,
    \[
    \mathbf{H} = \begin{pmatrix}
        1 & 1 \\
        1 & -1
    \end{pmatrix}^{\otimes n}.
    \]
    For each $i=1,2,\cdots,2^n$, we let $b_i$ be the $n$-bit string representing the integer $i-1$. 
    Then the weight-$k$ Hadamard matrix, denoted as $\mathbf{H}^{(k)}$, is the sub-matrix of $\mathbf{H}$ consisting of every $i$th column of $\mathbf{H}$ where $|b_i|\leq k$. 
    Here $|b_i|$ is the Hamming weight of the bit string $b_i$. 
\end{defn}
One can readily see that $\mathbf{H}^{(k)}$ is of size $2^n\times D$ where $D=\sum_{l=0}^k\binom{n}{l}$.
Below we will use this matrix for our compressed sensing task. 
Before we state the result we will first introduce a notation: we define for a vector $\mathbf{x}$ and integers $s,p\geq 0$,
\begin{equation}
    \label{eq:best_s_sparse_approx_err}
    \sigma_s(\mathbf{x})_p = \min_{\mathbf{y}:\|\mathbf{y}\|_0\leq s} \|\mathbf{x}-\mathbf{y}\|_p.
\end{equation}
In other words $\sigma_s(\mathbf{x})_p$ is the smallest error in the $p$-norm one can achieve when approximating $\mathbf{x}$ with an $s$-sparse vector.

\begin{thm}[Compressed sensing with the weight-$k$ Hadamard matrix]
\label{thm:compressed_sensing_main_thm}
    Let $D=\sum_{l=0}^k\binom{n}{l}$.
    Let $\mathbf{A} \in \mathbb{C}^{\Gamma \times D}$ be a matrix whose rows are independently randomly sampled from the rows of $\mathbf{H}^{(k)}$ with replacement.
    Let $\delta_{\mathrm{CS}}\in(0,1)$ and let $M>0$ be an integer. If
    \begin{equation}
    \label{eq: number of samples}
        \Gamma \geq CM\max\{\ln^2(M)\ln(M\ln(D))\ln(D),\ln(1/\delta_{\mathrm{CS}})\},
    \end{equation}
    then, with probability at least $1-\delta_{\mathrm{CS}}$, the following statement holds for every $\mathbf{x} \in \mathbb{C}^D$. Let noisy samples $\mathbf{y}=\mathbf{A x}+\mathbf{e}$ be given with $\|\mathbf{e}\|_2\leq \eta \sqrt{\Gamma}$,
        and let $\mathbf{x}^{\sharp}$ be the solution of the $\ell^1$-minimization problem \eqref{eq:l1_minimization}.
        Then
        \begin{equation}
        \label{eq: reconstruction error}
             \left\|\mathbf{x}-\mathbf{x}^{\sharp}\right\|_p \leq \frac{C_1}{M^{1-1/p}}\sigma_M(\mathbf{x})_1+C_2M^{1/p-1/2} \eta, \quad 1\leq p \leq 2.
        \end{equation}
        $\sigma_M(\mathbf{x})_1$ is defined as in \eqref{eq:best_s_sparse_approx_err}.
        All constants $C, C_1, C_2>0$ are universal. 
\end{thm}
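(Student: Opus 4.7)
The plan is to reduce this to the standard compressed sensing machinery for bounded orthonormal systems (BOS). The rows of $\mathbf{H}^{(k)}$ come from evaluating the functions $\phi_c(b) = (-1)^{c \cdot b}$ (indexed by bit-strings $c$ with $|c|\leq k$) at points $b\in\{0,1\}^n$. With respect to the uniform measure on $\{0,1\}^n$, the $D$ functions $\{\phi_c\}$ form an orthonormal system, and each satisfies $\|\phi_c\|_\infty = 1$. Therefore sampling $\Gamma$ rows uniformly at random (with replacement) and normalizing by $1/\sqrt{\Gamma}$ produces a sampling matrix from a bounded orthonormal system with coherence constant $K=1$.

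First, I would invoke (or prove in the appendix) the standard RIP theorem for such BOS sampling matrices, in the sharpened form due to Haviv--Regev (and Bourgain): there exists a universal constant $C>0$ such that if
\[
\Gamma \geq C \, s \, \ln^2(s)\,\ln(s\ln D)\,\ln(D),
\]
then with probability $\geq 1-\delta_{\mathrm{CS}}$ the matrix $\tfrac{1}{\sqrt{\Gamma}}\mathbf{A}$ satisfies the restricted isometry property of order $s$ with constant $\delta_s$ below any prescribed threshold (say $\delta_{2M}<4/\sqrt{41}$). Applying this with $s=2M$ gives the first branch of the bound~\eqref{eq: number of samples}; the second branch $\Gamma \gtrsim M\ln(1/\delta_{\mathrm{CS}})$ controls the probability of failure (this comes out of the tail bound in the chaining / noncommutative Bernstein argument underlying the BOS RIP proof).

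Second, given RIP, I would invoke the now-standard $\ell^1$-recovery theorem (for example Theorem~4.22/6.12 in Foucart--Rauhut, or equivalent in Candès--Tao): if $\delta_{2M}$ is smaller than a universal constant, then any minimizer $\mathbf{x}^\sharp$ of
\[
\operatornamewithlimits{minimize}_{\mathbf{z}}\|\mathbf{z}\|_1 \quad\text{subject to}\quad \|\mathbf{A}\mathbf{z}-\mathbf{y}\|_2 \leq \eta\sqrt{\Gamma},
\]
satisfies the mixed-norm stability bound
\[
\|\mathbf{x}-\mathbf{x}^\sharp\|_p \leq \frac{C_1}{M^{1-1/p}}\,\sigma_M(\mathbf{x})_1 + C_2\,M^{1/p-1/2}\,\eta,\qquad 1\leq p\leq 2.
\]
The only subtlety here is bookkeeping with the normalization: the noise constraint in~\eqref{eq:l1_minimization} is stated for $\mathbf{A}$ itself (with rows of $\ell^2$-norm roughly $\sqrt{D}$), whereas RIP is stated for $\mathbf{A}/\sqrt{\Gamma}$, so the factor $\eta\sqrt{\Gamma}$ on the right-hand side of the constraint is exactly what is needed to feed into the recovery theorem with noise level $\eta$.

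The main obstacle is the RIP result itself: proving that a uniformly subsampled weight-$k$ Hadamard matrix is an approximate isometry on $2M$-sparse vectors with only $\widetilde{\mathcal{O}}(M)$ samples. This is not a simple concentration argument, since we need a uniform bound over all $\binom{D}{2M}$ sparse supports. I would handle this through the Dudley-type chaining argument on the set of sparse unit vectors (combined with noncommutative Bernstein / symmetrization and a covering argument on the sparse-vector polytope). The coherence bound $K=1$ is what makes the logarithmic factors come out to $\ln^2(M)\ln(M\ln D)\ln D$ rather than the cruder $\ln^4 D$. Since the structural fact we rely on is merely that the columns of $\mathbf{H}^{(k)}$ come from a sub-system of an orthonormal Hadamard basis, the proof can be packaged as a black-box application of the BOS RIP theorem with $N=D$ and $K=1$, with all the nontrivial work deferred to Appendix~\ref{sec:techinical_lemmas_compressed_sensing}.
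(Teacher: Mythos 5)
Your proposal is correct and follows essentially the same route as the paper's proof in Appendix~\ref{sec:techinical_lemmas_compressed_sensing}: identify the normalized weight-$k$ Hadamard matrix as a bounded orthonormal system with constant $K=1$, invoke the BOS restricted isometry theorem (the paper cites \cite[Theorem~12.32]{FoucartRauhut2013}, which yields exactly the $\ln^2(M)\ln(M\ln D)\ln(D)$ factors you write), and then apply the standard $\ell^1$-recovery stability bound (\cite[Theorem~6.11]{FoucartRauhut2013}). Your remark about the $\eta\sqrt{\Gamma}$ normalization is the right bookkeeping, and is implicit in the paper's treatment.
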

The proof of this theorem is provided in Appendix~\ref{sec:techinical_lemmas_compressed_sensing}.
From the above theorem we can see that, if $\mathbf{x}$ is itself $M$-sparse, then $\sigma_M(\mathbf{x})_1=0$, and the $\ell^1$-minimization allows us to estimate $\mathbf{x}$ to within $C_2M^{1/p-1/2} \eta$ error in the $\ell^p$-norm.

\subsection{Robust frequency estimation}
\label{sec:robust_frequency_estimation}

Taking advantage of the Hamiltonian reshaping procedure, we will be able to identify eigenvectors of the effective Hamiltonian, and as a part of the protocol we want to estimate the corresponding eigenvalue. 
The estimation needs to be performed in a noise-robust way in order to achieve Heisenberg-limited scaling and SPAM robustness.
The robust frequency estimation protocol that we introduce in this section helps us achieve this goal. 
This protocol follows the same idea as the robust phase estimation protocol \cite{KimmelLowYoder2015robust}, but is modified to obtain a good confidence interval rather than minimizing the mean-squared error.

From the phase estimation experiments to be introduced in Section~\ref{sec:experimental_setup}, we will generate signals fluctuating around $\cos(\theta t)$ and $\sin(\theta t)$ where $\theta$ corresponds to the eigenvalue we want to estimate. 
The goal is therefore to reconstruct $\theta$ based on noisy estimates of $\cos(\theta t)$ and $\sin(\theta t)$. 
Through the robust frequency estimation protocol, we can accomplish this task with the cost stated in the following theorem:

\begin{thm}[Robust frequency estimation]
    \label{thm:robust_frequency_estimation}
    Let $\theta\in[-A,A]$.
    Let $X(t)$ and $Y(t)$ be independent random variables satisfying
    \begin{equation}
        \begin{aligned}
            &|X(t)-\cos(\theta t)|< 1/\sqrt{2}, \text{ with probability at least }2/3, \\
            &|Y(t)-\sin(\theta t)|< 1/\sqrt{2}, \text{ with probability at least }2/3.
        \end{aligned}
    \end{equation}
    Then with $K$ independent non-adaptive\footnote{By ``non-adaptive'' we mean that the choice of each $t_j$ does not depend on the value of $X(t_{j'})$ or $Y(t_j')$ for any $j'$.} samples $X(t_1),X(t_2),\cdots,X(t_K)$ and $Y(t_1),Y(t_2),\cdots,Y(t_K)$, $t_j\geq 0$, for
    \begin{equation}
        K=\mathcal{O}(\log(A/\epsilon)(\log(1/q)+\log\log(A/\epsilon))),
    \end{equation}
    \begin{equation}
    \begin{aligned}
        T=\sum_{j=1}^Kt_j=\mathcal{O}((1/\epsilon)(\log(1/q)+\log\log(A/\epsilon))),\quad \max_j t_j=\mathcal{O}(1/\epsilon),
    \end{aligned}
    \end{equation}
    we can obtain a random variable $\hat{\theta}$ such that
    \begin{equation}
        \Pr[|\hat{\theta}-\theta|>\epsilon]\leq q.
    \end{equation}
\end{thm}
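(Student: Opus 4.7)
The plan is to follow the robust phase estimation blueprint of \cite{KimmelLowYoder2015robust} but aggregate the samples at each time with a median (rather than a mean), so that the accuracy guarantee comes out as a high-probability confidence interval instead of a mean-squared-error bound. I would fix a geometric schedule $t_j = t_0\alpha^{j-1}$ for $j=1,\ldots,J$, with $t_0 = \Theta(1/A)$, $\alpha>1$ a universal constant fixed below, and $J = O(\log(A/\epsilon))$ chosen so that $t_J = \Theta(1/\epsilon)$. At each level $j$ I would draw $N$ independent samples each of $X(t_j)$ and $Y(t_j)$, entirely non-adaptively, and let $\hat X_j,\hat Y_j$ denote the two sample medians.

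The indicator of the event $X(t_j)\in(\cos(\theta t_j)-1/\sqrt 2,\cos(\theta t_j)+1/\sqrt 2)$ is a Bernoulli with success probability at least $2/3$, so a Chernoff bound ensures that for $N = \Theta(\log(J/q))$ a strict majority of the $N$ samples falls in this interval with probability at least $1-q/(4J)$. Whenever this occurs the median is forced into the same interval, giving $|\hat X_j-\cos(\theta t_j)|<1/\sqrt 2$, and the analogous statement holds for $\hat Y_j$. A union bound over all $2J$ medians leaves an overall failure probability of at most $q$.

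Conditioned on this good event, I would reconstruct $\theta$ level by level. At level $j$ I would form $\hat\psi_j := \operatorname{atan2}(\hat Y_j,\hat X_j)$; the crucial deterministic fact is that $(\hat X_j,\hat Y_j)$ sits in the open axis-aligned square of half-side $1/\sqrt 2$ around $(\cos(\theta t_j),\sin(\theta t_j))$, so that the inner product of the two vectors is strictly positive and hence the angular distance between $\hat\psi_j$ and $\theta t_j \bmod 2\pi$ is bounded by some universal constant $\delta_0 < \pi/2$. I would then pick the candidate $\hat\theta_j = (\hat\psi_j+2\pi m_j)/t_j$ with $m_j\in\mathbb Z$ minimizing $|\hat\theta_j-\hat\theta_{j-1}|$, initialized from the prior interval $|\theta|\le A$. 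An inductive computation shows that if $|\hat\theta_{j-1}-\theta|<(\pi-\delta_0)/t_j$ this rounding identifies the correct branch and yields $|\hat\theta_j-\theta|<\delta_0/t_j$; choosing any $\alpha\in(1,(\pi-\delta_0)/\delta_0)$ makes the inequality self-perpetuating and gives $|\hat\theta_J-\theta|<\delta_0/t_J = O(\epsilon)$.

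The resource tally is then immediate. The total number of samples is $K = 2NJ = O\!\bigl(\log(A/\epsilon)(\log(1/q)+\log\log(A/\epsilon))\bigr)$, and because the $t_j$ form a geometric progression the total evolution time $T = 2N\sum_j t_j$ is dominated by its final term, giving $T = O(Nt_J) = O\!\bigl((1/\epsilon)(\log(1/q)+\log\log(A/\epsilon))\bigr)$ and $\max_j t_j = t_J = O(1/\epsilon)$. The main obstacle I expect is the deterministic angle-recovery lemma bounding $\delta_0$ strictly below $\pi/2$: one has to verify that the open square of half-side $1/\sqrt 2$ around any point on the unit circle never reaches the origin, and then quantify the resulting bound tightly enough to make some constant doubling factor $\alpha>1$ compatible with the inductive branch-selection step. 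Once that planar-geometry check is in place, the remaining Chernoff-and-union-bound machinery is routine.
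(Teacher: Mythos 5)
Your approach follows the same overall scheme as the paper's Appendix D (geometric time schedule, Chernoff-boosted medians at each level, union bound, inductive phase unwinding), but replaces the paper's ternary decision lemma with an $\operatorname{atan2}$-plus-branch-selection step. The obstacle you yourself flag is in fact fatal as you have set it up. From $|\hat X_j-\cos(\theta t_j)|<1/\sqrt2$ and $|\hat Y_j-\sin(\theta t_j)|<1/\sqrt2$, the point $(\hat X_j,\hat Y_j)$ lies in an open axis-aligned square of half-side $1/\sqrt2$ about $e^{i\theta t_j}$. The corners of that square sit at Euclidean distance $\sqrt2\cdot\tfrac1{\sqrt2}=1$ from the center, which is exactly the distance from the center to the origin. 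The origin is therefore never strictly inside the square (this is what you check, and it does give $\hat X_j\cos(\theta t_j)+\hat Y_j\sin(\theta t_j)>0$), but it lies on the boundary of the closure whenever $\theta t_j\equiv\pi/4\pmod{\pi/2}$. As $\theta t_j$ approaches such a value and $(\hat X_j,\hat Y_j)$ approaches the nearest corner, the angular deviation $|\hat\psi_j-\theta t_j\bmod 2\pi|$ approaches $\pi/2$. So the supremum over admissible configurations is exactly $\pi/2$: it is never attained, but it is not bounded away from $\pi/2$ by any universal $\delta_0$. Your branch-selection induction needs $\alpha\in\bigl(1,(\pi-\delta_0)/\delta_0\bigr)$, which is empty in the limit $\delta_0\to\pi/2$, so there is no single $\alpha>1$ that works for all values of $\theta t_j$. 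Checking that the square "never reaches the origin" is genuinely insufficient; you need the deviation to be uniformly bounded away from $\pi/2$, and with the hypothesis constant $1/\sqrt2$ it is not.

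The fix requires a strictly smaller per-coordinate error: for any $c$ with $c\sqrt2<1$, the bound $\delta_0=\arcsin(c\sqrt2)<\pi/2$ is a genuine universal constant and the rest of your argument (Chernoff/union bound on medians, geometric schedule, resource tally) goes through unchanged. Equivalently, working directly with the Euclidean error $|(\hat X_j+i\hat Y_j)-e^{i\theta t_j}|<r$ gives $\delta_0=\arcsin r$, and one needs $r$ bounded strictly below $1$. The paper's proof takes a different surface route — it combines the two medians into a complex signal and feeds it to a decision lemma (Lemma~\ref{lem:frequency_est_refine}) that shrinks the confidence interval by a factor $2/3$ using \emph{overlapping} sub-intervals — and that lemma wants $|S-e^{i\theta t}|<1/2$, so it is operating under the same kind of constant budget. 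The difference between the two routes is packaging (ternary bisection of a confidence interval versus $\operatorname{atan2}$ branch selection), not the severity of the constant constraint; both need the per-coordinate error to come in strictly below $1/\sqrt2$ with some slack.
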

For a detailed proof see Appendix~\ref{sec:robust_frequency_estimation_details}.
The main idea is to reduce the estimation problem into a sequence of decision problems with binary outputs in a way similar to bisection. 
Solving each of these decision problems updates our knowledge of $\theta$, and in the process we obtain a smaller and smaller interval containing the exact value. 
We stop the process when the interval is small enough to provide sufficient accuracy.

\section{The Hamiltonian learning protocol}
\label{sec:the_ham_learn_protocol}

The Hamiltonians we will focus on in this work are defined as follows
\begin{defn}[$k$-body Hamiltonians with $M$ terms]
    \label{defn:k_body_ham}
    A $k$-body Hamiltonians with $M$ terms takes the form
    \begin{equation}
    \label{eq:hamiltonian_to_be_learned_all_terms}
    H = \sum_{P\in\mathbb{P}_n} \coef_P P,
    \end{equation}
    where $\mathbb{P}_n$ is the set of all Pauli matrices defined in \eqref{eq:defn_pauli_matrices}, $-1\leq \coef_P\leq 1$, and $\coef_P \neq 0$ only for $M$ Pauli terms $P_a$ with $\mathrm{wt}(P_a)\leq k$.
\end{defn}

\subsection{Completely commuting Hamiltonians}
\label{sec:completely_commuting_hamiltonians}

The Hamiltonian reshaping procedure discussed in Section~\ref{sec:hamiltonian_reshaping} provides an effective Hamiltonian consisting of completely commuting Pauli terms, by which we mean that the Pauli terms commute with each other when acting on any one of the qubits. 
One can readily verify that the Pauli terms in $\mathcal{K}_\beta$ \eqref{eq:random_unitaries_basis} satisfy this requirement. We call these Hamiltonians \emph{completely commuting Hamiltonians}.
Equivalently, we can assign an $\sigma^x$, $\sigma^y$, or $\sigma^z$ basis to each qubit, thus forming a basis for the entire Hilbert space, and then require the Hamiltonian to be diagonal in this basis.

The effective Hamiltonian from Hamiltonian reshaping takes the form
\begin{equation}
    \label{eq:effective_hamiltonian_K_beta}
H_{\mathrm{eff}} = \sum_{P\in \mathcal{K}_\beta} \coef_P P,
\end{equation}
as shown in \eqref{eq:effective_hamiltonian_terms}. The set $\mathcal{K}_\beta$ of completely commuting Pauli terms is specified by the basis given through $\beta=(\beta_1,\beta_2,\cdots,\beta_n)$, as defined in \eqref{eq:random_unitaries_basis}. 
Based on the above, we can rewrite $H_{\mathrm{eff}}$ in another way:
\begin{equation}
    \label{eq:effective_hamiltonian_all_terms}
    H_{\mathrm{eff}} = \sum_{c\in\{0,1\}^n} \coef_c P_c,
\end{equation}
where
\begin{equation}
    P_c = \bigotimes_i (\sigma^{\beta_i})^{c_i},\quad c_i\in\{0,1\},
\end{equation}
the coefficient $\coef_c=\coef_P$ if $P_c=P$, and $\coef_c=0$ otherwise. 
We then collect all coefficients $\coef_c$ into a $2^n$-dimensional vector $\boldsymbol{\coef}_\beta$. Note that this vector contains at most $M$ non-zero entries.

The effective Hamiltonian $H_{\mathrm{eff}}$ has the nice property that its eigenvalues and eigenstates can be written down explicitly. For an $n$-bit string $b=(b_0,b_1,\cdots,b_n)$ we define
\begin{equation}
    \label{eq:bit_string_pauli_basis}
    \ket{b}_{\beta} = \bigotimes_{i=1}^n\ket{(-1)^{b_i},\beta_i}.
\end{equation}
The Pauli eigenstates $\ket{\pm 1,\beta}$ are defined as in \eqref{eq:pauli_eigenstates}.
Then the above $\ket{b}_{\beta}$ is an eigenstate of $H_{\mathrm{eff}}$. We can write down its corresponding eigenvalue, which is 
\begin{equation}
    \label{eq:eigenvalue_effective_hamiltonian}
    \eigen_b = \sum_{c\in\{0,1\}^n} \coef_c (-1)^{c\cdot b},
\end{equation}
where $c\cdot b=c_1b_1+ c_2b_2+\cdots+c_n b_n \pmod{2}$. 
Collecting all eigenvalues $\eigen_b$ into a $2^n$-dimensional vector $\boldsymbol{\eigen}$, we can then write down the relation between the eigenvalues and coefficients in a more compact form:
\begin{equation}
    \label{eq:hadamard_transform_eigenvalue_coef}
    \boldsymbol{\eigen} = \mathbf{H}\boldsymbol{\coef}_\beta,
\end{equation}
where $\mathbf{H}$ is the $2^n\times 2^n$ Hadamard matrix defined in Definition~\ref{defn:weight-$k$ Hadamard matrix}. Therefore we can reconstruct the coefficients of the Hamiltonian by
\begin{equation}
\label{eq:inverse_hadamard_transform}
    \boldsymbol{\coef}_\beta =\mathbf{H}^{-1}\boldsymbol{\eigen} =  \frac{1}{2^n} \mathbf{H}\boldsymbol{\eigen}.
\end{equation}
This approach has the drawback that we will need to estimate all $2^n$ of the eigenvalues. 
While techniques such as "bins and peelings" for Walsh-Hadamard transforms with sparsity in the transform domain \cite{Scheibler_2015,li2015sprightfastrobustframework} could theoretically reduce the computational cost, as used in \cite{harper2021fast,yu2023robust}, they cannot be used to achieve the Heisenberg-limited scaling and typically assume that the sparse support is randomly distributed across the entire vector space, rather than localized to the subspace with weight at most $k$.
We therefore need to consider an alternative approach in Section~\ref{sec:learning_a_commuting_ham}.

\subsection{The experimental setup}
\label{sec:experimental_setup}

As illustrated in Fig~\ref{fig:Illustration_of_the_experiment}, in an experiment we will start from an initial state 
\begin{equation}
\label{eq:experiment_initial_state}
    \ket{\phi_0} =\frac{1}{\sqrt{2}}(\ket{0}_{\beta}+\ket{b}_{\beta})= \frac{1}{\sqrt{2}}\left(\bigotimes_{i=1}^n\ket{1,\beta_i}+\bigotimes_{i=1}^n\ket{(-1)^{b_i},\beta_i}\right),
\end{equation}
where $\ket{b}_\beta$ is defined in \eqref{eq:bit_string_pauli_basis}, and the bit string $b\neq 0$.
This state can be prepared from a $|b|$-qubit GHZ state by applying $\sigma^x$ on some of the qubits. 
The GHZ state can be prepared in constant depth using measurements and feedback \cite{briegel2001persistent,LuLessaKimHsieh2022measurement}.
Because $b$ is uniformly sampled from all bit-strings of length $n$, the average size of the GHZ state that needs to be prepared is $n/2$.
We then let the system evolve while applying random Pauli operators with interval $\tau$ for time $t$ to perform Hamiltonian reshaping as discussed in \eqref{eq:hamiltonian_reshaping}. 
Because $\ket{0}_{\beta}$ and $\ket{b}_{\beta}$ are both eigenstates of the effective Hamiltonian $H_{\mathrm{eff}}$ corresponding to eigenvalues $\eigen_0$ and $\eigen_b$ (defined in \eqref{eq:eigenvalue_effective_hamiltonian}), as discussed in  Section~\ref{sec:completely_commuting_hamiltonians}, at time $t$ we will approximately obtain the state
\begin{equation}
\label{eq:exact_state_phase_estimation_experiment}
    \ket{\phi_t} = \frac{1}{\sqrt{2}}(e^{-i\eigen_0 t}\ket{0}_\beta+e^{-i\eigen_b t}\ket{b}_\beta).
\end{equation}
In the end, we then measure the observable 
\begin{equation}
\label{eq:observable_Xb}
    X^b_\beta = \bigotimes_{i=1}^n Q_i,
\end{equation}
where each $Q_i$ is chosen to be a Pauli operator such that
\begin{equation}
    \label{eq:Q_i_choice_X}
    Q_i\ket{1,\beta_i} = \ket{(-1)^{b_i},\beta_i},\quad Q_i\ket{(-1)^{b_i},\beta_i}=\ket{1,\beta_i}.
\end{equation}
Such $Q_i$ operators can be constructed as follows: in the case of $b_i=1$, if $\beta_i=z$ then $Q_i=\sigma^x$, and if $\beta_i=y$ or $x$ then $Q_i=\sigma^z$; in the case of $b_i=0$ then $Q_i=I$.
This observable therefore satisfies
\[
X^b_\beta \ket{0}_\beta = \ket{b_\beta},\quad X^b_\beta\ket{b_\beta}=\ket{0_\beta}.
\]
The expectation value will be
\begin{equation}
\label{eq:X_expectation_value}
    \bra{\phi_t}X^b_{\beta}\ket{\phi_t} = \cos((\eigen_b-\eigen_0)t).
\end{equation}
Note that in order to measure $X^b_{\beta}$ we only need to measure each individual $Q_i$ to obtain a $\pm 1$ outcome and then multiply them together.

Similarly, we can measure the observable 
\begin{equation}
\label{eq:observable_Yb}
    Y^b_\beta = \bigotimes_{i=1}^n Q_i,
\end{equation} 
where the $Q_i$ operators are chosen slightly differently. For the first $i$ for which $b_i=1$, we choose $Q_i$ so that
\begin{equation}
    \label{eq:Q_i_choice_Y}
    Q_i\ket{1,\beta_i} = i\ket{(-1)^{b_i},\beta_i},\quad Q_i\ket{(-1)^{b_i},\beta_i}=-i\ket{1,\beta_i}.
\end{equation}
Note that we can choose $Q_i=\sigma^y$ if $\beta_i=z$, $Q_i=\sigma^x$ if $\beta_i=y$, and $Q_i=-\sigma^y$ if $\beta_i=x$ to satisfy this requirement.
For all other $i$ we choose $Q_i$ to satisfy \eqref{eq:Q_i_choice_X}. The resulting operator $Y^b_{\beta}$ satisfies
\[
Y^b_\beta \ket{0}_\beta = i\ket{b_\beta},\quad Y^b_\beta\ket{b_\beta}=-i\ket{0_\beta}.
\]
Therefore the expectation value is
\begin{equation}
\label{eq:Y_expectation_value}
    \bra{\phi_t}Y^b_{\beta}\ket{\phi_t} = -\sin((\eigen_b-\eigen_0)t).
\end{equation}

We summarize the above experiment in the following definition
\begin{defn}[Phase estimation experiment]
    \label{defn:phase_estimation_experiment}
    We call the procedure below a $(\beta,b,t,\tau)$-phase estimation experiment:
    \begin{enumerate}
        \item Prepare the initial state $\frac{1}{\sqrt{2}}(\ket{0}_{\beta}+\ket{b}_{\beta})$ as in \eqref{eq:experiment_initial_state}.
        \item Let the system evolve for time $t$ while applying random Pauli operators from $\mathcal{K}_\beta$ (defined in \eqref{eq:random_unitaries_basis}) with interval $\tau$.
        \item Measure the observables $X^b_{\beta}$ or $Y^b_{\beta}$ (by measuring every qubit in either $\sigma^x$, $\sigma^y$, or $\sigma^z$ basis) as defined above to obtain a $\pm 1$ outcome.
    \end{enumerate}
\end{defn}

The goal of the above experiment is to estimate $\eigen_b-\eigen_0$, for which we also need to use the robust frequency estimation protocol introduced in Section~\ref{sec:robust_frequency_estimation}. 
Below we will discuss how this is done, and analyze the effect of the Hamiltonian reshaping error and the state preparation and measurement (SPAM) error. 

We model the SPAM error as follows: 
\begin{defn}
    \label{defn:SPAM_err}
    The preparation of the initial state involves an error channel $\mathcal{E}_{\mathrm{prep}}$ applied after the ideal state preparation channel, and the measurement involves an error channel $\mathcal{E}_{\mathrm{meas}}$  applied before the ideal measurement channel. 
    We assume that
    \[
    \|\mathcal{E}_{\mathrm{prep}}-\mathcal{I}\|_{\diamond}+\|\mathcal{E}_{\mathrm{meas}}-\mathcal{I}\|_{\diamond}\leq \epsilon_{\mathrm{SPAM}}.
    \]
\end{defn}

Through a $(\beta,b,t,\tau)$-phase estimation experiment, if we measure $X^b_{\beta}$ in the end, we will obtain a random variable $s^x(t)\in\{\pm 1\}$. If we had the exact state $\ket{\phi_t}$ as defined in \eqref{eq:exact_state_phase_estimation_experiment}, then we would have $\mathbb{E}[s^x(t)] = \cos((\eigen_b-\eigen_0)t)$. However, due to the Hamiltonian reshaping error and the SPAM error, we have
\begin{equation}
\label{eq:phase_estimation_experiment_expectation_deviation}
    |\mathbb{E}[s^x(t)]-\cos((\eigen_b-\eigen_0)t)|\leq \frac{4M^2 t^2}{r} + \epsilon_{\mathrm{SPAM}}.
\end{equation}

 Notice that the first term on the right-hand side comes from the Hamiltonian reshaping error, where we set $t=r\tau$ as in Theorem~\ref{thm:hamiltonian_reshaping}. Moreover the variance of $s^x(t)$ is at most $1$ because it can only take values $\pm 1$.
We assume that $\epsilon_{\mathrm{SPAM}}\leq 1/(3\sqrt{2})$, and choose $r$ to be $r= \mathcal{O}(M^2 t^2)$ so that 
\[
\frac{4M^2 t^2}{r} < \frac{1}{3\sqrt{2}}.
\]
Note that $\tau$ and $r$ are related through $\tau=t/r$.
Then we have
\[
|\mathbb{E}[s^x(t)]-\cos((\eigen_b-\eigen_0)t)| < \frac{2}{3\sqrt{2}}.
\]
We then take $54$ independent samples of $s^x(t)$ and average them, denoting the sample average by $X(t)$. 
By Chebyshev's inequality, this ensures
\[
\Pr[|X(t)-\mathbb{E}[s^x(t)]|\geq 1/(3\sqrt{2})]=\Pr[|X(t)-\mathbb{E}[X(t)]|\geq 1/(3\sqrt{2})]\leq \frac{1}{54\times 1/(3\sqrt{2})^2}=\frac{1}{3}.
\]
Therefore combining the above with \eqref{eq:phase_estimation_experiment_expectation_deviation} we have
\[
\Pr[|X(t)-\cos((\eigen_t-\eigen_0)t)|\geq 1/\sqrt{2}]\leq 1/3.
\]
This guarantees estimating $\cos((\eigen_t-\eigen_0))t$ to a constant $1/\sqrt{2}$ accuracy with at least $2/3$ probability, which gives us the $X(t)$ required in the robust frequency estimation protocol in Theorem~\ref{thm:robust_frequency_estimation}. The $Y(t)$ in Theorem~\ref{thm:robust_frequency_estimation} can be similarly obtained. 
We also note that because $|\eigen_b-\eigen_0|\leq 2M$, we can set $A=2M$ in Theorem~\ref{thm:robust_frequency_estimation}. Therefore we can state the following for the phase estimation experiment (using the notation of $\mathcal{K}_\beta$ as defined in \eqref{eq:random_unitaries_basis} and $\mathbb{P}_n^{(k)}$ for the set of Pauli operators on at most $k$ qubits as defined in \eqref{eq:k_body_pauli}):

\begin{thm}
    \label{thm:phase_estimation_experiment}
    We assume that the quantum system is evolving under a $k$-body Hamiltonian with $M$ terms (Definition~\ref{defn:k_body_ham}).
    With $N_{\mathrm{exp}}$ independent non-adaptive $(\beta,b_j,t_j,\tau_j)$-phase estimation experiments (Definition~\ref{defn:phase_estimation_experiment}), $j=1,2,\cdots,N_{\mathrm{exp}}$, with the SPAM error (Definition~\ref{defn:SPAM_err}) satisfying $\epsilon_{\mathrm{SPAM}}\leq 1/(3\sqrt{2})$, we can obtain an estimate $\hat{\theta}$ such that
    \[
    \Pr[|\hat{\theta}-(\eigen_b-\eigen_0)|\geq \eta] \leq q,
    \]
    where $\eigen_0$ and $\eigen_b$ (defined in \eqref{eq:eigenvalue_effective_hamiltonian}) are eigenvalues of the effective Hamiltonian $H_{\mathrm{eff}}$ defined in \eqref{eq:effective_hamiltonian_all_terms}.
    In the above $N_{\mathrm{exp}}$, $\{t_j\}$, and $\{\tau_j\}$ satisfy
    \[
    N_{\mathrm{exp}} = \mathcal{O}(\log(M/\eta)(\log(1/q)+\log\log(M/\eta))),
    \]
    \begin{equation}
    \begin{aligned}
        T=\sum_{j=1}^{N_{\mathrm{exp}}} t_j=\mathcal{O}\left(\frac{1}{\eta}(\log(1/q)+\log\log(M/\eta))\right),\quad \max_j t_j=\mathcal{O}(1/\eta),
    \end{aligned}
    \end{equation}
    and $\tau_j = \Omega(1/(M^2 t_j))$. $D=\sum_{l=0}^k \binom{n}{l}=\Theta(n^k)$.
\end{thm}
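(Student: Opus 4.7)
The plan is to show that each phase estimation experiment, combined with a constant-size batching, produces random variables $X(t)$ and $Y(t)$ satisfying the hypothesis of the robust frequency estimation result (Theorem~\ref{thm:robust_frequency_estimation}), and then simply invoke that theorem with $A=2M$ (since any eigenvalue gap of $H_{\mathrm{eff}}$ satisfies $|\lambda_b-\lambda_0|\le 2\|H_{\mathrm{eff}}\|\le 2M$ because $|\mu_P|\le 1$ and at most $M$ terms are present).

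The first step is to bound the bias of a single outcome. For a $(\beta,b,t,\tau)$-phase estimation experiment with $r=t/\tau$ reshaping segments, let $\mathcal{U}$ denote the induced channel and $\mathcal{V}$ the ideal evolution under $H_{\mathrm{eff}}$. Using the triangle inequality on $\mathcal{E}_{\mathrm{meas}}\circ\mathcal{U}\circ\mathcal{E}_{\mathrm{prep}}$ versus the ideal $\mathcal{V}$, together with Theorem~\ref{thm:hamiltonian_reshaping} and Definition~\ref{defn:SPAM_err}, the diamond distance between the actual experimental channel and the ideal one is at most $\tfrac{4M^2 t^2}{r}+\epsilon_{\mathrm{SPAM}}$. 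Since $X^b_\beta$ and $Y^b_\beta$ are $\pm 1$ observables and \eqref{eq:X_expectation_value}--\eqref{eq:Y_expectation_value} give the ideal expectation values, this translates via the operational meaning of the diamond norm directly into
\[
|\mathbb{E}[s^x(t)]-\cos((\lambda_b-\lambda_0)t)|\le \frac{4M^2 t^2}{r}+\epsilon_{\mathrm{SPAM}},
\]
and analogously for $s^y(t)$ against $-\sin((\lambda_b-\lambda_0)t)$.

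The second step is to choose $r$ so that the reshaping error is harmless, setting $r=\Theta(M^2 t^2)$ (equivalently $\tau=\Theta(1/(M^2 t))$) so that $\tfrac{4M^2 t^2}{r}<\tfrac{1}{3\sqrt2}$. Combined with the assumption $\epsilon_{\mathrm{SPAM}}\le 1/(3\sqrt2)$, the total bias is at most $\tfrac{2}{3\sqrt 2}$. Averaging $54$ independent copies of $s^x(t)$ to form $X(t)$ and applying Chebyshev's inequality (variance $\le 1$) gives $\Pr[|X(t)-\mathbb{E}[s^x(t)]|\ge 1/(3\sqrt2)]\le 1/3$, hence $\Pr[|X(t)-\cos((\lambda_b-\lambda_0)t)|\ge 1/\sqrt 2]\le 1/3$, which is the hypothesis required by Theorem~\ref{thm:robust_frequency_estimation}; the same argument yields the needed $Y(t)$.

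The final step is to feed these $X(t)$ and $Y(t)$ into Theorem~\ref{thm:robust_frequency_estimation} with $\theta=\lambda_b-\lambda_0$, $A=2M$, $\epsilon\leftarrow\eta$, and failure parameter $q$. This produces $\hat\theta$ with $\Pr[|\hat\theta-(\lambda_b-\lambda_0)|\ge\eta]\le q$ using $K=\mathcal{O}(\log(M/\eta)(\log(1/q)+\log\log(M/\eta)))$ values of $t_j$, total time $\sum_j t_j=\mathcal{O}((1/\eta)(\log(1/q)+\log\log(M/\eta)))$, and $\max_j t_j=\mathcal{O}(1/\eta)$. Multiplying $K$ by the constant batch size $54\cdot 2$ absorbs into the $\mathcal{O}(\cdot)$ and yields the stated $N_{\mathrm{exp}}$; the choice $\tau_j=\Theta(1/(M^2 t_j))$ meets the $\Omega(1/(M^2 t_j))$ bound on the reshaping step. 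The only mildly delicate point is the first step, namely converting the diamond-norm bound on the channel into a bound on the bias of a single bounded-observable outcome; everything else is bookkeeping plus direct invocation of the two theorems already established.
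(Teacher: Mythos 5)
Your proposal is correct and follows essentially the same route as the paper: bound the single-outcome bias by $\frac{4M^2t^2}{r}+\epsilon_{\mathrm{SPAM}}$ using Theorem~\ref{thm:hamiltonian_reshaping} and the SPAM assumption, choose $r=\Theta(M^2t^2)$ so the reshaping term is below $\frac{1}{3\sqrt 2}$, batch $54$ samples with Chebyshev to get $X(t),Y(t)$ meeting the hypotheses of Theorem~\ref{thm:robust_frequency_estimation}, and invoke that theorem with $A=2M$. You are in fact slightly more explicit than the paper about the step from the diamond-norm bound $\|\mathcal{E}_{\mathrm{meas}}\circ\mathcal{U}\circ\mathcal{E}_{\mathrm{prep}}-\mathcal{V}\|_\diamond\le\frac{4M^2t^2}{r}+\epsilon_{\mathrm{SPAM}}$ (submultiplicativity under CPTP composition plus triangle inequality) to the bias bound on a $\pm1$ observable (H\"older with $\|X^b_\beta\|_\infty=1$), which the paper asserts without spelling out; that justification is sound and uses no factor-of-$1/2$ slack that would weaken the bound.
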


\subsection{Learning a completely commuting Hamiltonian}
\label{sec:learning_a_commuting_ham}

In the last section we have shown how to estimate eigenvalue differences $\eigen_b-\eigen_0$ through phase estimation experiments. 
Up to fixing a global phase, we have the eigenvalues $\eigen_b$ (we will discuss the global phase issue in more detail later in this section). 
These eigenvalues can be used to recover the Hamiltonian coefficients though an inverse Hadamard transform as described in \eqref{eq:inverse_hadamard_transform}, but a naive approach would not be efficient as there are $2^n$ different eigenvalues to estimate.
Note that we have not yet used the information that the coefficient vector $\boldsymbol{\coef}_\beta$ contains at most $M$ non-zero entries, and compressed sensing is a tool that utilizes exactly this information.

Before we apply compressed sensing, we first note that all the nonzero entries of $\boldsymbol{\coef}_\beta$ of the unknown $k$-body Hamiltonian must (by assumption) correspond to $k$-body Pauli terms, which already helps us reduce the size of the linear system. 
Let $\boldsymbol{\coef}_\beta^{(k)}$ be the $D$-dimensional vector consisting of the entries of $\boldsymbol{\coef}$ that correspond to $k$-body Pauli terms, where $D=\sum_{l=0}^k \binom{n}{l}$. 
Then we can write \eqref{eq:hadamard_transform_eigenvalue_coef} as
\[
\boldsymbol{\eigen} = \mathbf{H}^{(k)}\boldsymbol{\coef}_\beta^{(k)},
\]
where $\mathbf{H}^{(k)}$ is the $2^n\times D$ weight-$k$ Hadamard matrix defined in Definition~\ref{defn:weight-$k$ Hadamard matrix}.

As is commonly done in compressed sensing, we generate $\Gamma$ independent samples of the rows of $\mathbf{H}^{(k)}$, use those rows to form a matrix $\mathbf{A}$, and use the corresponding entries of $\boldsymbol{\eigen}$ to form a vector $\mathbf{y}$. Then we solve the $\ell^1$-minimization problem in \eqref{eq:l1_minimization}, where $\eta$ is the estimation error upper bound for each entry of $\mathbf{y}$. When $\Gamma$ satisfies \eqref{eq: number of samples}, then by Theorem~\ref{thm:compressed_sensing_main_thm} the solution of the $\ell^1$-minimization problem \eqref{eq:l1_minimization} is guaranteed to be close to the actual coefficients. More precisely, denoting the solution by $\hat{\boldsymbol{\coef}}_\beta^{(k)}$, we have $\|\hat{\boldsymbol{\coef}}^{(k)}_\beta-\boldsymbol{\coef}_\beta^{(k)}\|_p\leq C_2 M^{1/p-1/2}\eta$ for $1\leq p\leq 2$.

In the above we assumed that we can estimate eigenvalues $\eigen_b$ for any $b\in\{0,1\}^n$ to precision $\eta$. This is impossible given the ambiguity coming from the global phase.  However, we will explain why this does not affect our ability to learn the Hamiltonian. Note that the phase estimation experiment allows us to estimate $\eigen_b-\eigen_0$. We can therefore fix the global phase by first assuming $\eigen_0=0$ to obtain $\eigen_b$. This can be done by shifting the effective Hamiltonian, so that we are in fact estimating the eigenvalues of
\[
\Tilde{H}_{\mathrm{eff}} = \left(-\sum_{c\neq 0}\coef_c\right)I+\sum_{c\neq 0}\coef_c P_c.
\]
Therefore the above procedure guarantees that we can learn all the coefficients $\coef_c$ accurately. Note that these $\coef_c$ correspond to all Pauli terms that are $k$-body and are in $\mathcal{K}_{\beta}$. Therefore we have obtained a set of estimates $\hat{\coef}_P$ for all $P\in\mathcal{K}_{\beta}$ and $\mathrm{wt}(P)\leq k$ that satisfies
\begin{equation}
\label{eq:completely_commuting_ham_accuracy_guarantee}
    \left(\sum_{P\in \mathcal{K}_{\beta}:1\leq \mathrm{wt}(P)\leq k} |\hat{\coef}_P-\coef_P|^p\right)^{1/p}\leq C M^{1/p-1/2}\eta,
\end{equation}
for $1\leq p\leq 2$ and a universal constant $C$.

Multiplying the overhead of compressed sensing as described in \eqref{eq: number of samples} by the number of samples $N_{\mathrm{exp}}$ and total evolution time $T$ described in Theorem~\ref{thm:phase_estimation_experiment}, we arrive at the following theorem (using the notation of $\mathcal{K}_\beta$ as defined in \eqref{eq:random_unitaries_basis} and $\mathbb{P}_n^{(k)}$ for the set of Pauli operators on at most $k$ qubits as defined in \eqref{eq:k_body_pauli}):
\begin{thm}[Learning the coefficients of terms in $\mathcal{K}_\beta$]
    \label{thm:learning_completely_commuting_hams}
    We assume that the quantum system is evolving under a $k$-body Hamiltonian with $M$ terms (Definition~\ref{defn:k_body_ham}).
    With $N_{\mathrm{exp}}$ independent non-adaptive $(\beta,b_j,t_j,\tau_j)$-phase estimation experiments (Definition~\ref{defn:phase_estimation_experiment}), $j=1,2,\cdots,N_{\mathrm{exp}}$, with the SPAM error (Definition~\ref{defn:SPAM_err}) satisfying $\epsilon_{\mathrm{SPAM}}\leq 1/(3\sqrt{2})$, we can obtain estimates $\hat{\coef}_P$ for every $P\in \mathcal{K}_\beta\cap \mathbb{P}_n^{(k)}$, such that \eqref{eq:completely_commuting_ham_accuracy_guarantee} holds with probability at least $1-\delta_{\mathrm{CS}}-\Gamma q$.
    In the above $N_{\mathrm{exp}}$, $\{t_j\}$, and $\{\tau_j\}$ satisfy
    \[
    N_{\mathrm{exp}} = \mathcal{O}(\Gamma\log(M/\eta)(\log(1/q)+\log\log(M/\eta))),
    \]
    \begin{equation}
    \begin{aligned}
        T=\sum_{j=1}^{N_{\mathrm{exp}}} t_j=\mathcal{O}\left(\frac{\Gamma}{\eta}(\log(1/q)+\log\log(M/\eta))\right),\quad \max_j t_j=\mathcal{O}(1/\eta),
    \end{aligned}
    \end{equation}
    \begin{equation}
        \Gamma = \mathcal{O}\left(M(\ln^2(M)\ln(M\ln(D))\ln(D)+\ln(1/\delta_{\mathrm{CS}}))\right),
    \end{equation}
    and $\tau_j = \Omega(1/(M^2 t_j))$. $D=\sum_{l=0}^k \binom{n}{l}=\Theta(n^k)$.
\end{thm}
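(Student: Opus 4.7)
The plan is to combine the single-run phase estimation guarantee of Theorem~\ref{thm:phase_estimation_experiment} with the compressed sensing recovery guarantee of Theorem~\ref{thm:compressed_sensing_main_thm} via the linear relation between eigenvalues and coefficients established in Section~\ref{sec:completely_commuting_hamiltonians}. The high-level structure is: first, design a random experimental schedule corresponding to sampling rows of $\mathbf{H}^{(k)}$; second, feed noisy row-samples of the eigenvalue vector into the $\ell^1$-minimization \eqref{eq:l1_minimization}; third, translate the compressed sensing error bound into the desired coefficient recovery bound, tracking failure probabilities and resources along the way.

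First, I would set up the linear system. By Section~\ref{sec:completely_commuting_hamiltonians} we have $\boldsymbol{\eigen} = \mathbf{H}^{(k)} \boldsymbol{\coef}_\beta^{(k)}$, where $\boldsymbol{\coef}_\beta^{(k)} \in \mathbb{C}^D$ has at most $M$ non-zero entries because $H$ is $k$-body with $M$ terms. I would then draw $\Gamma$ bit-strings $b_1,\ldots,b_\Gamma$ independently and uniformly at random from $\{0,1\}^n$ with $|b_j|\leq k$ (equivalently, sample rows of $\mathbf{H}^{(k)}$ with replacement) to form the matrix $\mathbf{A}$ required by Theorem~\ref{thm:compressed_sensing_main_thm}. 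Choosing $\Gamma = \mathcal{O}(M(\ln^2(M)\ln(M\ln D)\ln D + \ln(1/\delta_{\mathrm{CS}})))$ ensures, with probability at least $1-\delta_{\mathrm{CS}}$, that $\mathbf{A}$ satisfies the hypotheses of that theorem.

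Next, for each sampled $b_j$ I would perform a $(\beta, b_j, \cdot, \cdot)$-phase estimation experiment as in Theorem~\ref{thm:phase_estimation_experiment} to obtain an estimate $\hat{\theta}_j$ of $\eigen_{b_j}-\eigen_0$ to precision $\eta$ with failure probability at most $q$. To handle the global phase ambiguity, I would adopt the convention $\eigen_0 = 0$; as noted in Section~\ref{sec:learning_a_commuting_ham}, this amounts to working with the shifted effective Hamiltonian $\tilde H_{\mathrm{eff}}$ whose non-identity coefficients coincide with those of $H_{\mathrm{eff}}$, so the recovered coefficients agree with the sought $\coef_P$ for all $P\in \mathcal{K}_\beta\cap\mathbb{P}_n^{(k)}\setminus\{I\}$. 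Collecting the $\hat{\theta}_j$ into a vector $\mathbf{y}$, we have $\mathbf{y}=\mathbf{A}\boldsymbol{\coef}_\beta^{(k)}+\mathbf{e}$ with $\|\mathbf{e}\|_\infty\leq \eta$ (on the success event), so $\|\mathbf{e}\|_2\leq \eta\sqrt{\Gamma}$ as required by \eqref{eq:l1_minimization}.

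Finally, I would invoke Theorem~\ref{thm:compressed_sensing_main_thm} on the $\ell^1$-minimization solution $\hat{\boldsymbol{\coef}}_\beta^{(k)}$. Since $\boldsymbol{\coef}_\beta^{(k)}$ is itself $M$-sparse, $\sigma_M(\boldsymbol{\coef}_\beta^{(k)})_1 = 0$, giving $\|\hat{\boldsymbol{\coef}}_\beta^{(k)}-\boldsymbol{\coef}_\beta^{(k)}\|_p \leq C_2 M^{1/p-1/2}\eta$, which is exactly \eqref{eq:completely_commuting_ham_accuracy_guarantee}. Two failure modes must be controlled: the compressed sensing RIP event, contributing $\delta_{\mathrm{CS}}$, and the union over $\Gamma$ independent phase-estimation experiments, each failing with probability at most $q$, contributing $\Gamma q$ by a union bound; together this yields the claimed $1-\delta_{\mathrm{CS}}-\Gamma q$ success probability. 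The per-experiment costs $N_{\mathrm{exp}}$, $T$, $\max_j t_j$, and $\tau_j$ follow directly by multiplying the $\Gamma$ independent invocations of Theorem~\ref{thm:phase_estimation_experiment}. The main subtlety to get right is the global phase accounting and confirming that the random row-sampling model used in compressed sensing matches the experimental schedule (bit-strings drawn uniformly from the support of rows of $\mathbf{H}^{(k)}$); everything else is bookkeeping.
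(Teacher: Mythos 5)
Your proposal follows the same overall structure as the paper's argument: sample rows of the weight-$k$ Hadamard matrix, run phase estimation to get noisy eigenvalue differences, fix the global phase by setting $\eigen_0=0$, solve the $\ell^1$-minimization, invoke the compressed-sensing error bound with $\sigma_M(\boldsymbol{\coef}_\beta^{(k)})_1=0$, and union-bound the two failure modes. The resource counting (multiplying the per-experiment costs from Theorem~\ref{thm:phase_estimation_experiment} by $\Gamma$) is also correct. However, there is one genuine error in the sampling scheme that would break the proof if carried through literally.

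You write that you would ``draw $\Gamma$ bit-strings $b_1,\ldots,b_\Gamma$ independently and uniformly at random from $\{0,1\}^n$ with $|b_j|\leq k$ (equivalently, sample rows of $\mathbf{H}^{(k)}$ with replacement).'' These two things are not equivalent. In Definition~\ref{defn:weight-$k$ Hadamard matrix}, the weight-$\leq k$ restriction is applied to the \emph{columns} of the full $2^n\times 2^n$ Walsh-Hadamard matrix, because the columns index the Pauli-coefficient bit-strings $c$ which are supported on $\leq k$ qubits. The \emph{rows} of $\mathbf{H}^{(k)}$ are still indexed by all $2^n$ eigenvector bit-strings $b$, and Theorem~\ref{thm:compressed_sensing_main_thm} requires that $\mathbf{A}$ be formed by sampling uniformly from all $2^n$ rows. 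If you restrict the row samples to $|b_j|\leq k$, you are sampling from a $D\times D$ submatrix whose columns are no longer orthonormal under the induced measure, so the bounded-orthonormal-system framework and hence the RIP guarantee do not apply, and the $\ell^1$-minimization error bound is no longer justified. Physically, this also means the phase estimation experiments must prepare GHZ states whose size is uniformly distributed in $\{0,\ldots,n\}$ (average $n/2$), as the paper notes in Section~\ref{sec:experimental_setup}, rather than only $k$-qubit GHZ states. Once this is corrected (sample $b_j$ uniformly from all of $\{0,1\}^n$), the rest of your argument goes through and matches the paper's proof.
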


The success probability in the theorem comes from considering all possible ways the protocol may fail and then taking the union bound: it may fail because the sub-sampled matrix in compressed sensing does not have the RIP property, which contributes a failure probability of $\delta_{\mathrm{CS}}$, and it may fail because any one of the eigenvalue estimations is not accurate enough, which contributes a failure probability of $\Gamma q$. Note that the failure probability can be made small with little overhead, as the dependence on $1/q$ and $1/\delta_{\mathrm{CS}}$ are both logarithmic.

\subsection{Randomized basis selection}
\label{sec:randomized_basis_selection}

In the previous section we have discussed how to learn all the coefficients corresponding to Hamiltonian terms contained in the set $\mathcal{K}_\beta$ (defined in \eqref{eq:random_unitaries_basis}). In this section, we will focus on the way to choose a set of $\mathcal{K}_\beta$ so that all the terms in the Hamiltonians are covered with large probability. We will show that uniformly randomly choosing $\mathcal{K}_\beta$ is a good strategy.

To see this, we observe that by uniformly randomly sampling a $\beta\in\{x,y,z\}^{n}$, for a fixed Pauli operator $P\in\mathbb{P}_n$, 
\[
\Pr[P\in\mathcal{K}_{\beta}] = 3^{-\operatorname{wt}(P)}\geq 3^{-k}.
\]
Therefore, each time we learn a randomly sample $\beta$, there is at least $3^{-k}$ probability that we will learn the coefficient $\coef_P$ for $\mathrm{wt}(P)\leq k$. 
If we generate $L$ independent samples of $\beta$, then the probability of a term $P$ being not covered in any of the $L$ instances is at most $(1-3^{-k})^L$.
By the union bound, to ensure that all the $M$ terms are included in at least one of the $L$ instances with probability at least $1-\delta_{\mathrm{basis}}$, we need
\begin{equation}
    M(1-3^{-k})^L \leq M \exp\bigl(3^{-k}L) \leq \delta_{\mathrm{basis}}.
\end{equation}
It is thus sufficient to choose
$L = 3^k\log(M/\delta_{\mathrm{basis}})$.
Taking this overhead $L$ into account, we then arrive at our main result (using the notation of $\mathcal{K}_\beta$ as defined in \eqref{eq:random_unitaries_basis} and $\mathbb{P}_n^{(k)}$ for the set of Pauli operators on at most $k$ qubits as defined in \eqref{eq:k_body_pauli}):
\begin{thm}[Learning a $k$-body Hamiltonian containing $M$ terms]
    \label{thm:ham_learning_upper_bound}
     We assume that the quantum system is evolving under a $k$-body Hamiltonian with $M$ terms (Definition~\ref{defn:k_body_ham}).
    With $N_{\mathrm{exp}}$ independent non-adaptive $(\beta,b_j,t_j,\tau)$-phase estimation experiments (Definition~\ref{defn:phase_estimation_experiment}), $j=1,2,\cdots,N_{\mathrm{exp}}$, with the SPAM error (Definition~\ref{defn:SPAM_err}) satisfying $\epsilon_{\mathrm{SPAM}}\leq 1/(3\sqrt{2})$, we can obtain estimates $\hat{\coef}_P$ for every $P\in \mathbb{P}_n^{(k)}$ such that, with probability at least $1-\delta$
    \begin{equation}
    \label{eq:p_norm_err_bound}
        \left(\sum_{P\in\mathbb{P}_n^{(k)}\setminus\{I\}} |\hat{\coef}_P-\coef_P|^p \right)^{1/p}\leq \epsilon,
    \end{equation}
    for $1\leq p\leq 2$.
    In the above $N_{\mathrm{exp}}$, $\{t_j\}$, and $\tau$ satisfy
    \[
    N_{\mathrm{exp}} = \widetilde{\mathcal{O}}(3^k M),
    \]
    \[
    T = \sum_j t_j =\widetilde{\mathcal{O}}\left(\frac{9^k M^{1/p+1/2}}{\epsilon}\right),
    \]
    and $\tau = \Omega(3^{-k}\epsilon/(M^{1/p+3/2}\log(M/\delta)))$.
\end{thm}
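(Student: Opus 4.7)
The plan is to combine Theorem~\ref{thm:learning_completely_commuting_hams} with the randomized basis selection argument outlined in Section~\ref{sec:randomized_basis_selection}. First I would draw $L = \lceil 3^k \ln(3M/\delta) \rceil$ independent uniform samples $\beta^{(1)},\ldots,\beta^{(L)}$ from $\{x,y,z\}^n$. By the argument already given in Section~\ref{sec:randomized_basis_selection}, with probability at least $1-\delta/3$ every Pauli term of $H$ is contained in at least one set $\mathcal{K}_{\beta^{(\ell)}}$; call this event $\mathcal{E}_{\mathrm{cover}}$. Then for each $\ell$ I would apply Theorem~\ref{thm:learning_completely_commuting_hams} with per-basis precision $\epsilon_1 = \epsilon L^{-1/p}$ and with its confidence parameters $\delta_{\mathrm{CS}}, q$ chosen so that the total per-basis failure probability is at most $\delta/(3L)$, yielding estimates $\hat{\coef}_P^{(\ell)}$ for every $P\in\mathcal{K}_{\beta^{(\ell)}}\cap\mathbb{P}_n^{(k)}$ whose $\ell^p$-error over that basis is at most $\epsilon_1$.

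The final estimator is assembled by setting $\hat{\coef}_P = 0$ for any $P\in\mathbb{P}_n^{(k)}\setminus\{I\}$ not contained in any $\mathcal{K}_{\beta^{(\ell)}}$, and otherwise taking $\hat{\coef}_P$ to be any one of the estimates $\hat{\coef}_P^{(\ell)}$ from a basis containing $P$. On $\mathcal{E}_{\mathrm{cover}}$ the true coefficient of every uncovered $P$ equals zero, so those terms contribute nothing; the remaining error can be grouped by basis and bounded by
\[
\sum_{P\in\mathbb{P}_n^{(k)}\setminus\{I\}} |\hat{\coef}_P-\coef_P|^p \;\leq\; \sum_{\ell=1}^L \sum_{P\in \mathcal{K}_{\beta^{(\ell)}}\cap\mathbb{P}_n^{(k)}\setminus\{I\}} |\hat{\coef}_P^{(\ell)}-\coef_P|^p \;\leq\; L\,\epsilon_1^{p} \;=\; \epsilon^{p}.
\]
By Theorem~\ref{thm:learning_completely_commuting_hams}, achieving per-basis precision $\epsilon_1$ requires eigenvalue-difference precision $\eta = \Theta(\epsilon_1 M^{1/2-1/p}) = \Theta(\epsilon\, L^{-1/p} M^{1/2-1/p})$, up to polylogarithmic factors absorbed into $\widetilde{\mathcal{O}}$.

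Substituting this $\eta$ into the per-basis cost $\widetilde{\mathcal{O}}(M)$ experiments and $\widetilde{\mathcal{O}}(M/\eta)$ total evolution time, and multiplying by $L$, gives $N_{\mathrm{exp}} = \widetilde{\mathcal{O}}(LM) = \widetilde{\mathcal{O}}(3^k M)$ and
\[
T \;=\; \widetilde{\mathcal{O}}\!\left(\frac{L^{1+1/p}\, M^{1/p+1/2}}{\epsilon}\right) \;=\; \widetilde{\mathcal{O}}\!\left(\frac{9^k M^{1/p+1/2}}{\epsilon}\right),
\]
using $L^{1+1/p}\le L^{2}$ together with $L=\widetilde{\mathcal{O}}(3^k)$. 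The bound on $\tau$ follows from the per-basis constraint $\tau = \Omega(1/(M^{2} t_{\max}))$ with $t_{\max}=\Theta(1/\eta)$, which yields $\tau = \Omega(3^{-k}\epsilon/(M^{1/p+3/2}\log(M/\delta)))$ after absorbing the $L^{-1/p}$ into a cruder $3^{-k}$ factor. The main obstacle, such as it is, is bookkeeping: aligning the three sources of failure (coverage of all terms, RIP of each subsampled Hadamard matrix, and accuracy of each individual robust frequency estimate) through appropriate union bounds, and verifying that the per-basis $\ell^p$-accounting cleanly decouples so that the $L^{1/p}$ basis overhead is the only new factor introduced beyond Theorem~\ref{thm:learning_completely_commuting_hams}. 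A minor subtlety is that Pauli terms covered by multiple bases receive several possibly inconsistent estimates; arbitrarily selecting one is sufficient because each is individually controlled by its per-basis guarantee.
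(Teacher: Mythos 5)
Your proposal is correct and follows essentially the same route as the paper: draw $L=\widetilde{\mathcal{O}}(3^k)$ random bases, cover all $M$ terms with high probability, invoke Theorem~\ref{thm:learning_completely_commuting_hams} per basis, and union-bound the failure events (coverage, RIP, and individual frequency estimates). Your intermediate $\ell^p$-error accounting (grouping by basis to get an $L^{1/p}$ overhead) is actually slightly sharper than the $L$ factor used in Remark~\ref{rem:ham_learning_upper_bound_precise_statement}, but since $L^{1+1/p}\leq L^2=\widetilde{\mathcal{O}}(9^k)$ for $1\leq p\leq 2$, both derivations collapse to the same stated total evolution time and $\tau$ constraint.
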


In the above, we choose $\tau$ to be the same for all experiments for simplicity. 
One can also choose $\tau$ differently for each experiment to reduce the total number of Pauli operators that are needed, as is done in the remark below. 
We are primarily concerned with the cases with $p=1$ and $p=2$, which have specific operational interpretations that we will discuss in Section~\ref{sec:operational_interpretation}, but our result holds for all $1\leq p\leq 2$.

\begin{rem}
    \label{rem:ham_learning_upper_bound_precise_statement}
    In the above we have used the $\tilde{O}$-notation to hide polylogarithmic factors. Here we provide a more precise statement. We constructed a protocol that with probability at least $1-L\delta_{\mathrm{CS}}-L\Gamma q-\delta_{\mathrm{basis}}$, can obtain estimates $\hat{\coef}_P$ for $P\in \mathbb{P}_n^{(k)}$ such that
    \[
    \left(\sum_{P\in\mathbb{P}^{(k)}_n\setminus\{I\}} |\hat{\coef}_P-\coef_P|^p \right)^{1/p}\leq CL M^{1/p-1/2}\eta,
    \]
    for $1\leq p\leq 2$ and a positive universal constant $C$. 
    This protocol uses
    \[
    N_{\mathrm{exp}} = \mathcal{O}(L\Gamma\log(M/\eta)(\log(1/q)+\log\log(M/\eta))),
    \]
    experiments and 
    \begin{equation}
    \begin{aligned}
        T=\sum_{j=1}^{N_{\mathrm{exp}}} t_j=\mathcal{O}\left(\frac{L\Gamma}{\eta}(\log(1/q)+\log\log(M/\eta))\right),\quad \max_j t_j=\mathcal{O}(1/\eta),
    \end{aligned}
    \end{equation}
    total evolution time.
    In the above $\{t_j\}$, $\{\tau_j\}$, $\Gamma$, and $L$ satisfy
    \begin{equation}
        \Gamma = \mathcal{O}\left(M(\ln^2(M)\ln(M\ln(D))\ln(D)+\ln(1/\delta_{\mathrm{CS}}))\right),
    \end{equation}
    \begin{equation}
        L = \mathcal{O}\left(3^k\log(M/\delta_{\mathrm{basis}})\right),
    \end{equation}
    and $\tau_j = \Omega(1/(M^2 t_j))$. $D=\sum_{l=0}^k \binom{n}{l}=\Theta(n^k)$.
\end{rem}

\subsection{Robustness to modeling errors}
\label{sec:robustness_to_modeling_errs}

In Theorem~\ref{thm:compressed_sensing_main_thm} we have shown that the Hamiltonian learning protocol is robust against SPAM error. 
In this section we will discuss the robustness against modeling errors. 
More precisely, we will ask two questions: 
(1) What if the Hamiltonian contains more than $M$ terms? 
(2) What if the terms in the Hamiltonian are not exactly $k$-body? We will show that our learning protocol can still work if these two types of modeling errors occur. 

The robustness against the first type of modeling error, i.e., that the Hamiltonian contains more than $M$ terms, but all terms are still $k$-body, is a direct consequence of Theorem~\ref{thm:compressed_sensing_main_thm}, in which an additional error term $\sigma_M(\mathbf{x})_1$ (defined in \eqref{eq:best_s_sparse_approx_err}) is included to account for the error of approximating a potentially non-sparse vector with an $M$-sparse one. In the context of Theorem~\ref{thm:compressed_sensing_main_thm}, this means an extra term is included in the estimation error for the coefficients, which gives us
\begin{equation}
    \left(\sum_{P\in\mathbb{P}_n^{(k)}\setminus\{I\}} |\hat{\coef}_P-\coef_P|^p \right)^{1/p}\leq \frac{C 3^k\sigma_M(\boldsymbol{\coef})_1\log(M/\delta)}{M^{1-1/p}}+\epsilon,
\end{equation}
for some universal constant $C$, where $\epsilon$ is the error bound without modeling errors as defined in \eqref{eq:p_norm_err_bound}, and $\boldsymbol{\coef}$ is the vector consisting of $\coef_P$ for all $P\in\mathbb{P}_n\setminus\{I\}$. Here we have used the fact that the best $M$-sparse approximation error for terms in $\mathcal{K}_\beta$ must be smaller than or equal to that for the whole coefficient vector. From the above we can see that if the Hamiltonian can be well-approximated by one with $M$ $k$-body terms, then we can still obtain an accurate estimate for its leading coefficients.

For the second type of modeling error, we will first focus on the procedure of learning terms in $\mathcal{K}_\beta$ described in Section~\ref{sec:learning_a_commuting_ham}. The coefficients for all terms in $\mathcal{K}_\beta$ are collected into a vector $\boldsymbol{\coef}_\beta$. Those that are also $k$-body are collected into $\boldsymbol{\coef}_\beta^{(k)}$, while the rest form a vector $\bar{\boldsymbol{\coef}}_\beta^{(k)}$. Here $\boldsymbol{\coef}_\beta$ contains $D=\sum_{l=0}^k\binom{n}{l}$ entries.

We denote by $\widetilde{\mathbf{A}}$ the submatrix of the original Wash-Hadamard matrix $\mathbf{H}$ consisting of the $\Gamma$ rows corresponding to the randomly sampled rows of $A$ in Theorem~\ref{thm:compressed_sensing_main_thm}. We can write $\widetilde{\mathbf{A}}$ as $\widetilde{\mathbf{A}}=[\mathbf{A}~\mathbf{B}]$, where $\mathbf{A}\in\mathbb{R}^{\Gamma\times D}$ is the submatrix corespond to rows of the weight-$k$ Hadamard matrix as defined in Definition~\ref{defn:weight-$k$ Hadamard matrix}, and $\mathbf{B}\in\mathbb{R}^{\Gamma\times (2^n-D)}$. Then the eigenvalue estimates $\mathbf{y}$ and the coefficients are related through
\begin{equation}
    \mathbf{y}  = \widetilde{\mathbf{A}}\boldsymbol{\coef}_\beta + \mathbf{e} = \mathbf{A}\boldsymbol{\coef}_\beta^{(k)} + \mathbf{B}\bar{\boldsymbol{\coef}}_\beta^{(k)} +  \mathbf{e},
\end{equation}
where $\mathbf{e}$ is the vector consisting of errors on each eigenvalue estimate, and using the phase estimation experiments described in Section~\ref{sec:experimental_setup} we can ensure that each entry of $\mathbf{e}$ is at most $\eta$ in absolute value, thus ensuring $\|\mathbf{e}\|\leq \sqrt{\Gamma}\eta$.
We therefore have
\begin{equation}
    \|\mathbf{y}-\mathbf{A}\boldsymbol{\coef}_\beta^{(k)}\|_2\leq \|\mathbf{B}\bar{\boldsymbol{\coef}}_\beta^{(k)}\|_2 + \sqrt{\Gamma}\eta.
\end{equation}
Since each entry of $\mathbf{B}$ is $\pm 1$, we can prove that $\|\mathbf{B}\mathbf{w}\|_2\leq \sqrt{\Gamma}\|\mathbf{w}\|_1$ for any $(2^n-D)$-dimensional vector $\mathbf{w}$, and therefore
\begin{equation}
    \|\mathbf{y}-\mathbf{A}\boldsymbol{\coef}_\beta^{(k)}\|_2\leq \sqrt{\Gamma}(\eta+\|\mathbf{\bar{\boldsymbol{\coef}}_\beta^{(k)}}\|_1).
\end{equation}
This motivates us to solve a modified $\ell^1$-minimization problem
\begin{equation}
    \underset{\mathbf{z}\in\mathbb{C}^{D}}{\mathrm{minimize}}\|\mathbf{z}\|_1\quad \text{subject to }\|\mathbf{y}-\mathbf{A}\mathbf{z}\|_2\leq \sqrt{\Gamma}\widetilde{\eta},
\end{equation}
where $\widetilde{\eta}$ is any number such that $\widetilde{\eta}\geq  \eta+\|\mathbf{\bar{\boldsymbol{\coef}}_\beta^{(k)}}\|_1$. Solving this $\ell^1$-minimization problem yields a solution, which we denote by $\hat{\boldsymbol{\coef}}_\beta^{(k)}$. By Theorem~\ref{thm:compressed_sensing_main_thm} we have
\begin{equation}
    \|\hat{\boldsymbol{\coef}}_\beta^{(k)}-\hat{\boldsymbol{\coef}}_\beta\|_p \leq \frac{C_1}{M^{1-1/p}}\sigma_M(\boldsymbol{\coef}_\beta^{(k)})_1 + C_2 M^{1/p-1/2}\widetilde{\eta}.
\end{equation}
We let $\epsilon^{(k)}$ be a known upper bound of $\|\bar{\boldsymbol{\coef}}_\beta^{(k)}\|_1$, and considering all $3^k\log(M/\delta)$ randomly sampled indices $\beta$, we can therefore produce estimates $\hat{\coef}_P$ for all $k$-body coefficients satisfying
\begin{equation}
    \left(\sum_{P\in\mathbb{P}_n^{(k)}\setminus\{I\}} |\hat{\coef}_P-\coef_P|^p \right)^{1/p}\leq \epsilon + \frac{C_1 3^k\sigma_M(\boldsymbol{\coef})_1\log(M/\delta)}{M^{1-1/p}} + \frac{C_2 3^k\epsilon^{(k)}}{M^{1/2-1/p}},
\end{equation}
for universal constants $C_1,C_2>0$, with the same total evolution time as given in Theorem~\ref{thm:ham_learning_upper_bound}. Note that this error bound include both types of modeling errors.

\subsection{Computational complexity}
\label{sec:Computational_complexity}

In this section we will show that the classical post-processing takes time that is polynomial in $n$ and $\log(1/\epsilon_1)$ where $n$ is the number of qubits and $\epsilon_1$ is the allowed $\ell^1$-error on the coefficients. 
The classical post-processing comprises solving $\mathrm{poly}(n,\log(1/\epsilon_1))$ $\ell^1$-minimization problems as described in \eqref{eq:l1_minimization}, which we restate here: 
\begin{equation*}
    \operatornamewithlimits{minimize}_{\mathbf{z} \in \mathbb{C}^D}\|\mathbf{z}\|_1 \quad \text { subject to }\|\mathbf{A z}-\mathbf{y}\|_2 \leq \eta \sqrt{\Gamma}.
\end{equation*}
Here $\mathbf{A}$ is a $\Gamma\times D$ submatrix of the Hadamard matrix. An important feature that we will use later is that all entries of $\mathbf{A}$ are at most $1$ in absolute value (in fact they are all $\pm 1$).
Moreover, the admissible set of this optimization problem is non-empty, since the exact coefficient vector $\mathbf{x}$ that we want to recover satisfies $\|\mathbf{A}\mathbf{x}-\mathbf{y}\|_2\leq \sqrt{\Gamma}\eta$, and $\mathbf{x}$ is $M$-sparse whose entries are bounded by $1$ in absolute value.
We only need to show that approximately solving each of these $\ell^1$-minimization problems takes time $\mathrm{poly}(n,\log(1/\epsilon_1))$.

We will first demonstrate how this problem can be reformulated as a semidefinite programming (SDP) problem to build intuition. 
Because certain SDPs are pathological (for example, the optimal solution is exponentially large), this does not immediately imply that the solution is efficient.  
We therefore provide a self-contained analysis of the computational complexity by examining its dual~\eqref{eq:dual_problem}, and we rigorously prove that the classical computation requires weak polynomial time.

\subsubsection{Reformulation to an SDP}
We first make the substitution 
\[
    \mathbf{z} = \mathbf{z}^+ - \mathbf{z}^-,\quad \mathbf{v} = \mathbf{A}(\mathbf{z}^+ - \mathbf{z}^-) - \mathbf{y},
\]
where $\mathbf{z}^{\pm} = (z_1^{\pm},z_2^{\pm},\cdots,z_D^{\pm})$, and they are related to the $\mathbf{z}$ in \eqref{eq:l1_minimization} via $\mathbf{z} = \mathbf{z}^+ - \mathbf{z}^-$ and $z^+_i = \max\{z_i,0\}$, $z^-_i = -\min\{0,z_i\}$. 
Then \eqref{eq:l1_minimization} can be written as:
\begin{equation}
\begin{split}
    &\text{minimize } \mathbf{z}^+ + \mathbf{z}^-\\
    &\text{subject to: } \|\mathbf{v}\|_2 \leq \eta \sqrt{\Gamma}, \mathbf{z}^+ \geq 0, \quad \mathbf{z}^- \geq 0.
\end{split}
\end{equation}
Introduce the block matrix
    \begin{equation}
    \mathbf{W} = 
    \begin{pmatrix}
        \eta^2\Gamma & \mathbf{v}^* \\
        \mathbf{v} & 1 
    \end{pmatrix},
\end{equation}
where the first entry is a scalar, and the lower right block is an identity matrix. 
Then $\mathbf{W} \geq 0$ is equivalent to $\|\mathbf{v}\|_2 \leq \eta \sqrt{\Gamma}$ by a standard inequality on such block matrices.

The optimization becomes
\begin{equation}
\begin{split}
\text{minimize } &\mathbf{z}_+ + \mathbf{z}_-\\
\text{subject to: } &\mathbf{z}_+ \geq 0, \quad \mathbf{z}_- \geq 0,\\
&\mathbf{v} = \mathbf{A}(\mathbf{z}_+ - \mathbf{z}_-) - \mathbf{y},\\
&\mathbf{W} \geq 0, \quad \mathbf{W} = 
\begin{pmatrix}
\eta^2\Gamma & \mathbf{v}^* \\
\mathbf{v} & 1
\end{pmatrix},
\end{split}
\end{equation}
where the objective is linear and all the constraints are positive semi-definite constraints, meaning that this is an SDP.

In practice, semidefinite programs can be solved efficiently. However, to rigorously prove that \eqref{eq:l1_minimization} is solvable in polynomial time, a more careful analysis of the optimization error and the properties of the constraints, as in the following section, is required within the context of the Hamiltonian learning scenario we are considering.

\subsubsection{Self-contained analysis}
\label{sec:Self-contained analysis}

For the $\ell^1$-minimization problem in \eqref{eq:l1_minimization}, we consider a related problem in the following form
\begin{equation}
\label{eq:dual_problem}
    \text{minimize} \ \|\mathbf{Az}-\mathbf{y}\|_2\quad \text {subject to }  \|\mathbf{z}\|_1\leq \xi.
\end{equation}
This is equivalent to a convex quadratic program, 
\begin{equation}
\label{eq:convex_quadratic_program}
    \begin{aligned}
    &\text{minimize} \ \frac{1}{2}\|\mathbf{A}(\mathbf{z}^+-\mathbf{z}^-)-\mathbf{y}\|^2_2,\\
    &\text{subject to} \ \sum_i (z_i^+ + z_i^-)\leq\xi,\ \mathbf{z}^+\geq \mathbf{0},\ \mathbf{z}^-\geq \mathbf{0},
\end{aligned}
\end{equation}
where $\mathbf{z}^{\pm} = (z_1^{\pm},z_2^{\pm},\cdots,z_D^{\pm})$, and they are related to the $\mathbf{z}$ in \eqref{eq:dual_problem} via $\mathbf{z} = \mathbf{z}^+ - \mathbf{z}^-$ as before.
This convex quadratic program can be solved in time $\mathrm{poly}(\Gamma,D,\log(1/\epsilon'))$ so that the $\|\mathbf{Az}-\mathbf{y}\|_2$ obtained is $\epsilon'$ away from the optimal value \cite{Ye1989extension}.

We will first study the properties of the optimization problem \eqref{eq:dual_problem} and discuss how it helps us solve the original $\ell^1$ minimization problem \eqref{eq:l1_minimization}.
We denote by $\mathbf{z}(\xi)$ the solution to \eqref{eq:dual_problem} (which may not be unique, but any one of the solutions can be used) and by $f(\xi)$ the corresponding value of the objective function, i.e.
\[
    f(\xi) = \|\mathbf{Az}(\xi)-\mathbf{y}\|_2.
\]

We note that when $\xi$ is large enough, the constraint in \eqref{eq:dual_problem} will have no effect, and the optimization problem is solved by any $\mathbf{z}$ that minimizes $\|\mathbf{Az}-\mathbf{y}\|_2$, which is equivalent to $\mathbf{A}^\dag\mathbf{A}\mathbf{z}=\mathbf{A}^\dag \mathbf{y}$. The smallest value for $\xi$ for which this happens is
\begin{equation}
\label{eq:f_xi_plateau}
    \xi_0 = \min\{\|\mathbf{z}\|_1:\mathbf{A}^\dag\mathbf{A}\mathbf{z}=\mathbf{A}^\dag \mathbf{y}\}.
\end{equation}
\begin{lem}
\label{lem:monotonicity_of_f}
    Let $\mathbf{A}^+$ be the Moore–Penrose pseudoinverse of $\mathbf{A}$.
    \begin{itemize}
        \item[(i)] If $\xi\geq \xi_0$, then $f(\xi) = \sqrt{\mathbf{y}^\dag\mathbf{y}-\mathbf{y}^\dag \mathbf{A}\mathbf{A}^+\mathbf{y}}$.
        \item[(ii)] If $0\leq \xi\leq \xi_0$, then $\|\mathbf{z}(\xi)\|_1=\xi$ and $f(\xi)$ is a strictly decreasing function of $\xi$. 
        \item[(iii)] $f$ is Lipschitz continuous with Lipschitz constant $\sqrt{\Gamma}$. More precisely, for any $0\leq \xi_1\leq \xi_2$, we have
        \[
        f(\xi_2)\leq f(\xi_1)\leq f(\xi_2) + \sqrt{\Gamma}(\xi_2-\xi_1).
        \]
    \end{itemize}
\end{lem}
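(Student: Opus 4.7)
The plan is to handle the three parts sequentially, with (i) pinning down the plateau behavior of $f$ beyond $\xi_0$, (ii) establishing strict descent on $[0,\xi_0]$, and (iii) combining these with an elementary column-norm bound. For part (i), I would observe that by the very definition of $\xi_0$ in \eqref{eq:f_xi_plateau}, whenever $\xi\ge \xi_0$ the feasible set $\{\mathbf{z}:\|\mathbf{z}\|_1\le \xi\}$ contains some solution of the normal equations $\mathbf{A}^\dag \mathbf{A}\mathbf{z} = \mathbf{A}^\dag \mathbf{y}$. Any such $\mathbf{z}$ minimizes $\|\mathbf{A}\mathbf{z}-\mathbf{y}\|_2$ over all of $\mathbb{C}^D$, so the constrained minimum equals the unconstrained one---namely the distance from $\mathbf{y}$ to the column space of $\mathbf{A}$. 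Using that $\mathbf{A}\mathbf{A}^+$ is the orthogonal projector onto that column space, so that $\mathbf{I}-\mathbf{A}\mathbf{A}^+$ is self-adjoint and idempotent, one computes $\|\mathbf{y}-\mathbf{A}\mathbf{A}^+\mathbf{y}\|_2^2 = \mathbf{y}^\dag(\mathbf{I}-\mathbf{A}\mathbf{A}^+)\mathbf{y} = \mathbf{y}^\dag\mathbf{y}-\mathbf{y}^\dag\mathbf{A}\mathbf{A}^+\mathbf{y}$, which is (i).

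For part (ii), first note that $f(\xi_2)\le f(\xi_1)$ whenever $\xi_1\le \xi_2$ follows immediately from nested feasibility; this non-strict monotonicity will also feed into (iii). To establish tightness of the constraint, I would argue by contradiction: if $\|\mathbf{z}(\xi)\|_1<\xi$ for some $\xi\le \xi_0$, then $\mathbf{z}(\xi)$ lies in the interior of the feasible set, so the convex function $\|\mathbf{A}\mathbf{z}-\mathbf{y}\|_2$ attains an unconstrained local---and hence global---minimum at $\mathbf{z}(\xi)$, yielding $\mathbf{A}^\dag \mathbf{A}\mathbf{z}(\xi)=\mathbf{A}^\dag \mathbf{y}$ and therefore $\|\mathbf{z}(\xi)\|_1\ge \xi_0$ by \eqref{eq:f_xi_plateau}, contradicting $\|\mathbf{z}(\xi)\|_1<\xi\le\xi_0$. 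Strict monotonicity on $[0,\xi_0]$ then follows by a similar contradiction: if $\xi_1<\xi_2\le\xi_0$ and $f(\xi_1)=f(\xi_2)$, then $\mathbf{z}(\xi_1)$ is feasible for and attains the optimum of the $\xi_2$-problem, so the tightness fact just proved forces $\|\mathbf{z}(\xi_1)\|_1=\xi_2$, contradicting $\|\mathbf{z}(\xi_1)\|_1=\xi_1<\xi_2$.

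For part (iii), the lower bound $f(\xi_2)\le f(\xi_1)$ was noted above. For the upper bound I would use a scaling trick: set $\lambda=\xi_1/\xi_2\in[0,1]$ (the case $\xi_2=0$ being trivial) and take $\mathbf{z}':=\lambda\mathbf{z}(\xi_2)$, which is feasible for the $\xi_1$-problem since $\|\mathbf{z}'\|_1\le \lambda\xi_2=\xi_1$. By the triangle inequality,
\[
 f(\xi_1)\le \|\mathbf{A}\mathbf{z}'-\mathbf{y}\|_2 \le \|\mathbf{A}\mathbf{z}(\xi_2)-\mathbf{y}\|_2+\|\mathbf{A}(\mathbf{z}(\xi_2)-\mathbf{z}')\|_2 = f(\xi_2)+(1-\lambda)\|\mathbf{A}\mathbf{z}(\xi_2)\|_2.
\]
The ingredient that delivers exactly the constant $\sqrt{\Gamma}$ promised by the lemma is that every entry of $\mathbf{A}$ has magnitude at most $1$, so each column $\mathbf{a}_j$ satisfies $\|\mathbf{a}_j\|_2\le \sqrt{\Gamma}$, whence $\|\mathbf{A}\mathbf{w}\|_2\le \sum_j |w_j|\,\|\mathbf{a}_j\|_2\le \sqrt{\Gamma}\|\mathbf{w}\|_1$ for any $\mathbf{w}$. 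Applying this with $\mathbf{w}=\mathbf{z}(\xi_2)$ together with $\|\mathbf{z}(\xi_2)\|_1\le \xi_2$ bounds $(1-\lambda)\|\mathbf{A}\mathbf{z}(\xi_2)\|_2$ by $\sqrt{\Gamma}(\xi_2-\xi_1)$, completing (iii). The only mildly delicate step in the whole argument is the active-constraint claim in (ii), which relies on the convex-analysis fact that a local minimum of a convex function attained in the interior of a convex set is automatically a global unconstrained minimum; everything else is nested feasibility, projection algebra, or the elementary column-norm bound that yields the Lipschitz constant exactly $\sqrt{\Gamma}$.
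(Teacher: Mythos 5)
Your proof is correct and follows essentially the same route as the paper's: orthogonal projection for (i), the interior-point gradient argument to establish constraint tightness for (ii), and the scaling trick $\mathbf{z}'=(\xi_1/\xi_2)\mathbf{z}(\xi_2)$ combined with the $\pm 1$-entry column bound for (iii). The only small departures are cosmetic: you invoke the just-proved tightness (rather than re-running the gradient argument) to finish strict monotonicity, and in (iii) you use $\|\mathbf{z}(\xi_2)\|_1\le\xi_2$ directly, avoiding the paper's preliminary reduction to $\xi_2\le\xi_0$.
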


\begin{proof}
    For (i), we note that the constraint in \eqref{eq:dual_problem} no longer has any effect and we are solving an unconstrained quadratic optimization problem. 
    Solving this unconstrained quadratic optimization problem readily yields the expression for $f(\xi)$, which does not depend on $\xi$.

    For (ii), if $\|\mathbf{z}(\xi)\|_1<\xi$, then there exists an open neighborhood $U$ of $\mathbf{z}(\xi)$ such that $U\subset \{\mathbf{z}:\|\mathbf{z}\|_1\leq \xi\}$. 
    Because $\mathbf{z}(\xi)$ minimizes $\|A\mathbf{z}-\mathbf{y}\|^2$ within the open neighborhood $U$, the gradient of this objective function must be $\mathbf{0}$ at $\mathbf{z}(\xi)$. 
    This implies $\mathbf{A}^\dag\mathbf{A}\mathbf{z}(\xi)=\mathbf{A}^\dag \mathbf{y}$. 
    By \eqref{eq:f_xi_plateau}, we have $\|\mathbf{z}(\xi)\|_1\geq \xi_0$. Therefore $\xi_0\leq \|\mathbf{z}(\xi)\|_1 < \xi$, which contradicts $\xi\leq \xi_0$. 
    This proves $\|\mathbf{z}(\xi)\|_1=\xi$.

    Next we will prove that $f(\xi)$ is strictly decreasing for $0\leq \xi\leq \xi_0$. 
    If for $\xi_1,\xi_2\in [0,\xi_0]$ we have $\xi_1<\xi_2$ and $f(\xi_1)\leq f(\xi_2)$, then $\|\mathbf{A}\mathbf{z}(\xi_1)-\mathbf{y}\|_2\leq \|\mathbf{A}\mathbf{z}(\xi_2)-\mathbf{y}\|_2$, which implies that $\mathbf{z}(\xi_1)$ is a solution to the optimization problem \eqref{eq:dual_problem} with $\xi=\xi_2$. 
    Note that $\mathbf{z}(\xi_1)$ is in the interior of the admissible set of the this optimization problem, and therefore the gradient argument in the preceding paragraph applies, giving us $\mathbf{A}^\dag\mathbf{A}\mathbf{z}(\xi_1)=\mathbf{A}^\dag \mathbf{y}$, which results in the same contradiction. 
    This proves that $f(\xi_1)>f(\xi_2)$.

    For (iii), we only need to prove that $f(\xi_1)\leq f(\xi_2) + \sqrt{\Gamma}(\xi_2-\xi_1)$ for $\xi_1\leq \xi_2\leq \xi_0$. 
    We define $\mathbf{z}' = (\xi_1/\xi_2)\mathbf{z}(\xi_2)$. 
    Then by (ii) we have $\|\mathbf{z}'-\mathbf{z}_2\|_1 = (1-\xi_1/\xi_2)\|\mathbf{z}(\xi_2)\|_1 = \xi_2-\xi_1$. 
    Also, $\|\mathbf{z}'\|=\xi_1$ and $\mathbf{z}'$ is therefore in the admissible set of \eqref{eq:dual_problem} with $\xi=\xi_1$. Consequently
    \begin{equation}
    \label{eq:perturb_z_dual_problem}
        f(\xi_1)\leq \|\mathbf{A}\mathbf{z}'-\mathbf{y}\|_2\leq \|\mathbf{A}\mathbf{z}(\xi_2)-\mathbf{y}\|_2 + \|\mathbf{A}\mathbf{z}(\xi_2)-\mathbf{A}\mathbf{z}'\|_2 = f(\xi_2) +  \|\mathbf{A}(\mathbf{z}(\xi_2)-\mathbf{z}')\|_2.
    \end{equation}
    Note that $\|\mathbf{A}(\mathbf{z}(\xi_2)-\mathbf{z}')\|_2\leq \|\mathbf{A}\|_{1\to 2}\|\mathbf{z}(\xi_2)-\mathbf{z}'\|_1\leq \|\mathbf{A}\|_{1\to 2}(\xi_2-\xi_1)$, where
    \[
    \|\mathbf{A}\|_{1\to 2} := \max_{\|\mathbf{z}\|_1\leq 1} \|\mathbf{A}\mathbf{z}\|_2.
    \]
    Because all entries of $\mathbf{A}$ are $\pm 1$, we therefore have $\|\mathbf{A}\|_{1\to 2}\leq \sqrt{\Gamma}$. 
    Hence $\|\mathbf{A}(\mathbf{z}(\xi_2)-\mathbf{z}')\|_2\leq \sqrt{\Gamma}(\xi_2-\xi_1)$. 
    This combined with \eqref{eq:perturb_z_dual_problem} then results in the desired inequality.
\end{proof}

We then show that the above properties of $f(\xi)$ allow us to solve \eqref{eq:l1_minimization} by finding the right $\xi$. 
We consider the following cases for \eqref{eq:l1_minimization}: 
In the first case, $\sqrt{\Gamma}\eta < \sqrt{\mathbf{y}^\dag\mathbf{y}-\mathbf{y}^\dag \mathbf{A}\mathbf{A}^+\mathbf{y}} = f(\xi_0)$. In this scenario, the admissible set is empty, but we do not need to worry about this scenario because in our setup the exact coefficient vector is always in the admissible set. 
In the second case, $\sqrt{\Gamma}\eta  = f(\xi_0)$. 
The inequality constraint is then equivalent to $\mathbf{A}^\dag\mathbf{A}\mathbf{z} = \mathbf{A}^\dag \mathbf{y}$, an equality constraint. 
Therefore \eqref{eq:l1_minimization} can be reduced to a linear program and solved in weak polynomial time. 
Therefore, the only case we need to examine closely is the third case, where $\sqrt{\Gamma}\eta  > f(\xi_0)$.

\begin{lem}
    \label{lem:existence_of_good_xi}
    We assume $\sqrt{\Gamma}\eta  > \sqrt{\mathbf{y}^\dag\mathbf{y}-\mathbf{y}^\dag \mathbf{A}\mathbf{A}^+\mathbf{y}}$.
    If there exists $\mathbf{x}$ such that $\|\mathbf{A}\mathbf{x}-\mathbf{y}\|_2\leq \sqrt{\Gamma}\eta$, and $\|\mathbf{y}\|_2>\sqrt{\Gamma}\eta$, then there exists a unique $\xi^*\in [0,\min\{\|\mathbf{x}\|_1,\xi_0\}]$ such that $f(\xi^*)=\sqrt{\Gamma}\eta$, and $z(\xi^*)$ is a solution of \eqref{eq:l1_minimization}. 
\end{lem}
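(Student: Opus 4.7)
The plan is to combine the three parts of Lemma~\ref{lem:monotonicity_of_f} with an intermediate value theorem argument. First I would identify the values of $f$ at the two endpoints of $[0,\xi_0]$. At $\xi=0$, the admissible set of \eqref{eq:dual_problem} consists only of $\mathbf{z}=\mathbf{0}$, so $f(0)=\|\mathbf{y}\|_2>\sqrt{\Gamma}\eta$ by hypothesis. At $\xi=\xi_0$, Lemma~\ref{lem:monotonicity_of_f}(i) gives $f(\xi_0)=\sqrt{\mathbf{y}^\dag\mathbf{y}-\mathbf{y}^\dag\mathbf{A}\mathbf{A}^+\mathbf{y}}<\sqrt{\Gamma}\eta$ by the other hypothesis. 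Since part (iii) of the same lemma shows that $f$ is Lipschitz (and hence continuous), the intermediate value theorem produces some $\xi^*\in(0,\xi_0)$ with $f(\xi^*)=\sqrt{\Gamma}\eta$; uniqueness of $\xi^*$ on $[0,\xi_0]$ follows immediately from the strict monotonicity in part (ii).

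Next I would verify that $\xi^*\leq\|\mathbf{x}\|_1$. Because $\mathbf{x}$ is admissible for \eqref{eq:dual_problem} with $\xi=\|\mathbf{x}\|_1$, the definition of $f$ and the feasibility of $\mathbf{x}$ give $f(\|\mathbf{x}\|_1)\leq\|\mathbf{A}\mathbf{x}-\mathbf{y}\|_2\leq\sqrt{\Gamma}\eta=f(\xi^*)$. If $\|\mathbf{x}\|_1\geq\xi_0$, then $\xi^*\leq\xi_0\leq\|\mathbf{x}\|_1$ is automatic; otherwise $\|\mathbf{x}\|_1$ lies inside $[0,\xi_0]$, where $f$ is strictly decreasing, so $f(\|\mathbf{x}\|_1)\leq f(\xi^*)$ forces $\|\mathbf{x}\|_1\geq\xi^*$. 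In either case $\xi^*\in[0,\min\{\|\mathbf{x}\|_1,\xi_0\}]$, as required.

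Third, I would show that $\mathbf{z}(\xi^*)$ solves the original $\ell^1$-minimization \eqref{eq:l1_minimization}. Feasibility is immediate: $\|\mathbf{A}\mathbf{z}(\xi^*)-\mathbf{y}\|_2=f(\xi^*)=\sqrt{\Gamma}\eta$. For optimality, suppose toward contradiction that some $\mathbf{z}'$ with $\|\mathbf{A}\mathbf{z}'-\mathbf{y}\|_2\leq\sqrt{\Gamma}\eta$ satisfies $\|\mathbf{z}'\|_1<\|\mathbf{z}(\xi^*)\|_1=\xi^*$, where the equality invokes Lemma~\ref{lem:monotonicity_of_f}(ii) (applicable because $\xi^*\leq\xi_0$). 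Then $\mathbf{z}'$ is admissible in \eqref{eq:dual_problem} at $\xi=\|\mathbf{z}'\|_1<\xi_0$, so $f(\|\mathbf{z}'\|_1)\leq\|\mathbf{A}\mathbf{z}'-\mathbf{y}\|_2\leq\sqrt{\Gamma}\eta=f(\xi^*)$. But strict monotonicity on $[0,\xi_0]$ demands $f(\|\mathbf{z}'\|_1)>f(\xi^*)$, a contradiction.

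The only conceptual step is correctly aligning the two hypotheses with the two endpoint values of $f$ so that the IVT applies with the right sign pattern; after that everything reduces to careful bookkeeping of which sub-intervals of $[0,\infty)$ contain $\xi^*$, $\|\mathbf{x}\|_1$, and $\|\mathbf{z}'\|_1$, so that the strict-decrease clause of Lemma~\ref{lem:monotonicity_of_f}(ii) may be invoked. I do not anticipate any substantial obstacle beyond this case tracking.
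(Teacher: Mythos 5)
Your argument is correct and takes essentially the same route as the paper: IVT plus strict monotonicity from Lemma~\ref{lem:monotonicity_of_f} for existence/uniqueness, then the bookkeeping $f(\|\mathbf{x}\|_1)\le\sqrt{\Gamma}\eta=f(\xi^*)$ for the endpoint bound, then feasibility plus a strict-monotonicity contradiction for optimality of $\mathbf{z}(\xi^*)$. The only difference is organizational: the paper defers your third step (that $\mathbf{z}(\xi^*)$ solves \eqref{eq:l1_minimization}) to the separate Lemma~\ref{lem:good_xi_helps_solve_l1_minimization}, whose proof matches yours.
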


\begin{proof}
    Because $f(\|\mathbf{x}\|_1)\leq \|\mathbf{Ax}-\mathbf{y}\|_2\leq \sqrt{\Gamma}\eta$, and $f(0)=\|\mathbf{y}\|_2$, we have 
    \[
    \max\{f(\|\mathbf{x}\|_1),f(\xi_0)\}\leq \sqrt{\Gamma}\eta <f(0),
    \] 
    where we have used Lemma~\ref{lem:monotonicity_of_f} (i) which states that $f(\xi_0)=\sqrt{\mathbf{y}^\dag\mathbf{y}-\mathbf{y}^\dag \mathbf{A}\mathbf{A}^+\mathbf{y}}$.
    Because $f$ is continuous and strictly decreasing in $[0,\xi_0]$, we therefore have a unique $\xi^*$ such that $f(\xi^*)=\sqrt{\Gamma}\eta$, and
    \[
    0<\xi^*\leq \min\{\|\mathbf{x}\|_1,\xi_0\}.
    \]
\end{proof}

\begin{algorithm}
\label{alg:l1_minimization}
    \begin{algorithmic}
        \REQUIRE $\mathbf{A}\in\mathbb{C}^{\Gamma\times D}, \mathbf{y}\in\mathbb{C}^\Gamma, M\in\mathbb{N}_+,\eta>0$, error tolerance $\nu>0$.
        \STATE $\xi_L\leftarrow 0$, $\xi_R\leftarrow M$;
        \WHILE{$\xi_R-\xi_L>2\nu$}
            \STATE Solve the convex quadratic program \eqref{eq:convex_quadratic_program} with $\xi=(\xi_L+\xi_R)/2$ to obtain the minimum $f(\xi)^2/2$;
            \IF{$f(\xi)>\sqrt{\Gamma} \eta$}
            \STATE $\xi_L\leftarrow(\xi_L+\xi_R)/2$;
            \ELSE
            \STATE $\xi_R\leftarrow(\xi_L+\xi_R)/2$;
            \ENDIF
        \ENDWHILE
        \STATE $\tilde{\xi}=(\xi_L+\xi_R)/2$;
        \STATE Solve the convex quadratic program \eqref{eq:convex_quadratic_program} with $\xi=\tilde{\xi}$ to obtain the solution $\tilde{\mathbf{x}}^{\sharp}$;
        \ENSURE $\tilde{\mathbf{x}}^{\sharp}$.
    \end{algorithmic}
    \caption{Solving the $\ell^1$-minimization problem \eqref{eq:l1_minimization}.}
\end{algorithm}

We then show that solving the optimization problem \eqref{eq:dual_problem} with the $\xi^*$ in Lemma~\ref{lem:existence_of_good_xi} yields the solution to the $\ell^1$-minimization problem \eqref{eq:l1_minimization}. 

\begin{lem}
    \label{lem:good_xi_helps_solve_l1_minimization}
    With the same assumptions as Lemma~\ref{lem:existence_of_good_xi}, $\mathbf{z}(\xi^*)$ solves the $\ell^1$-minimization problem~\eqref{eq:l1_minimization}.
\end{lem}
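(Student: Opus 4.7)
\textbf{Proof plan for Lemma~\ref{lem:good_xi_helps_solve_l1_minimization}.} The plan is to verify feasibility of $\mathbf{z}(\xi^*)$ in \eqref{eq:l1_minimization} and then to rule out any strictly better feasible point by invoking the strict monotonicity of $f$ established in Lemma~\ref{lem:monotonicity_of_f}. There should be no calculations beyond what the preceding lemmas already give; the content is essentially that $f$ provides a one-to-one correspondence between the $\ell^1$-ball radius $\xi$ and the attainable residual $\|\mathbf{A}\mathbf{z}-\mathbf{y}\|_2$ on the interval $[0,\xi_0]$, so pinning down the residual at $\sqrt{\Gamma}\eta$ automatically minimizes $\xi$.

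First I will check feasibility. By the very definition of $\mathbf{z}(\xi^*)$ as a minimizer in \eqref{eq:dual_problem} with parameter $\xi^*$, one has $\|\mathbf{A}\mathbf{z}(\xi^*)-\mathbf{y}\|_2 = f(\xi^*) = \sqrt{\Gamma}\eta$, so $\mathbf{z}(\xi^*)$ satisfies the constraint of \eqref{eq:l1_minimization}. Moreover Lemma~\ref{lem:existence_of_good_xi} ensures $\xi^* \in (0,\xi_0]$, and Lemma~\ref{lem:monotonicity_of_f}(ii) then yields $\|\mathbf{z}(\xi^*)\|_1 = \xi^*$.

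Next I will argue optimality by contradiction. Suppose there were a feasible $\mathbf{z}'$ for \eqref{eq:l1_minimization} with $\|\mathbf{z}'\|_1 < \|\mathbf{z}(\xi^*)\|_1 = \xi^*$. Set $\xi' := \|\mathbf{z}'\|_1 \in [0,\xi^*)$. Then $\mathbf{z}'$ lies in the admissible set of \eqref{eq:dual_problem} at parameter $\xi'$, so $f(\xi') \le \|\mathbf{A}\mathbf{z}'-\mathbf{y}\|_2 \le \sqrt{\Gamma}\eta = f(\xi^*)$. Since $\xi' < \xi^* \le \xi_0$, Lemma~\ref{lem:monotonicity_of_f}(ii) asserts that $f$ is strictly decreasing on $[0,\xi_0]$, hence $f(\xi') > f(\xi^*)$, a contradiction. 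Therefore $\mathbf{z}(\xi^*)$ attains the minimum of $\|\mathbf{z}\|_1$ over the feasible set of \eqref{eq:l1_minimization}, which is exactly what the lemma asserts.

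The only subtle point — and hence the ``main obstacle,'' though it is already handled by the preceding lemmas — is the strict monotonicity of $f$ on $[0,\xi_0]$; without it, a flat segment could allow a feasible point of smaller $\ell^1$ norm while still matching the residual, and the reduction from the two-sided inequality constraint $\|\mathbf{A}\mathbf{z}-\mathbf{y}\|_2 \le \sqrt{\Gamma}\eta$ to the equality $f(\xi^*) = \sqrt{\Gamma}\eta$ would fail. Lemma~\ref{lem:monotonicity_of_f}(ii), proved via the gradient/interior-point argument, is exactly what closes this gap, and together with Lemma~\ref{lem:existence_of_good_xi} the proof above goes through immediately.
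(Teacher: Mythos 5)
Your proof is correct and takes essentially the same approach as the paper: verify feasibility via $f(\xi^*)=\sqrt{\Gamma}\eta$, use Lemma~\ref{lem:monotonicity_of_f}(ii) to pin $\|\mathbf{z}(\xi^*)\|_1=\xi^*$, and derive a contradiction from a hypothetical feasible $\mathbf{z}'$ of smaller $\ell^1$ norm by strict monotonicity of $f$ on $[0,\xi_0]$.
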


\begin{proof}
    First note that because $f(\xi^*)=\sqrt{\Gamma}\eta$, $\mathbf{z}(\xi^*)$ is in the admissible set of \eqref{eq:l1_minimization}. 
    If there exists $\mathbf{z}'$ in the admissible set of \eqref{eq:l1_minimization} such that 
    \begin{equation}
    \label{eq:another_z_admissible_smaller_l1_norm}
        \|\mathbf{z}'\|_1 < \|\mathbf{z}(\xi^*)\|=\xi^*,
    \end{equation}
    where we have used $0<\xi^*\leq \xi_0$ and Lemma~\ref{lem:monotonicity_of_f} (ii) to show $\|\mathbf{z}(\xi^*)\|=\xi^*$,
    then 
    \[
    f(\|\mathbf{z}'\|_1)\leq \|\mathbf{A}\mathbf{z}'-\mathbf{y}\|\leq \sqrt{\Gamma}\eta = f(\xi^*),
    \]
    implying $\|\mathbf{z}'\|_1\geq \xi^*$ by Lemma~\ref{lem:monotonicity_of_f} (ii), which contradicts \eqref{eq:another_z_admissible_smaller_l1_norm}.
\end{proof}

Therefore solving the $\ell^1$-minimization problem \eqref{eq:l1_minimization} reduces to finding $\xi^*$, which can be done through a bisection procedure as described in Algorithm~\ref{alg:l1_minimization}. 
Each bisection step involves solving \eqref{eq:dual_problem} for a certain $\xi$, which is equivalent to the convex quadratic program \eqref{eq:convex_quadratic_program}. 
In the bisection process, we start from an interval $[0,M]$, in each iteration compute $f(\xi)$ where $\xi$ is the mid-point of the interval, thereby deciding whether $\xi^*$ is in the right half of the interval or the left half. 
This allows us to get a new interval with half the size but still contains $\xi^*$, thereby proceeding to the next iteration. 
When the interval size is $2\nu$ we can take the mid-point $\tilde{\xi}^*$ as the estimate for $\xi^*$, which ensures
\[
|\tilde{\xi}^*-\xi^*|\leq \nu.
\]
The total number of iterations needed is at most $\lceil \log_2(M/\nu)\rceil$.

However, our ultimate goal is to recover the original coefficient vector $\mathbf{x}$, and we need to know how small $\nu$ needs to be in order for the approximate solution $\tilde{\mathbf{x}}^\sharp$ (obtained by solving \eqref{eq:convex_quadratic_program} with $\xi=\tilde{\xi}^*$) to be close to $\mathbf{x}$. 
Next we will provide an answer to this question.

Note that solving \eqref{eq:convex_quadratic_program} with $\xi=\tilde{\xi}^*$ is equivalent to solving the $\ell^1$-minimization problem \eqref{eq:l1_minimization} but with $\eta$ replaced with some $\widetilde{\eta}$. 
This $\widetilde{\eta}$ can be identified by $f(\tilde{\xi}^*)=\sqrt{\Gamma}\widetilde{\eta}$. 
Therefore by the Lipschitz continuity of $f$ (Lemma~\ref{lem:monotonicity_of_f} (iii)) we have
\[
|\sqrt{\Gamma}\widetilde{\eta}-\sqrt{\Gamma}\eta| =|f(\tilde{\xi}^*)-f(\xi^*)|\leq \sqrt{\Gamma}|\tilde{\xi}^*-\xi^*|,
\]
which leads to
\[
|\widetilde{\eta}-\eta|\leq |\tilde{\xi}^*-\xi^*|\leq \nu.
\]
From the above we can see that in the end we obtain an approximate solution $\tilde{\mathbf{x}}^\sharp$ that corresponds to the exact solution of \eqref{eq:l1_minimization} but with $\eta$ replaced with some $\widetilde{\eta}$ satisfying $\widetilde{\eta}\leq \eta + \nu$. 
We can therefore use Theorem~\ref{thm:compressed_sensing_main_thm} to obtain the following guarantee about the closeness to $\mathbf{x}$:

\begin{thm}[Compressed sensing with approximate solution]
\label{thm:compressed_sensing_approx_solution}
    Let $D=\sum_{l=0}^k\binom{n}{l}$.
    Let $\mathbf{A} \in \mathbb{C}^{\Gamma \times D}$ be a matrix whose rows are independently randomly sampled from the rows of $\mathbf{H}^{(k)}$ with replacement.
    Let $\delta_{\mathrm{CS}}\in(0,1)$ and let $M>0$ be an integer. If
    \begin{equation}
        \Gamma \geq CM\max\{\ln^2(M)\ln(M\ln(D))\ln(D),\ln(1/\delta_{\mathrm{CS}})\},
    \end{equation}
    then, with probability at least $1-\delta_{\mathrm{CS}}$, the following statement holds for every $M$-spase $\mathbf{x} \in \mathbb{C}^D$ whose entries are bounded by $1$ in absolute value. Let noisy samples $\mathbf{y}=\mathbf{A x}+\mathbf{e}$ be given with $\|\mathbf{e}\|_2\leq \eta \sqrt{\Gamma}$,
        and let $\tilde{\mathbf{x}}^{\sharp}$ be the approximate solution of the $\ell^1$-minimization problem \eqref{eq:l1_minimization} produced by Algorithm~\ref{alg:l1_minimization} with the error tolerance on $\xi^*$ chosen to be $\nu>0$.
        Then
        \begin{equation}
             \left\|\mathbf{x}-\tilde{\mathbf{x}}^{\sharp}\right\|_p \leq C_1 M^{1/p-1/2} (\eta+\nu), \quad 1\leq p \leq 2.
        \end{equation}
        Both constants $C, C_1$ are universal. The approximate solution $\tilde{\mathbf{x}}^{\sharp}$ is obtained through Algorithm~\ref{alg:l1_minimization} which runs in time $\mathrm{poly}(n,\Gamma,\log(1/\nu))$.
\end{thm}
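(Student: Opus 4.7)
The plan is to assemble the pieces that have already been laid out: the bisection analysis encoded in Lemmas~\ref{lem:monotonicity_of_f}--\ref{lem:good_xi_helps_solve_l1_minimization} reduces the $\ell^1$-minimization problem \eqref{eq:l1_minimization} to repeatedly solving the convex quadratic program \eqref{eq:convex_quadratic_program}, and Theorem~\ref{thm:compressed_sensing_main_thm} will then convert the $\ell^1$-feasibility tolerance into an $\ell^p$ reconstruction bound.

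First I would verify that the initial bracket $[0,M]$ used in Algorithm~\ref{alg:l1_minimization} always contains the target threshold $\xi^{*}$ of Lemma~\ref{lem:existence_of_good_xi}. Since $\mathbf{x}$ is $M$-sparse with $|x_i|\leq 1$, we have $\|\mathbf{x}\|_1\leq M$, so in the generic regime $\|\mathbf{y}\|_2>\sqrt{\Gamma}\eta$ Lemma~\ref{lem:existence_of_good_xi} gives $\xi^{*}\leq \|\mathbf{x}\|_1\leq M$. The degenerate regimes $\|\mathbf{y}\|_2\leq\sqrt{\Gamma}\eta$ (where $\mathbf{0}$ is feasible) and $\sqrt{\Gamma}\eta=f(\xi_0)$ (which collapses \eqref{eq:l1_minimization} to a linear program) are handled by direct inspection and yield an estimate of the same form.

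Next, after $\lceil\log_2(M/\nu)\rceil$ halving steps the bracket shrinks to length at most $2\nu$, so its midpoint $\tilde{\xi}$ satisfies $|\tilde{\xi}-\xi^{*}|\leq\nu$. The Lipschitz bound in Lemma~\ref{lem:monotonicity_of_f}(iii) then yields $f(\tilde{\xi})=\sqrt{\Gamma}\,\widetilde{\eta}$ with $\widetilde{\eta}\leq\eta+\nu$, and Lemma~\ref{lem:good_xi_helps_solve_l1_minimization} certifies that the returned $\tilde{\mathbf{x}}^{\sharp}=\mathbf{z}(\tilde{\xi})$ exactly solves \eqref{eq:l1_minimization} with $\eta$ replaced by $\widetilde{\eta}$. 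Invoking Theorem~\ref{thm:compressed_sensing_main_thm} with this modified tolerance and using $\sigma_M(\mathbf{x})_1=0$ for $M$-sparse $\mathbf{x}$ gives
\begin{equation*}
\|\mathbf{x}-\tilde{\mathbf{x}}^{\sharp}\|_p \leq C_2 M^{1/p-1/2}\widetilde{\eta}\leq C_2 M^{1/p-1/2}(\eta+\nu),
\end{equation*}
matching the claimed bound after renaming constants. For the complexity, each of the $O(\log(M/\nu))$ bisection iterations invokes an interior-point method \cite{Ye1989extension} on \eqref{eq:convex_quadratic_program}, running in time $\mathrm{poly}(\Gamma,D,\log(1/\epsilon'))$; since $D=\Theta(n^k)=\mathrm{poly}(n)$, the total classical cost is $\mathrm{poly}(n,\Gamma,\log(1/\nu))$.

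The main obstacle I foresee is that each $f(\xi)$ in the bisection is itself computed only up to the QP solver's tolerance $\epsilon'$, so both the bisection decisions and the final evaluation $f(\tilde{\xi})$ are approximate. Because $f$ is $\sqrt{\Gamma}$-Lipschitz, setting $\epsilon'=\mathrm{poly}(\nu/\sqrt{\Gamma})$ ensures that any misclassified bisection step still traps $\xi^{*}$ within an interval of length $O(\nu)$ around the final output and that the effective tolerance satisfies $\widetilde{\eta}\leq\eta+O(\nu)$; since the QP solver depends only logarithmically on $1/\epsilon'$, this correction adds just polylogarithmic factors to the runtime and can be absorbed into the universal constants without altering the form of the bound.
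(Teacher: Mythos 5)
Your proposal follows essentially the same route as the paper's own argument in Section~\ref{sec:Self-contained analysis}: reduce to the constrained QP \eqref{eq:convex_quadratic_program} via Lemmas~\ref{lem:monotonicity_of_f}--\ref{lem:good_xi_helps_solve_l1_minimization}, bisect on $\xi$ starting from $[0,M]$ using $\|\mathbf{x}\|_1\leq M$, translate the $\nu$-error on $\xi^*$ into an $(\eta+\nu)$-tolerance via the $\sqrt{\Gamma}$-Lipschitz bound, and invoke Theorem~\ref{thm:compressed_sensing_main_thm}. Your final paragraph on absorbing the QP solver's internal tolerance $\epsilon'$ via the same Lipschitz estimate is a refinement the paper leaves implicit (it simply cites the weakly-polynomial QP solver of Ye), but it does not change the substance of the argument.
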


\section{Lower bounds}
\label{sec:lower_bounds}
In this section we provide a lower bound for the Hamiltonian learning task we consider. 
Specifically we consider the dependence on the accuracy $\epsilon$ and the number of Pauli terms $M$. 
The adaptive experiments are modeled as in \cite{huang2022foundations}, where all experiments form a tree such that the outcome of the experiment at a vertex determines which child leaf to move to, thus determining the next experiment to perform.
Our main tool for the total evolution time lower bound is Assouad's lemma \cite{yu1997assouad}, which provides a lower bound on the achievable $\ell^1$-error given two prerequisites: 
(1) an estimate of how hard it is to distinguish two output probability distributions if they come from two Hamiltonians that differ slightly, and 
(2) a lower bound on the penalty in $\ell^1$-error if such a pair of Hamiltonians are not correctly distinguished. 
The second prerequisite is easy to fulfill, as discussed in the proof of Theorem~\ref{thm:lower_bound}. 
For the first prerequisite, we will follow \cite{HuangTongFangSu2023learning} to capture the difficulty of correctly distinguishing the output probability distributions by induction on the tree of adaptive experiments.
In the following, we first characterize the model of quantum learning experiments following \cite{huang2022foundations}. 
Details of the proofs are in Appendix~\ref{sec:Technical lemmas for the lower bound}. 

\subsection{Model of quantum experiments}
We consider the scenario of learning the unknown Hamiltonian $H$ from dynamics. Consider the time evolution operator of $H$ parameterized by time $t$:
\begin{equation}
    U(t) = e^{-iHt},
\end{equation}
A learning agent can have access to $U(t)$ by quantum experiments without knowing $H$. 
The learning algorithm includes the scenario where each quantum experiment is chosen adaptively based on past measurement outcomes.
Following \cite{huang2022foundations}, we first define a single experiment:

\begin{defn}[An ideal experiment]
\label{defn:ideal_experiment}
    Given a $n$-qubit unitary $U(t)=e^{-i H t}$ parameterized by time $t$, with $H$ being the unknown Hamiltonian. A single ideal experiment $E^{(0)}$ is specified by:
    \begin{enumerate}
        \item[1.] an arbitrary $n^{\prime}$-qubit initial state $\ket{\psi_0} \in \mathbb{C}^{2^{n^{\prime}}}$ with an integer $n^{\prime} \geq n$,
        \item[2.] an arbitrary POVM $\mathcal{F}=\left\{M_i\right\}_i$ on an $n^{\prime}$-qubit system,
        \item[3.] an $n^{\prime}$-qubit unitary of the following form,
        \begin{equation}
        \label{eq:unitary_operator_for_time_evolution}
            U_{\mathrm{exp}} = U_{K+1}\left(U\left(t_K\right) \otimes I\right) U_K \ldots U_3\left(U\left(t_2\right) \otimes I\right) U_2\left(U\left(t_1\right) \otimes I\right) U_1,
        \end{equation}
        for some arbitrary integer $K$, arbitrary evolution times $t_1, \ldots, t_K \in \mathbb{R}$, and arbitrary $n^{\prime}$-qubit unitaries $U_1, \ldots, U_K, U_{K+1}$. Here $I$ is the identity unitary on $n^{\prime}-n$ qubits subsystem.
    \end{enumerate} 
    A single run of $E^{(0)}$ returns an outcome from performing the POVM $\mathcal{F}=\{M_i\}_i$ on the state $\mathbf{U}\ket{\psi_0}$.
    The evolution time of the experiment is defined as
    \begin{equation}
        t\left(E^{(0)}\right) \coloneqq \sum_{k=1}^Kt_k.
    \end{equation}
\end{defn}

Next, we will take noise into account. 
A small SPAM error is unavoidable in any quantum experiment, and being robust against it is an important consideration in useful algorithms for characterizing and benchmarking quantum systems \cite{KimmelLowYoder2015robust, huang2022foundations, knill2008randomized, magesan2011scalable, erhard2019characterizing, harper2020efficient, Elben_2022}.
We model the SPAM error by a global depolarizing channel 
\begin{equation}
    \label{eq:global_depolarizing}
    \mathcal{D}: \rho\mapsto \mathcal{D}(\rho) = (1-\gamma)\rho + \frac{\gamma\tr(\rho)}{2^{n'}}I,
\end{equation}
which is applied to the system right before the POVM $\mathcal{F}$. This is equivalent to changing the POVM to 
\begin{equation}
    \label{eq:modified_POVM_by_SPAM}
    \mathcal{F}^{(\gamma)}=\left\{(1-\gamma) M_i+\gamma \operatorname{tr}\left(M_i\right)\left(I / 2^{n^{\prime}}\right)\right\}_i.
\end{equation}
The noise strength is measured by the constant $\gamma$, which can take any value in $(0,1)$.

In proving the lower bound, we are allowed to only consider SPAM error in the form of the global depolarizing noise applied in the way described above, because any algorithm that is robust to SPAM error in general must be robust to this type of SPAM error in particular. 
We will then define a single experiment with measurement noise as follows:

\begin{defn}[An experiment with measurement noise]
\label{defn:single_experiment_with_measurement_noise}
A single experiment $E^{(\gamma)}$ with measurement noise $\gamma\in(0,1)$ is identical to a single ideal experiment $E^0$ except for the POVM, which is $\mathcal{F}^{(\gamma)}$ given in \eqref{eq:modified_POVM_by_SPAM} instead of $\mathcal{F}=\{M_i\}_i$.
\end{defn}

Note that a lower bound for learning with measurement noise naturally also holds for learning with both state-preparation and measurement errors, since the latter is a more difficult task. 
Therefore for simplicity we only consider measurement noise in this section.

With the definition of a single experiment with measurement noise, we can now formally define a learning algorithm with total evolution time $T$ as follows, where we again adopt the definition in \cite{HuangTongFangSu2023learning}.

\begin{defn}[Learning algorithm with bounded total evolution time]
\label{defn:learning_alg_with_bdd_total_evolution_time}
Consider $T>0,0.5>\gamma>0$. 
A learning algorithm with total evolution time $T$ and measurement noise $\gamma$ can obtain measurement outcomes from an arbitrary number of experiments $E_1^{(\gamma)}, E_2^{(\gamma)}, \ldots$ as long as
\begin{equation}
    \sum_i t\left(E_i^{(\gamma)}\right) \leq T
\end{equation}
The parameters specifying each experiment $E_i^{(\gamma)}$ can depend on the measurement outcomes from previous experiments $E_1^{(\gamma)}, \ldots, E_{i-1}^{(\gamma)}$.
\end{defn}

\subsection{Lower bound of the total evolution time in Hamiltonian learning}
With the definition of the learning algorithm and the possible sets of experiments, we now present a fundamental lower bound on the total evolution time for any learning algorithm that tries to learn an unknown $n$-qubit Hamiltonian consisting of $M$ Pauli terms from real-time evolution. The theorem is stated as follows:

\begin{thm}
\label{thm:lower_bound}
    Given integers $n$ and $M$, real numbers $\epsilon, \delta \in(0,1)$, and a set $\left\{P_1, P_2, \ldots, P_M\right\} \in \mathbb{P}_n\setminus\{I\}$ representing the $M$ $n$-qubit Pauli terms in the Hamiltonian, we consider any learning algorithm with a total evolution time $T$ and a constant measurement noise $\gamma\in(0,0.5)$ as defined in Definition~\ref{defn:learning_alg_with_bdd_total_evolution_time}.
    Suppose that such a learning algorithm satisfies that for an $n$-qubit Hamiltonian $H=\sum_{a=1}^M \coef_aP_a$ with any unknown parameters $\abs{\coef_a}\leq 1$, after multiple rounds of noisy experiments, the algorithm can estimate any $\boldsymbol{\coef}= \left(\coef_1,\ldots,\coef_M\right)$ to $\epsilon_1$-error in the $\ell^1$-norm in expectation value averaged over experimental outcomes. Then, 
    \begin{equation}
    \label{eq:total_evolution_time_lower_bound}
        T\geq \frac{M}{\epsilon_1 e \log(1/\gamma)}.
    \end{equation}
\end{thm}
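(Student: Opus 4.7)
The plan is to apply Assouad's lemma to a two-point hypercube family of Hamiltonians, combined with the tree-induction bound on the total variation distance between the output distributions of adaptive protocols that was previewed in the overview.

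First, I would introduce the indexed family $H_\tau = \zeta \sum_{a=1}^M \tau_a P_a$ for $\tau \in \{-1,+1\}^M$, where $\zeta \in (0,1]$ is a parameter to be optimized at the end. The coefficient vectors $\boldsymbol{\coef}_\tau = \zeta\,\tau$ lie in the admissible box $[-1,1]^M$, and for any two neighbors $\tau,\tau'$ differing in a single coordinate $a$, the $\ell^1$ distance between their coefficient vectors equals $2\zeta$, while $\|H_\tau - H_{\tau'}\| = 2\zeta\,\|P_a\| = 2\zeta$ since Paulis are unitary. This controls both prerequisites of Assouad's lemma, as outlined just before the theorem: the per-coordinate $\ell^1$ penalty and the operator-norm separation controlling testing difficulty.

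Next, I would invoke the TV bound previewed in the overview: for any learning protocol with total evolution time at most $T$ and global depolarizing noise strength $\gamma$, the total variation distance between the leaf distributions of the adaptive experiment tree under $H_\tau$ and under $H_{\tau'}$ satisfies $\mathrm{TV}(P_\tau,P_{\tau'}) \leq 1 - \gamma^{\|H_\tau-H_{\tau'}\|T} = 1 - \gamma^{2\zeta T}$. I expect this step to be the main obstacle; the bound is proved by induction on the experiment tree, carefully tracking how the depolarizing noise attenuates the likelihood ratio at each leaf and how the evolution times along the tree aggregate into the total budget $T$. This is the content deferred to Appendix~\ref{sec:Technical lemmas for the lower bound}.

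Finally, Assouad's lemma in its $\ell^1$ form yields
\begin{equation*}
\sup_\tau \mathbb{E}_\tau \bigl\|\hat{\boldsymbol{\coef}} - \boldsymbol{\coef}_\tau\bigr\|_1 \;\geq\; M\,\zeta \cdot \min_{\tau \sim \tau'}\bigl(1 - \mathrm{TV}(P_\tau,P_{\tau'})\bigr) \;\geq\; M\,\zeta\, \gamma^{2\zeta T},
\end{equation*}
which, together with the hypothesis that this supremum is at most $\epsilon_1$, gives $\epsilon_1 \geq M\zeta\gamma^{2\zeta T}$. Optimizing over $\zeta$ by setting $2\zeta T \log(1/\gamma)=1$ makes $\gamma^{2\zeta T}=e^{-1}$ and produces a lower bound of the form $T \geq M/\bigl(c\, e\, \epsilon_1 \log(1/\gamma)\bigr)$ for a small absolute constant $c$, matching the claimed inequality \eqref{eq:total_evolution_time_lower_bound} up to the constant (the remaining factor can be tightened by either a sharper version of Assouad restricted to antipodal pairs or by restricting to the ``$+$''-orbit of the hypercube so that each coordinate contributes its full $2\zeta$ penalty to the expected $\ell^1$ loss).
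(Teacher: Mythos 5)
Your proposal is essentially the paper's proof: the same hypercube family $H_\tau = \zeta\sum_a \tau_a P_a$, the same affinity lower bound $1-\mathrm{TV}(P_\tau,P_{\tau'})\geq \gamma^{\|H_\tau-H_{\tau'}\|T}=\gamma^{2\zeta T}$ from the tree-induction argument (Lemma~\ref{lem: TV for multiple experiments}), Assouad's lemma with per-coordinate $\ell^1$ separation $\alpha_M=2\zeta$, and the same choice $\zeta = 1/(2T\log(1/\gamma))$ to make $\gamma^{2\zeta T}=e^{-1}$. The only difference you flagged is a factor of $2$ in the final constant: a careful application of Assouad's lemma (which gives $M\cdot(\alpha_M/2)\cdot\text{affinity}=M\zeta\gamma^{2\zeta T}$) yields $T\geq M/(2e\epsilon_1\log(1/\gamma))$, whereas the paper's computation writes $2Mh\gamma^{2hT}$ at the corresponding step, apparently dropping the $1/2$ from the lemma's conclusion, so your derivation and the paper's agree up to that constant.
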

Notice that the lower bound in Theorem~\ref{thm:lower_bound} is for learning $n$-qubit Hamiltonians with fixed $M$ terms. It naturally implies a lower bound for learning $n$-qubit $M$-term Hamiltonians without prior knowledge of which $M$ terms are in the Hamiltonian. Moreover, by restricting the terms in the Hamiltonian to be $k$-body, this lower bound also holds for the $k$-body, $M$-term, $n$-qubit Hamiltonian learning problem we discussed in the previous sections.

We begin by considering how well a single noisy experiment can distinguish two Hamiltonians.
\begin{lem}[TV for one experiment, Lemma~30 \cite{HuangTongFangSu2023learning}]
\label{lem: TV for one experiment}
Consider a noisy experiment as defined in Definition~\ref{defn:single_experiment_with_measurement_noise}.
Let $p(i)$ and $p^{\prime}(i)$ be the probability of obtaining the measurement outcome $i$ by performing 
the POVM $\mathcal{F}^{(\gamma)}$ when the unknown Hamiltonian is $H$ and $H'$ respectively.
Then
\begin{equation}
    \operatorname{TV}\left(p, p^{\prime}\right) \leq(1-\gamma) \min \left(\|H-H'\| \cdot t\left(E^{(\gamma)}\right), 1\right),
\end{equation}
where $t\left(E^{(\gamma)}\right)=\sum_{k=1}^K\left|t_k\right|$ is the total evolution time in this single experiment $E^{(\gamma)}$. Here $\operatorname{TV}$ denotes the total variation distance between probability distributions. 
\end{lem}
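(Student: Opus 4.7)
The plan is to peel off the state-independent part of the depolarizing noise, reduce the bound to a statement about the noiseless output states, and then apply a standard telescoping on the composite unitary. Let $\sigma = \mathbf{U}\ket{\psi_0}\!\bra{\psi_0}\mathbf{U}^\dagger$ and $\sigma' = \mathbf{U}'\ket{\psi_0}\!\bra{\psi_0}(\mathbf{U}')^\dagger$ denote the noiseless output states under $H$ and $H'$, where $\mathbf{U}, \mathbf{U}'$ are the composite unitaries from \eqref{eq:unitary_operator_for_time_evolution}. The depolarizing contribution $\gamma\operatorname{tr}(M_i)/2^{n'}$ in \eqref{eq:modified_POVM_by_SPAM} is state-independent and therefore cancels in $p(i)-p'(i)$, leaving
\begin{equation*}
\operatorname{TV}(p,p') = (1-\gamma)\cdot\tfrac12\sum_i |\operatorname{tr}(M_i(\sigma-\sigma'))|.
\end{equation*}

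Next I would bound the remaining sum in two different ways. The universal POVM contraction $\sum_i |\operatorname{tr}(M_i X)|\le \|X\|_1$ for traceless Hermitian $X$, applied with $\|\sigma-\sigma'\|_1\le 2$, immediately gives $\operatorname{TV}(p,p')\le (1-\gamma)$, which handles the trivial branch of the minimum. For the non-trivial branch, the same contraction bounds $\operatorname{TV}(p,p')$ by $(1-\gamma)\cdot\tfrac12\|\sigma-\sigma'\|_1$; the pure-state identity $\tfrac12\|\sigma-\sigma'\|_1 = \sqrt{1-|\langle\psi_0|\mathbf{U}^\dagger\mathbf{U}'|\psi_0\rangle|^2}$ together with the elementary inequality $\sqrt{1-|\langle\phi|\phi'\rangle|^2}\le \|\ket{\phi}-\ket{\phi'}\|$ for unit vectors then yields $\operatorname{TV}(p,p')\le (1-\gamma)\|\mathbf{U}-\mathbf{U}'\|$, with the operator norm on the right.

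To bound $\|\mathbf{U}-\mathbf{U}'\|$, I would insert hybrids at each $U(t_k)$: writing $V_k = U(t_k)\otimes I$ and $V_k' = U'(t_k)\otimes I$, one obtains $\mathbf{U}-\mathbf{U}' = \sum_{k=1}^K A_k(V_k - V_k')B_k$ with $A_k, B_k$ unitary, and unitary invariance of the operator norm then gives $\|\mathbf{U}-\mathbf{U}'\|\le \sum_k \|e^{-iHt_k}-e^{-iH't_k}\|$. The Duhamel-type perturbation bound $\|e^{-iHt}-e^{-iH't}\|\le |t|\,\|H-H'\|$ (obtained by differentiating $e^{-iH's}e^{-iH(t-s)}$ in $s$ and integrating over $s\in[0,t]$) then gives $\|\mathbf{U}-\mathbf{U}'\|\le \|H-H'\|\,t(E^{(\gamma)})$. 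Combining the two estimates into a minimum completes the proof.

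The argument is essentially bookkeeping, so there is no real obstacle. The only delicate observation is that the trivial branch of the $\min$ carries the prefactor $(1-\gamma)$ rather than just $1$; this depends specifically on the state-independent form of the global depolarizing noise, which ensures that the noisy ``background'' contribution to $p(i)$ and $p'(i)$ agrees exactly and cancels before any further inequalities are invoked.
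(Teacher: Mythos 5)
Your proof is correct and follows essentially the same route as the paper: isolate the noiseless trace distance, telescope across the $K$ occurrences of the unknown unitary, and apply the Duhamel-type bound $\|e^{-iHt}-e^{-iH't}\|\le |t|\,\|H-H'\|$. The only difference is in the middle step: the paper telescopes at the channel level (bounding $\|\ketbra{\psi}{\psi}-\ketbra{\psi'}{\psi'}\|_1$ by $\sum_k\|U(t_k)-U'(t_k)\|_\diamond$ and then invoking their Lemma on diamond distances of unitary channels), whereas you telescope on the composite unitary directly and pass from trace distance to $\|\mathbf{U}-\mathbf{U}'\|$ via the pure-state identity $\tfrac12\|\sigma-\sigma'\|_1=\sqrt{1-|\langle\psi|\psi'\rangle|^2}$ together with $\sqrt{1-|\langle\phi|\phi'\rangle|^2}\le\|\ket{\phi}-\ket{\phi'}\|$. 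These are two ways of bookkeeping the same factor of two and lead to the identical final bound; your version is arguably a bit more elementary since it never invokes the diamond norm. One small cosmetic note: your POVM contraction $\sum_i|\operatorname{tr}(M_iX)|\le\|X\|_1$ holds for all Hermitian $X$, so the tracelessness hypothesis you attach is superfluous, though harmless.
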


\begin{proof}
    We define 
    \begin{equation} 
    \begin{split}
    \ket{\psi} &= U_{K+1}\left(U\left(t_K\right) \otimes I\right) U_K \ldots U_3\left(U\left(t_2\right) \otimes I\right) U_2\left(U\left(t_1\right) \otimes I\right) U_1\ket{\psi_0},\\
    \ket{\psi'} &= U_{K+1}\left(U'\left(t_K\right) \otimes I\right) U_K \ldots U_3\left(U'\left(t_2\right) \otimes I\right) U_2\left(U'\left(t_1\right) \otimes I\right) U_1\ket{\psi_0}
    \end{split}
    \end{equation}
    By the triangle inequality and telescoping sum, we have the following upper bound on the trace distance,
    \begin{equation}
    \label{eq:TV upper bound}
        \|\ketbra{\psi}{\psi}-\ketbra{\psi'}{\psi'}\|_1\leq\sum_{k=1}^K\|U(t_k)-U'(t_k)\|_{\diamond} \leq 2\|H-H'\|\cdot\sum_{k=1}^K\abs{t_k} = 2\|H-H'\|\cdot t\left(E^{(\gamma)}\right),
    \end{equation}
    where we have used Lemma~\ref{lem:diamond_distance_unitary_channels} in the second inequality. 
    Because the ideal POVM measurement $\mathcal{F}=\left\{M_i\right\}_i$ is a quantum channel, we have
    \begin{equation}
        \frac{1}{2}\sum_i\abs{\bra{\psi}M_i\ket{\psi}-\bra{\psi'}M_i\ket{\psi'}}\leq \frac{1}{2} \|\ketbra{\psi}{\psi}-\ketbra{\psi'}{\psi'}\|_1\leq \min\left(\|H-H'\|\cdot t\left(E^{(\gamma)}\right),1\right).
    \end{equation}
    Now we take measurement noise into account. 
    Since the noisy POVM $\mathcal{F}^{(\gamma)}=\left\{\Tilde{M}_i\right\}$ where
    \[
    \Tilde{M}_i= (1-\gamma) M_i+\gamma \operatorname{tr}\left(M_i\right)\left(I / 2^{N^{\prime}}\right)\,,
    \]
    the total variation distance between the measurement outcome distribution is
    \begin{equation}
        \begin{split}
        \operatorname{TV}\left(p, p^{\prime}\right) &= \frac{1}{2}\sum_i\abs{\bra{\psi}\Tilde{M}_i\ket{\psi}-\bra{\psi'}\Tilde{M}_i\ket{\psi'}}\\
            &= \frac{1}{2}(1-\gamma)\sum_i\abs{\bra{\psi}M_i\ket{\psi}-\bra{\psi'}M_i\ket{\psi}} \\
            &\leq (1-\gamma)\min\left(\|H-H'\|\cdot t\left(E^{(\gamma)}\right),1\right).
        \end{split}
    \end{equation}
\end{proof}

In order to include possible adaptivity in the choice of experiments, we consider the rooted tree representation $\mathcal{T}$ described in \cite{chen2022exponential,huang2022quantum}, which was also used in \cite{HuangTongFangSu2023learning}. 
Each node in the tree corresponds to the sequence of measurement outcomes the algorithm has seen so far. 
We can also think of the node as the memory state of the algorithm. 
At each node $v$, the algorithm runs a single experiment $E_v^{(\gamma)}$ as defined in Definition~\ref{defn:single_experiment_with_measurement_noise} with the measurement noise specified by $\gamma$ and evolution time specified by $T_v$.

For each experiment $E_v^{(\gamma)}$, it produces a measurement outcome $i\in{1,2,\ldots,L_v}$. 
The outcome will determine which child node to move the algorithm to from node $v$. 
When the algorithm arrives at a leaf $\ell$, the algorithm stops, with $\ell$ denoting the outcome of the experiments. 
Note that because of the tree structure, a leaf node $\ell$ uniquely determines the path leading from the root to it. 
By considering the rooted tree representation and allowing the algorithm to depend on each node in the tree adaptively, we cover all possible learning algorithms that can adaptively choose the experiment that it runs based on previous measurement outcomes.

For each node $v$ on tree $\mathcal{T}$, we denote $p^{(\mathcal{T})}(\ell)$ and $p'^{(\mathcal{T})}(\ell)$ as the probabilities of arriving at the leaf node $\ell$ in the experiments when the unknown Hamiltonians are $H$ and $H'$ respectively, and the algorithm begins from the root of $\mathcal{T}$. 
We denote the resulting distribution over leaf nodes by $p^{(\mathcal{T})}$ and $p'^{(\mathcal{T})}$ respectively. 
The total evolution time $T$ can be written as:
\begin{equation}
    T = \sum_{v\in \mathrm{p}(r,\ell)}T_v,
\end{equation}
where $\mathrm{p}(r,\ell)$ is the path from the root $r$ to a leaf $\ell$ of $\mathcal{T}$.
We will provide an upper bound of the TV distance between the probability distributions $p^{(\mathcal{T})}$ and $p'^{(\mathcal{T})}$ with a given $T$.

\begin{lem}[TV for multiple experiments]
\label{lem: TV for multiple experiments}
    Consider a rooted tree representation $\mathcal{T}$ for a learning algorithm with total evolution time $T$ and measurement noise $\gamma \in(0,0.5)$. 
    We have
    \begin{equation}
        \operatorname{TV}\left(p^{(\mathcal{T})}, p'^{(\mathcal{T})}\right) \leq 1-\gamma^{\|H-H'\|\cdot T},
    \end{equation}
    which is an upper bound for the total variation of the outcomes under multiple experiments.
\end{lem}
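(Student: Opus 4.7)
The plan is to induct on the tree $\mathcal{T}$ using the overlap quantity
\[
g(v) := 1 - \operatorname{TV}\bigl(p^{(v)}, p'^{(v)}\bigr) = \sum_\ell \min\bigl(p^{(v)}(\ell), p'^{(v)}(\ell)\bigr),
\]
where $p^{(v)}, p'^{(v)}$ are the conditional distributions over leaves of the subtree rooted at $v$ under $H$ and $H'$. Writing $\alpha := \|H - H'\|$ and letting $T^*_v$ denote the maximum, over root-to-leaf paths in the subtree at $v$, of the summed per-node evolution times $T_v$, I would prove the strengthened claim $g(v) \geq \gamma^{\alpha T^*_v}$. Since the algorithm has total evolution time at most $T$, the root satisfies $T^*_r \leq T$, so the lemma follows by taking $v = r$ and rearranging.

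The base case ($v$ a leaf) is immediate, as $g(\ell) = 1 = \gamma^0$. For the inductive step at an internal node $v$ with children $v_i$ indexed by outcome $i$, I use the elementary inequality $\min(ab, cd) \geq \min(a,c)\min(b,d)$ for nonnegative reals together with the factorization $p^{(v)}(\ell) = p_v(i)\, p^{(v_i)}(\ell)$ for any $\ell$ in the subtree at $v_i$ to obtain
\[
g(v) \;\geq\; \sum_i \min\bigl(p_v(i), p'_v(i)\bigr)\, g(v_i) \;\geq\; \bigl(1 - \operatorname{TV}(p_v, p'_v)\bigr)\, \min_i g(v_i).
\]
Lemma~\ref{lem: TV for one experiment} bounds $\operatorname{TV}(p_v, p'_v) \leq (1-\gamma)\min(\alpha T_v, 1)$, and combining with the inductive hypothesis on each child yields
\[
g(v) \;\geq\; \bigl[\,1 - (1-\gamma)\min(\alpha T_v, 1)\,\bigr]\,\cdot\, \gamma^{\alpha \max_i T^*_{v_i}}.
\]

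The crux of the argument is the scalar inequality $1 - (1-\gamma)\min(x,1) \geq \gamma^x$ valid for all $x \geq 0$ and $\gamma \in (0,1)$. For $x \geq 1$ the left side equals $\gamma$, which dominates $\gamma^x$ since $\gamma < 1$. For $x \in [0,1]$, the difference $h(x) := 1 - (1-\gamma)x - \gamma^x$ satisfies $h(0) = h(1) = 0$ and $h''(x) = -\gamma^x (\log \gamma)^2 < 0$, so concavity on $[0,1]$ forces $h \geq 0$. Applying this with $x = \alpha T_v$ gives $g(v) \geq \gamma^{\alpha T_v}\,\gamma^{\alpha \max_i T^*_{v_i}} = \gamma^{\alpha T^*_v}$, closing the induction. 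The only genuinely nontrivial step is this scalar concavity inequality; I expect it to be the main obstacle because a naive additive chain-rule bound on total variation gives only the too-weak estimate of order $(1-\gamma)\alpha T$, and it is precisely the multiplicative structure of the inequality $\min(ab,cd) \geq \min(a,c)\min(b,d)$ that aligns with the multiplicative form of the target bound $\gamma^{\alpha T}$.
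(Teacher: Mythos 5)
Your proof is correct and takes essentially the same route the paper does: the paper defers to the induction-on-the-tree argument of \cite[Lemma 31]{HuangTongFangSu2023learning}, and you have reconstructed exactly that induction, tracking the affinity $g(v)=1-\operatorname{TV}$, using the factorization $\min(ab,cd)\ge\min(a,c)\min(b,d)$ to pass from parent to children, invoking Lemma~\ref{lem: TV for one experiment} at each node, and closing with the scalar concavity inequality $1-(1-\gamma)\min(x,1)\ge\gamma^x$. Your version is more self-contained than the paper's one-line citation, but it is the same proof.
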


\begin{proof}
    The proof is almost identical to the one for \cite[Lemma 31]{HuangTongFangSu2023learning}, which is by induction on the tree $\mathcal{T}$. 
    The only change is to use Lemma~\ref{lem: TV for one experiment} to upper bound the TV distance that can be generated in a single noisy experiment. 
\end{proof}

The above theorem tells us how much the output probability distribution can be changed when the Hamiltonian is perturbed. 
It therefore quantifies how well we can distinguish a pair of Hamiltonians given limited total evolution time. 
With this tool we are ready to prove the lower bound result in Theorem~\ref{thm:lower_bound}.

The main tool we are going to use next is Assouad's lemma \cite{yu1997assouad}. 
We consider a set of probability distributions $P$ each of which correspond to a parameter $\theta(P)\in\mathfrak{D}$, where the space $\mathfrak{D}$ can be equipped with certain pseudo-distances.
We restate the lemma here:

\begin{lem}(Assouad's Lemma\cite[Lemma 2]{yu1997assouad})
\label{lem:assouads_lemma}
Let $M \geq 1$ be an integer and let $\mathcal{F}_M=$ $\left\{P_\tau: \tau \in\{-1,1\}^M\right\}$ contain $2^M$ probability measures. 
Write $\tau \sim \tau^{\prime}$ if $\tau$ and $\tau^{\prime}$ differ in only one coordinate, and write $\tau \sim_j \tau^{\prime}$ when that coordinate is the $j$th. 
Suppose that there are $M$ pseudo-distances on $\mathfrak{D}$ such that for any $x, y \in \mathfrak{D}$
\begin{equation}
d(x, y)=\sum_{j=1}^M d_j(x, y)
\end{equation}
and further that, if $\tau \sim_j \tau^{\prime}$,
\begin{equation}
d_j\left(\theta\left(P_\tau\right), \theta\left(P_{\tau^{\prime}}\right)\right) \geq \alpha_M.
\end{equation}
Then
\begin{equation}
\max _{P_\tau \in \mathcal{F}_M} E_\tau d\left(\hat{\theta}, \theta\left(P_\tau\right)\right) \geq M \cdot \frac{\alpha_M}{2} \min \left\{\left\|P_\tau \wedge P_{\tau^{\prime}}\right\|: \tau \sim \tau^{\prime}\right\},
\end{equation}
where $\|\cdot\wedge\cdot\| = 1-\mathrm{TV}(\cdot,\cdot)$ is the affinity of two probability distributions.
\end{lem}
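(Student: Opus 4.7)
The plan is to lower bound the maximum on the left-hand side by a uniform average over $\tau\in\{-1,1\}^M$, decompose the risk along the $M$ pseudo-distances $d_j$, and pair up hypotheses that differ in exactly one coordinate. Concretely, the starting chain is
\[
\max_{\tau} E_\tau d(\hat\theta,\theta(P_\tau))
\;\geq\; \frac{1}{2^M}\sum_{\tau} E_\tau d(\hat\theta,\theta(P_\tau))
\;=\; \sum_{j=1}^M \frac{1}{2^M}\sum_{\tau} E_\tau d_j(\hat\theta,\theta(P_\tau)),
\]
after which it suffices, for each fixed $j$, to partition $\{-1,1\}^M$ into the $2^{M-1}$ pairs $\{\tau,\tau^{(j)}\}$ differing only in coordinate $j$, and to control the contribution of each pair.

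The main step, which I expect to be the only real work, is to prove the pairwise bound
\[
E_\tau d_j(\hat\theta,\theta(P_\tau)) + E_{\tau^{(j)}} d_j(\hat\theta,\theta(P_{\tau^{(j)}}))
\;\geq\; \alpha_M \,\bigl\|P_\tau\wedge P_{\tau^{(j)}}\bigr\|.
\]
Write $L(X,\tau) := d_j(\hat\theta,\theta(P_\tau))$, viewed as a function of the data $X$ on which the (possibly randomized) estimator $\hat\theta$ depends. By the triangle inequality for the pseudo-distance $d_j$ together with the hypothesis $d_j(\theta(P_\tau),\theta(P_{\tau^{(j)}}))\geq\alpha_M$, one has $L(X,\tau) + L(X,\tau^{(j)}) \geq \alpha_M$ pointwise in $X$. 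Denoting the densities of $P_\tau$ and $P_{\tau^{(j)}}$ by $p_\tau$ and $p_{\tau^{(j)}}$, and using $p_\tau\geq\min(p_\tau,p_{\tau^{(j)}})$, $p_{\tau^{(j)}}\geq\min(p_\tau,p_{\tau^{(j)}})$ and $L\geq 0$, I then bound
\[
\int L(X,\tau)\,p_\tau\,dX + \int L(X,\tau^{(j)})\,p_{\tau^{(j)}}\,dX
\;\geq\; \int [L(X,\tau)+L(X,\tau^{(j)})]\min(p_\tau,p_{\tau^{(j)}})\,dX
\;\geq\; \alpha_M \bigl\|P_\tau\wedge P_{\tau^{(j)}}\bigr\|,
\]
which is the desired pairwise inequality.

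Summing the pairwise bound over the $2^{M-1}$ pairs (so that each $\tau$ contributes exactly once) yields $\sum_\tau E_\tau d_j(\hat\theta,\theta(P_\tau)) \geq 2^{M-1}\alpha_M \min_{\tau\sim\tau'}\|P_\tau\wedge P_{\tau'}\|$; dividing by $2^M$ and summing over $j=1,\dots,M$ gives the claimed lower bound $M\cdot(\alpha_M/2)\cdot\min_{\tau\sim\tau'}\|P_\tau\wedge P_{\tau'}\|$. The main subtlety is keeping the constant sharp: the natural alternative route through an auxiliary binary test $\psi_j$ (arguing that $L\geq\alpha_M/2$ on $\{\psi_j\neq\tau_j\}$ and then applying the Le Cam testing bound $P_\tau(\psi_j\neq\tau_j)+P_{\tau^{(j)}}(\psi_j\neq(\tau^{(j)})_j)\geq\|P_\tau\wedge P_{\tau^{(j)}}\|$) loses an extra factor of $1/2$ and would only produce $M\alpha_M/4$, so using the ``min of densities'' form of the pairwise bound directly is what yields the correct constant $\alpha_M/2$.
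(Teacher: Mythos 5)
Your proof is correct. The paper itself does not prove this statement — it is quoted verbatim from Yu's ``Assouad, Fano, and Le Cam'' (Lemma 2) and used as a black box — and your argument is essentially the standard proof of Assouad's lemma: bound the max by the uniform average, split $d=\sum_j d_j$, pair hypotheses differing in coordinate $j$, use the pointwise triangle inequality $d_j(\hat\theta,\theta(P_\tau))+d_j(\hat\theta,\theta(P_{\tau^{(j)}}))\geq\alpha_M$ together with $p_\tau,p_{\tau^{(j)}}\geq\min(p_\tau,p_{\tau^{(j)}})$ to get the pairwise bound $\alpha_M\|P_\tau\wedge P_{\tau^{(j)}}\|$, and sum; your remark that the direct min-of-densities step (rather than a reduction to a binary test) is what preserves the sharp constant $\alpha_M/2$ is also accurate, and the argument extends verbatim to randomized estimators since the pointwise bound holds conditionally on any auxiliary randomness.
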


We will the apply this lemma to prove Theorem~\ref{thm:lower_bound}.

\begin{proof}[Proof of Theorem~\ref{thm:lower_bound}]
    In the Hamiltonian learning scenario, each $\tau$ represents a coefficient vector $\boldsymbol{\coef}= \left(\coef_1,\ldots,\coef_M\right)$, where $\coef_i = h\tau_i\in\{-h,h\}$ and $h\in(0,1)$. $P_\tau$ is the probability distribution $p^{\mathcal{T}}$ over leaf nodes of $\mathcal{T}$ corresponding to adaptive noisy experiments with the Hamiltonian represented by coefficient vector $\tau$. 
    The parameter $\theta(P_\tau) = \boldsymbol{\coef}$. 
    The distance on $\mathfrak{D}$ is set to be the $\ell^1$-norm distance between the corresponding coefficient vectors. 
    For adaptive experiments with Hamiltonian coefficients $\boldsymbol{\coef}$ and $\boldsymbol{\coef}'$, with output distributions $P_\tau=p^{\mathcal{T}}$ and $P'_\tau=p'^{\mathcal{T}}$, we define
    \begin{equation}
        d(\theta(P_\tau), \theta(P'_\tau))=d(\boldsymbol{\coef},\boldsymbol{\coef}')=\sum_{j=1}^M \abs{\coef_j-\coef_j'}.
    \end{equation}
    Using the notation $\sim_j$ introduced in Lemma~\ref{lem:assouads_lemma}, if $\tau \sim_j \tau'$, we have $\alpha_M=d(\boldsymbol{\coef},\boldsymbol{\coef}')=d_j(\boldsymbol{\coef},\boldsymbol{\coef}')=2h$, where $\boldsymbol{\coef}=h\tau$ and $\boldsymbol{\coef}'=h\tau'$. 
    For the affinity $\|P_\tau\wedge P'_\tau\|$, by Lemma~\ref{lem: TV for multiple experiments} we have
    \[
        \|P_\tau\wedge P'_\tau\| = 1-\mathrm{TV}(P_\tau,P'_\tau)\geq \gamma^{\|H-H'\|T}=\gamma^{2hT}.
    \]

    With the above parameters, we can directly apply Assouad's Lemma in Lemma~\ref{lem:assouads_lemma} to obtain
    \begin{equation}
        \max _{P_\tau \in \mathcal{F}_M}\mathop{\mathbb{E}}_{\tau}\|\boldsymbol{\coef}-\hat{\boldsymbol{\coef}}\| \geq 2Mh\gamma^{2hT},
    \end{equation}
    where $\hat{\boldsymbol{\coef}}$ is the estimate produced by any sequence of adaptive noisy experiments with total evolution time $T$. 
    Note that this bound holds for any $h\in(0,1)$, and we can therefore choose $h=1/(2T\log(1/\gamma))$, and we will have
    \begin{equation}
        \max _{P_\tau \in \mathcal{F}_M}\mathop{\mathbb{E}}_{\tau}\|\boldsymbol{\coef}-\hat{\boldsymbol{\coef}}\| \geq\frac{M}{Te\log(1/\gamma)}.
    \end{equation}
    Therefore if we want to obtain coefficient estimates for Hamiltonians of the form $H=\sum_{a=1}^M\coef_a P_a$ that is accurate up to error $\epsilon_1$ in $\ell^1$-distance, we need $T$ as given in \eqref{eq:total_evolution_time_lower_bound}.
\end{proof}

\section{Operational interpretation of error metrics}
\label{sec:operational_interpretation}

In the previous sections we have discussed the method to learn the coefficients of a Hamiltonian $H$ and the corresponding total evolution time lower bound. 
The ultimate goal of Hamiltonian learning is to use the learned Hamiltonian to predict the properties of the quantum system. 
In this section we will consider this task, and thereby provide an operational interpretation of the error metrics.

We consider the task of predicting observation expectation values at a given time $t$.
Suppose that the exact Hamiltonian is $H=\sum_{P}\coef_P P$ and the learned Hamiltonian is $H'=\sum_P\coef_P' P$. 
Here, because we fix the global phase by enforcing $\coef_I=0$, these two Hamiltonians are both traceless. 
Starting from a known quantum state $\rho_0$, the exact state $\rho(t)$ and the state $\rho'(t)$ we predict will be
\begin{equation}
    \rho(t) = e^{-iHt}\rho_0 e^{iHt},\quad \rho'(t) = e^{-iH't}\rho_0 e^{iH't}.
\end{equation}
If we focus on a specific observation $O$, then the expectation values are $\Tr[O\rho(t))]$ and $\Tr[O\rho'(t)]$. 
We will then discuss what the $\ell^1$- and $\ell^2$-error bounds imply for the error $\abs{\Tr[O\rho'(t)]-\Tr[O\rho(t)]}$, which reflects how well we can make predictions based on the learned Hamiltonian. 
Such considerations of the observable, or alternatively the initial state, have featured prominently in works on quantum simulation and related error analysis \cite{chen2021concentration,ZhaoZhouEtAl2022hamiltonian,BornsWeil2022uniform,SuHuangCampbell2021nearly,AnFangLin2021time,FangVilanova2023observable,TongAlbertEtAl2022provably,CaiTongPreskill2024}.

Here we will first state our conclusions: the $\ell^1$-error bound provides a guarantee that we can accurately predict any observable $O$ whose spectral norm $\|O\|$ is not too large, for any initial state (see \eqref{eq:expectation_val_l1_bound}); the $\ell^2$-error bound guarantees that we can accurately predict any observable $O$ whose Frobenius norm $\|O\|_2$ is not too large (these are typically low-rank operators) for an initial state drawn from a Haar distribution or a 2-design (see \eqref{eq:expectation_val_l2_bound}).

\subsection{Operational interpretation of the $\ell^1$-norm}
We first consider the $\ell^1$-norm distance between the coefficient vectors
\begin{equation}
    \|\boldsymbol{\coef}-\boldsymbol{\coef}'\|_1 = \sum_P \abs{\coef_P - \coef_P'}.
\end{equation}
By Schrodinger's equation, we have the ordinary differential equation for $\rho'$:
\begin{equation}
    \frac{\dd}{\dd t}\rho'(t) = -i[H',\rho'(t)] = -i[H,\rho'(t)] -i[H'-H,\rho'(t)].
\end{equation}
By Duhamel's principle, we have
\begin{equation}
    \rho'(t) = \rho(t) - i \int_0^t\dd s e^{-iH(t-s)}[H'-H, \rho'(s)] e^{iH(t-s)}.
\end{equation}
Therefore, the trace distance between $\rho(t)$ and $\rho'(t)$ can be bounded as:
\begin{equation}
    \|\rho'(t)-\rho(t)\|_{1} \leq  2t\|H'-H \|_{\infty}.
\end{equation}
Here $\|\cdot\|_p$ denotes the Schatten $p$-norm. Specifically, $p=1$ corresponds to the trace norm, and $p=\infty$ corresponds to the spectral norm. The spectral norm of the difference between $H$ and $H'$ can be bounded by
\begin{equation}
    \|H-H'\|_\infty \leq \|\boldsymbol{\coef}-\boldsymbol{\coef}'\|_1.
\end{equation}
Thus,
\begin{equation}
\label{eq:bounding_trace_norm_with_l1_norm}
    \|\rho(t)-\rho'(t)\|_1\leq 2t\|\boldsymbol{\coef}-\boldsymbol{\coef}'\|_1.
\end{equation}

The trace norm distance between $\rho(t)$ and $\rho(t)'$ is upper bounded by the $\ell^1$-norm distance between the two coefficient vectors, scaling linearly with evolution time.

Consider the expectation value of measuring $\rho(t)$ and $\rho'(t)$ with observable $O$,
\begin{equation}
\label{eq:difference_between_expval}
    \abs{\expval{O}_{\rho(t)} - \expval{O}_{\rho'(t)}} = \abs{\Tr[O(\rho(t)-\rho'(t))]} \leq \|O\|_\infty\|\rho(t)-\rho'(t)\|_1.
\end{equation}
The second inequality comes from the H\"older's inequality for Schatten norms. Using the bound in \eqref{eq:bounding_trace_norm_with_l1_norm}, the difference between the expectation values can be bounded by
\begin{equation}
\label{eq:expectation_val_l1_bound}
    \abs{\expval{O}_{\rho(t)} - \expval{O}_{\rho'(t)}} \leq 2t\|O\|_\infty \|\boldsymbol{\coef}-\boldsymbol{\coef}'\|_1.
\end{equation}
Note that the above is true for all $\rho_0$ and $O$, and therefore the $\ell^1$-error bound provides us with a worst case guarantee on how well we can predict expectation values of observables.

\subsection{Operational interpretation of the $\ell^2$-norm}
We now consider the $\ell^2$-norm distance between the coefficient vectors
\begin{equation}
    \|\boldsymbol{\coef}-\boldsymbol{\coef}'\|_2 = \sqrt{\sum_a \left(\coef_a - \coef_a'\right)^2}.
\end{equation}
Parseval's identity provides that the $\ell^2$-norm of the distance between the coefficient vectors equals the Frobenious norm of $H-H'$:
\begin{equation}
    \frac{1}{\sqrt{2^n}}\|H-H'\|_2 = \sqrt{\sum_i (\coef_i-\coef_i')^2}.
\end{equation}
Following \cite{bluhm2024hamiltonian}, we consider the time evolution channel of $H$ and $H'$
\begin{equation}
    \mathcal{U}_t(\rho) = e^{-iHt}\rho e^{iHt},\quad \mathcal{U}'_t(\rho) = e^{-iH't}\rho e^{iH't}
\end{equation}
and the normalized Choi states of $\mathcal{U}_t$ and $\mathcal{U}'_t$ as
\begin{equation}
        \mathcal{C}(\mathcal{U}_t) = (\mathcal{U}_t\otimes\mathcal{I})(\Omega)\quad \mathcal{C}(\mathcal{U}'_t) = (\mathcal{U}'_t\otimes\mathcal{I})(\Omega),
\end{equation}
where $\Omega = \ketbra{\Omega}{\Omega}$ is the maximally entangled state, and $\mathcal{I}$ is the identity channel. 
We introduce the distance metric $\mathcal{D}(\mathcal{U}_t,\mathcal{U}_t') = \frac{1}{\sqrt{2}}\|\mathcal{C}(\mathcal{U}_t)-\mathcal{C}(\mathcal{U}_t')\|_2$.
\begin{lem}
    \label{lem:difference_unitary_channels_choi_states}
    The following inequality holds
    \[
    D\left(\mathcal{U}_t, \mathcal{U}'_t\right)\leq \frac{t}{\sqrt{2^n}}\|H-H'\|_2.
    \]
\end{lem}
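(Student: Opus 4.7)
The plan is to translate the Choi-state distance into a Euclidean distance of the purifying vectors and then bound the difference of unitaries $e^{-iHt}$ and $e^{-iH't}$ in the Frobenius norm via Duhamel's formula.

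First, since both channels are unitary, their normalized Choi states are pure: write $|\Phi_U\rangle = (U\otimes I)|\Omega\rangle$ with $U = e^{-iHt}$, and similarly $|\Phi_{U'}\rangle$ with $U' = e^{-iH't}$, so that $\mathcal{C}(\mathcal{U}_t) = |\Phi_U\rangle\langle\Phi_U|$ and $\mathcal{C}(\mathcal{U}'_t) = |\Phi_{U'}\rangle\langle\Phi_{U'}|$. The standard pure-state identity $\| |\phi_1\rangle\langle\phi_1| - |\phi_2\rangle\langle\phi_2|\|_2 = \sqrt{2\bigl(1-|\langle\phi_1|\phi_2\rangle|^2\bigr)}$ then gives
\begin{equation*}
D(\mathcal{U}_t,\mathcal{U}'_t) = \sqrt{1 - |\langle\Phi_U|\Phi_{U'}\rangle|^2}.
\end{equation*}

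Next I would pass from this fidelity-like quantity to the vector distance. Using $1 - f^2 \le 2(1-f) \le 2\bigl(1-\mathrm{Re}\langle\Phi_U|\Phi_{U'}\rangle\bigr) = \| |\Phi_U\rangle - |\Phi_{U'}\rangle\|^2$ with $f = |\langle\Phi_U|\Phi_{U'}\rangle|$, we get $D(\mathcal{U}_t,\mathcal{U}'_t) \le \||\Phi_U\rangle - |\Phi_{U'}\rangle\|$. Then using the identity $\langle\Omega|(A\otimes I)|\Omega\rangle = \tfrac{1}{2^n}\mathrm{Tr}(A)$, we compute
\begin{equation*}
\||\Phi_U\rangle - |\Phi_{U'}\rangle\|^2 = \langle\Omega|\bigl((U-U')^\dagger(U-U')\otimes I\bigr)|\Omega\rangle = \frac{1}{2^n}\|U-U'\|_2^2.
\end{equation*}

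The remaining step is to bound $\|U - U'\|_2$ using Duhamel's formula,
\begin{equation*}
e^{-iHt} - e^{-iH't} = -i\int_0^t e^{-iH(t-s)}(H-H')\,e^{-iH's}\,\mathrm{d}s,
\end{equation*}
together with the triangle inequality for Schatten-2 and the unitary invariance of $\|\cdot\|_2$, yielding $\|U-U'\|_2 \le t\|H-H'\|_2$. Chaining all the inequalities gives $D(\mathcal{U}_t,\mathcal{U}'_t) \le \tfrac{t}{\sqrt{2^n}}\|H-H'\|_2$, as claimed.

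The proof is essentially a sequence of standard reductions, so there is no single hard step; the only mild subtlety is the first inequality $1 - f^2 \le \||\Phi_U\rangle - |\Phi_{U'}\rangle\|^2$, which requires being careful that taking $1-f$ instead of $1-\mathrm{Re}\langle\Phi_U|\Phi_{U'}\rangle$ only loses a harmless factor, so that no $\sqrt{2}$ slack is introduced and the stated constant is recovered.
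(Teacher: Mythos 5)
Your proof is correct and takes a genuinely different route from the paper's. The paper works at the level of the Choi \emph{density matrices}: it applies Duhamel to the conjugation map directly, so the integrand is the commutator $[(H-H')\otimes I,\, e^{-iHs}\Omega e^{iHs}]$, whose Frobenius norm is evaluated (using the ricochet identity $(A\otimes I)\ket{\Omega}=(I\otimes A^\top)\ket{\Omega}$ and unitary invariance) to be the $s$-independent quantity $\sqrt{2/2^n}\,\|H-H'\|_2$; the prefactor $1/\sqrt{2}$ in the definition of $D$ then cancels the $\sqrt{2}$. You instead exploit the purity of the Choi states to rewrite $D$ as $\sqrt{1-|\langle\Phi_U|\Phi_{U'}\rangle|^2}$, bound it by the \emph{vector} distance $\|\,\ket{\Phi_U}-\ket{\Phi_{U'}}\,\|$ via $1-f^2\le 2(1-f)\le 2\bigl(1-\Re\langle\Phi_U|\Phi_{U'}\rangle\bigr)$, convert that to $\tfrac{1}{\sqrt{2^n}}\|U-U'\|_2$, and only then invoke Duhamel on the \emph{unitaries}. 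Your version is a bit more elementary (no commutator Frobenius norm computation) and does not rely on $H-H'$ being traceless, which the paper's exact evaluation of $\|[(H-H')\otimes I,\Omega]\|_2$ implicitly uses to drop the cross term $\langle\Omega|(H-H')\otimes I|\Omega\rangle^2$; the trade-off is that your first inequality $1-f^2\le\|\,\ket{\Phi_U}-\ket{\Phi_{U'}}\,\|^2$ is lossy in general, whereas the paper's Duhamel step is the only source of slack in their chain. Both recover the stated constant exactly.
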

\begin{proof}
    First note that
    \[
    (e^{-iHt}\otimes I)\ket{\Omega} = (I\otimes e^{-iH^\top t})\ket{\Omega}.
    \]
    Then we have
    \[
    \begin{aligned}
        [(H-H')\otimes I,(e^{-iHt}\otimes I)\Omega (e^{iHt}\otimes I)] &=[(H-H')\otimes I,(I\otimes e^{-iH^\top t})\Omega (I\otimes e^{iH^\top t})] \\
        &= (I\otimes e^{-iH^\top t})[(H-H')\otimes I,\Omega] (I\otimes e^{iH^\top t}),
    \end{aligned}
    \]
    where we have used the fact that $I\otimes e^{-iH^\top t}$ commute with $(H-H')\otimes I$.
    Therefore
    \[
    \|[(H-H')\otimes I,(e^{-iHt}\otimes I)\Omega (e^{iHt}\otimes I)]\|^2_2 = \|[(H-H')\otimes I,\Omega]\|_2^2 = \frac{2}{2^n}\|H-H'\|^2_2.
    \]
    Then because
    \[
    \begin{aligned}
        \mathcal{U}'_t(\Omega) -  \mathcal{U}_t(\Omega) 
        &= (e^{-iH't}\otimes I)\Omega (e^{iH't}\otimes I)- (e^{-iHt}\otimes I)\Omega (e^{iHt}\otimes I) \\
        &= -i\int_0^t e^{-iH's}[(H-H')\otimes I,(e^{-iHs}\otimes I)\Omega (e^{iHs}\otimes I)]e^{iH's} \dd s,
    \end{aligned}
    \]
    we have
    \[
    \begin{aligned}
        D\left(\mathcal{U}_t, \mathcal{U}'_t\right) &\leq \frac{1}{\sqrt{2}}\int_0^t \|[(H-H')\otimes I,(e^{-iHs}\otimes I)\Omega (e^{iHs}\otimes I)]\|_2\dd s \\
        &= \frac{1}{\sqrt{2^n}}\|H-H'\|_2 t.
    \end{aligned}
    \]
\end{proof}

If the initial state is a pure state randomly sampled from the Haar distribution, $\rho_0 = \ketbra{\psi_0}{\psi_0}$, following \cite{bao2023testing}, we have:
\begin{equation}
    \mathbb{E}_{|\psi_0\rangle \sim \text {Haar}_n}\left[\| U_t\ketbra{\psi_0}{\psi_0}U_t^\dagger-U_t'\ketbra{\psi_0}{\psi_0} U_t'^\dagger \|_2^2\right]=\frac{2^{n+1}}{2^n+1} \mathcal{D}(\mathcal{U}_t,\mathcal{U}_t')^2.
\end{equation}
Thus,
\begin{equation}
\label{eq:bounding_exp_frobenious_norm_with_l2_norm}
    \sqrt{\mathbb{E}_{|\psi_0\rangle \sim \text {Haar}_n}\left[\| \rho(t)-\rho'(t)\|_2^2\right]} \leq \sqrt{\frac{2}{2^n+1}}t\|H-H'\|_2\leq \sqrt{2}t\|\boldsymbol{\coef}-\boldsymbol{\coef}'\|_2.
\end{equation}

As in \eqref{eq:difference_between_expval}, we consider the difference between the expectation values of observable $O$ when measuring $\rho(t)$ and $\rho'(t)$, with the same pure initial state $\rho_0=\ketbra{\psi_0}{\psi_0}$. 
By using H\"older's inequality for Schatten norms, we have
\begin{equation}
    \abs{\expval{O}_{\rho(t)} - \expval{O}_{\rho'(t)}} = \abs{\Tr[O(\rho(t)-\rho'(t))]}\leq \|O(\rho(t)-\rho'(t)\|_1 \leq \|O\|_2\|\rho(t)-\rho'(t)\|_2.
\end{equation}
Taking the expectation value among the Haar random distributed initial state $\ket{\psi_0}$, we can obtain
\begin{equation}
\label{eq:expectation_val_l2_bound}
    \mathbb{E}_{|\psi_0\rangle \sim \text {Haar}_n}\left[ \abs{\Tr[O\rho(t)] - \Tr[O\rho'(t)]}^2\right] \leq \|O\|_2^2 \mathbb{E}_{|\psi_0\rangle \sim \text {Haar}_n}\left[|\rho(t)-\rho'(t)\|_2^2\right]\leq 2t^2\|O\|_2^2\|\boldsymbol{\coef}-\boldsymbol{\coef}'\|_2^2.
\end{equation}
The operators $O$ for which $\|O\|_2$ is small include those that are low-rank or approximately low-rank.
In the above, we have only used the second moment of $\rho_0=\ketbra{\psi_0}{\psi_0}$, and therefore we only need $\ket{\psi_0}$ to be drawn from a 2-design rather than the exact Haar distribution. 
From the above discussion we have seen that an $\ell^2$-error bound guarantees the average accuracy of predicting the expectation value of a low-rank observable for initial states drawn from a $2$-design.

\section{Discussion}

In this work we developed a protocol that efficiently learns a $k$-body Hamiltonian consisting of $M$ Pauli terms with nearly Heisenberg-limited scaling. 
Compared to previous works such as \cite{HuangTongFangSu2023learning,dutkiewicz2023advantage,bakshi2024structure}, our main contribution is that the new algorithm does not rely on any kind of geometric locality: 
the Pauli terms can involve qubits anywhere in the system, and we do not require the combined interaction strength on a qubit to be bounded, which is needed in \cite{bakshi2024structure}. 
Moreover, our protocol only requires single-qubit interleaving operations as opposed to the multi-qubit adaptive operations used in \cite{dutkiewicz2023advantage,bakshi2024structure}, and is robust to a constant amount of SPAM error. 
These features are all essential to practical implementation on current hardware.

Our results naturally lead to several open problems to be considered in future works. 
It can be readily seen that between the total evolution time lower bound for learning with the $\ell^1$-metric in Theorem~\ref{thm:lower_bound} and the upper bound in Theorem~\ref{thm:ham_learning_upper_bound}, there is a gap of order $\sqrt{M}$, ignoring logarithmic factors. 
In fact, the lower bound is tight, up to logarithmic factors, for Hamiltonians with geometrically local Pauli terms, as the learning protocol in \cite{HuangTongFangSu2023learning} provides a matching upper bound in this scenario. 
The mismatch between the current upper bound and lower bound then points to two interesting directions: can we further strengthen our algorithm to achieve better dependence on $M$, or can we strengthen the lower bound by accounting for the non-geometrically local nature of the Hamiltonian? 
If the latter can be achieved, we will be able to show rigorously that the lack of geometric locality fundamentally changes the difficulty of the task.

In this work we focused on qubit Hamiltonians, and it is natural to consider how the present approach extends to bosonic and fermionic Hamiltonians. 
The bosonic Hamiltonians, because of the infinite local degrees of freedom, may require us to change our experimental design and post-processing, and so do the fermionic Hamiltonians because of the more complicated commutation relations between terms.

Our learning protocol requires the preparation of GHZ states to achieve the desired $M$ and $n$ scaling. 
If restricted to only using product states as done in \cite{HuangTongFangSu2023learning}, can we still achieve the current scaling, or can we derive a lower bound that prohibits such scaling? 
These are all questions that should be explored in future works.

\section*{Acknowledgements}
The authors thank Anurag Anshu, Zofia Adamska, Ainesh Bakshi, Sary Bseiso, Sitan Chen, Matthew C. Ding, Zihan Hao, Hong-Ye Hu, Jun Ikeda, Liang Jiang, Akshar D. Ramkumar, Thomas Schuster, Jamie Sikora, Samson Wang, John Wright, Qi Ye, Haimeng Zhao, and Xincheng Zhang for helpful discussions. 
M.M. acknowledges funding from Caltech Summer Undergraduate Research Fellowship (SURF). 
Y.T. acknowledges funding from U.S. Department of Energy Office of Science, Office of Advanced Scientific Computing Research, DE-SC0020290.
J.P. acknowledges support from the U.S. Department of Energy Office of Science, Office of Advanced Scientific Computing Research (DE-NA0003525, DE-SC0020290), the U.S. Department of Energy, Office of Science, National Quantum Information Science Research Centers, Quantum Systems Accelerator, and the National Science Foundation (PHY-1733907). 
The Institute for Quantum Information and Matter is an NSF Physics Frontiers Center.

\appendix

\section{The Hamiltonian reshaping error bound}
\label{sec:ham_reshape_err_bound}

In this section we prove Theorem~\ref{thm:hamiltonian_reshaping}, which we restate here:
\begin{thm*}
    Let $\beta$ be a length-$n$ string of $x,y,z$, and let $\mathcal{K}_{\beta}$ be as defined in \eqref{eq:random_unitaries_basis}.
    Let $U$ be the random unitary defined in \eqref{eq:hamiltonian_reshaping}. Let $V=e^{-iH_{\mathrm{eff}}t}$ for $H_{\mathrm{eff}}$ given in \eqref{eq:effective_hamiltonian_terms}, and $t=r\tau$.
    We define the quantum channels $\mathcal{U}$ and $\mathcal{V}$ be
    \[
    \mathcal{U}(\rho) = \mathbb{E}[U\rho U^\dag],\quad \mathcal{V}(\rho) = V\rho V^\dag.
    \]
    Then 
    \[
    \|\mathcal{U}-\mathcal{V}\|_{\diamond}\leq \frac{4M^2 t^2}{r}.
    \]
\end{thm*}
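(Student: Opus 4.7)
The plan is to reduce the problem to a single-step diamond-norm bound and then use submultiplicativity over $r$ steps. Setting $\mathcal{E}_\tau(\rho) := \frac{1}{2^n}\sum_{Q \in \mathcal{K}_\beta} e^{-iH_Q\tau}\rho\, e^{iH_Q\tau}$ with $H_Q = QHQ$, and $\mathcal{V}_\tau(\rho) := e^{-iH_{\mathrm{eff}}\tau}\rho\, e^{iH_{\mathrm{eff}}\tau}$, one has $\mathcal{U} = \mathcal{E}_\tau^{\circ r}$ and $\mathcal{V} = \mathcal{V}_\tau^{\circ r}$. Both $\mathcal{E}_\tau$ and $\mathcal{V}_\tau$ are CPTP, hence diamond-norm contractions, so a standard telescoping argument yields $\|\mathcal{E}_\tau^{\circ r} - \mathcal{V}_\tau^{\circ r}\|_\diamond \le r\,\|\mathcal{E}_\tau - \mathcal{V}_\tau\|_\diamond$. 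It therefore suffices to prove the single-step bound $\|\mathcal{E}_\tau - \mathcal{V}_\tau\|_\diamond \le 4M^2\tau^2$; substituting $\tau = t/r$ and multiplying by $r$ then produces the desired $4M^2 t^2/r$.

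For the single-step estimate, the crucial observation is that $\mathcal{E}_\tau$ and $\mathcal{V}_\tau$ already agree through first order in $\tau$: Lemma~\ref{lem:Hamiltonian_reshaping_Pauli_term} applied termwise to $H$ gives $\frac{1}{2^n}\sum_Q H_Q = H_{\mathrm{eff}}$, so both channels expand as $\rho - i\tau[H_{\mathrm{eff}},\rho] + O(\tau^2)$. I would then apply Taylor's theorem with the integral form of the remainder to each channel, obtaining the exact identity
\begin{equation*}
\mathcal{E}_\tau - \mathcal{V}_\tau \;=\; \int_0^\tau (\tau - s)\bigl(\mathcal{E}_s'' - \mathcal{V}_s''\bigr)\,ds,
\end{equation*}
where differentiating twice gives $\mathcal{V}_s''(\rho) = -[H_{\mathrm{eff}},[H_{\mathrm{eff}},\mathcal{V}_s(\rho)]]$ and $\mathcal{E}_s''(\rho) = -\frac{1}{2^n}\sum_Q [H_Q,[H_Q, e^{-iH_Q s}\rho\, e^{iH_Q s}]]$. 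Because the Taylor expansion commutes with the finite ensemble average, both derivatives are well-defined and the first-order cancellation is exact.

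The remaining step is to bound these second-derivative superoperators uniformly in $s$. The elementary inequality $\|[A,\cdot]\|_\diamond \le 2\|A\|$, applied twice and combined with the fact that unitary conjugation is a diamond-norm isometry, gives $\|\mathcal{V}_s''\|_\diamond \le 4\|H_{\mathrm{eff}}\|^2$ and $\|\mathcal{E}_s''\|_\diamond \le \frac{1}{2^n}\sum_Q 4\|H_Q\|^2$. Since $H$ is a sum of $M$ Pauli terms with coefficients in $[-1,1]$, one has $\|H\|\le M$, and therefore $\|H_Q\| = \|H\| \le M$ (as $Q$ is unitary) and $\|H_{\mathrm{eff}}\| \le M$ (as its coefficients form a subset of $H$'s with the same magnitudes). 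Hence $\|\mathcal{E}_s'' - \mathcal{V}_s''\|_\diamond \le 8M^2$, and integrating against $(\tau - s)$ over $[0,\tau]$ yields precisely $\|\mathcal{E}_\tau - \mathcal{V}_\tau\|_\diamond \le 4M^2\tau^2$. I do not anticipate a serious technical obstacle; the one subtlety to watch is that the averaged channel $\mathcal{E}_\tau$ is not itself a single unitary conjugation or semigroup, which is why I keep the sum over $Q$ inside the Taylor remainder rather than attempting to write $\mathcal{E}_\tau$ as a single exponential of some superoperator.
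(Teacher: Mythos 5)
Your proof is correct and follows essentially the same route as the paper's: telescope the $r$-step channel difference into single-step differences using the CPTP contraction property, observe that the random-unitary step and the effective-Hamiltonian step agree through first order in $\tau$ (after averaging over $Q$, which is exactly where Lemma~\ref{lem:Hamiltonian_reshaping_Pauli_term} enters), and bound the second-order Taylor remainder by $O(M^2\tau^2)$ using $\|H\|, \|H_{\mathrm{eff}}\|\le M$. The only cosmetic differences are that you use the integral form of the Taylor remainder and work directly with the superoperator bound $\|[A,\cdot]\|_\diamond\le 2\|A\|$, whereas the paper uses the Lagrange form and traces against observables $O$ with $\|O\|\le 1$; these are equivalent and lead to the same $4M^2 t^2/r$ bound.
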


\begin{proof}
    Recall from \eqref{eq:hamiltonian_reshaping} that
    \[
    U = Q_r e^{-iH\tau} Q_r\cdots Q_1 e^{-iH\tau} Q_1.
    \]
    We define
    \[
    U_j = Q_j e^{-iH\tau} Q_j,
    \]
    and the quantum channel $\mathcal{U}_j$
    \[
    \mathcal{U}_j(\rho) = \mathbb{E}_{Q_j} U_j\rho U_j^\dag.
    \]
    Because each $Q_j$ is chosen independently, we have
    \[
    \mathcal{U} = \mathcal{U}_r \mathcal{U}_{r-1}\cdots \mathcal{U}_1.
    \]
    We then define $\mathcal{V}_\tau(\rho) = e^{-iH_{\mathrm{eff}}\tau}\rho e^{iH_{\mathrm{eff}}\tau}$, which then gives us $\mathcal{V} = \mathcal{V}_\tau^r$. We will then focus on obtaining a bound for $\|\mathcal{U}_j-\mathcal{V}_\tau\|_{\diamond}$.

    Consider a quantum state $\rho$ on the current system and an auxiliary system denoted by $\alpha$, and an observable $O$ on the combined quantum system such that $\|O\|\leq 1$. We have
    \[
    \begin{aligned}
        &\Tr[O (I_\alpha\otimes U_j)\rho (I_\alpha\otimes U_j^\dag)] \\
        &= \Tr[O (I_\alpha\otimes (Q_j e^{-iH\tau}Q_j))\rho (I_\alpha\otimes (Q_j e^{iH\tau}Q_j))] \\
        &= \Tr[O\rho] - i\tau \Tr[O[I_\alpha\otimes (Q_j H Q_j),\rho]] \\
        &-\frac{\tau^2}{2}\Tr[O(I_\alpha\otimes Q_j)[I_\alpha\otimes H,[I_\alpha\otimes H, e^{-iH\tau}(I_\alpha\otimes Q_j)\rho (I_\alpha\otimes Q_j)e^{iH\tau}]](I_\alpha\otimes Q_j)],
    \end{aligned}
    \]
    where we are using Taylor's theorem with the remainder in the Lagrange form. This tells us that
    \[
    \left|\Tr[O (I_\alpha\otimes U_j)\rho (I_\alpha\otimes U_j^\dag)] - (\Tr[O\rho] - i\tau \Tr[O[I_\alpha\otimes (Q_j H Q_j),\rho]] )\right|\leq 2\|H\|^2\tau^2
    \]
    Because the two terms on the left-hand side, when we take the expectation value with respect to $Q_j$, become
    \[
    \begin{aligned}
        &\mathbb{E}_{Q_j} \Tr[O (I_\alpha\otimes U_j)\rho (I_\alpha\otimes U_j^\dag)] = \Tr[O (\mathcal{I}_\alpha\otimes \mathcal{U}_j)(\rho)] \\
        &\mathbb{E}_{Q_j} \big[\Tr[O\rho] - i\tau \Tr[O[I_\alpha\otimes (Q_j H Q_j),\rho]] \big] = \Tr[O\rho] - i\tau \Tr[O[I_\alpha\otimes H_{\mathrm{eff}},\rho]],
    \end{aligned}
    \]
    we have
    \begin{equation*}
         \left|\Tr[O (\mathcal{I}_\alpha\otimes \mathcal{U}_j)(\rho)] - (\Tr[O\rho] - i\tau \Tr[O[I_\alpha\otimes H_{\mathrm{eff}},\rho]])\right|\leq 2\|H\|^2\tau^2.
    \end{equation*}
    Because this is true for all $O$ such that $\|O\|\leq 1$,
    \begin{equation}
    \label{eq:random_unitary_short_time_approx}
        \|(\mathcal{I}_\alpha\otimes \mathcal{U}_j)(\rho)-(\rho-i\tau [I_\alpha\otimes H_{\mathrm{eff}},\rho])\|_1\leq 2\|H\|^2\tau^2.
    \end{equation}
    Similarly we can also show that
    \begin{equation}
    \label{eq:effective_ham_short_time_approx}
        \|(\mathcal{I}_\alpha\otimes \mathcal{V}_\tau)(\rho)-(\rho-i\tau [I_\alpha\otimes H_{\mathrm{eff}},\rho])\|_1\leq 2\|H_{\mathrm{eff}}\|^2\tau^2.
    \end{equation}
    Combining \eqref{eq:random_unitary_short_time_approx} and \eqref{eq:effective_ham_short_time_approx}, and by the triangle inequality, we have
    \begin{equation*}
        \|(\mathcal{I}_\alpha\otimes \mathcal{U}_j)(\rho)-(\mathcal{I}_\alpha\otimes \mathcal{V}_\tau)(\rho)\|_1\leq 2\|H\|^2\tau^2 + 2\|H_{\mathrm{eff}}\|^2\tau^2.
    \end{equation*}
    Because the above is true for all auxiliary system $\alpha$ and for all quantum states on the combined quantum system, we have
    \begin{equation}
        \|\mathcal{U}_j-\mathcal{V}_\tau\|_{\diamond}\leq 2\|H\|^2\tau^2 + 2\|H_{\mathrm{eff}}\|^2\tau^2.
    \end{equation}

    Next, we observe that
    \[
    \mathcal{U}-\mathcal{V} = \sum_{k=1}^r \mathcal{U}_r\cdots \mathcal{U}_{k+1}(\mathcal{U}_k-\mathcal{V}_\tau)\mathcal{V}_\tau^{k-1}.
    \]
    Therefore
    \[
    \|\mathcal{U}-\mathcal{V}\|_\diamond\leq \sum_{k=1}^r \|\mathcal{U}_k-\mathcal{V}_\tau\|_{\diamond}\leq r(2\|H\|^2\tau^2 + 2\|H_{\mathrm{eff}}\|^2\tau^2).
    \]
    The bound in the Theorem follows by observing that $\|H\|,\|H_{\mathrm{eff}}\|\leq M$, and $\tau = t/r$.
\end{proof}

\section{Technical lemmas for compressed sensing}
\label{sec:techinical_lemmas_compressed_sensing}
We start by introducing the orthogonal systems and the bounded orthogonal systems (BOS). Then we will move on to show that the weight-$k$ Hadamard matrix introduced in Definition~\ref{defn:weight-$k$ Hadamard matrix} is a BOS. 
After that, we will introduce the restricted isometry property, which is essential for the successful reconstruction of sparse vectors with compressed sensing. 
With that, we will show that the weight-$k$ Hadamard matrix has the desired restricted isometry property, which in turn leads to the error bound for reconstructing the sparse vector.

\begin{defn}[Orthonormal Systems]
    Let $\mathcal{D} \subset \mathbb{R}^N$ be endowed with a probability measure $\nu$. We call $\Phi=\left\{\phi_1, \ldots, \phi_N\right\}$ an orthonormal system of complex-valued functions on $\mathcal{D}$ if, for $j,k\in[N]$,
    \begin{equation}
    \label{eq: Orthonormal Systems}
        \int_{\mathcal{D}} \phi_j(\mathbf{t}) \overline{\phi_k(\mathbf{t})} d \nu(\mathbf{t})=\delta_{j, k}= \begin{cases}0 & \text { if } j \neq k, \\ 1 & \text { if } j=k ,\end{cases}
    \end{equation}
\end{defn}

Upon orthonormal systems, we can define bounded orthonormal systems.

\begin{defn}(Bounded Orthonormal Systems, \cite[Theorem~12.32]{FoucartRauhut2013})
    \label{def: BOS}
    We call $\Phi=\left\{\phi_1, \ldots, \phi_N\right\}$ a bounded orthonormal system (BOS) with constant $K$ if it satisfies Equation~\eqref{eq: Orthonormal Systems} and if
    \begin{equation}
    \label{eq: BOS}
        \left\|\phi_j\right\|_{\infty}:=\sup _{\mathbf{t} \in \mathcal{D}}\left|\phi_j(\mathbf{t})\right| \leq K \quad \text { for all } j \in[N] \text {. }
    \end{equation}
\end{defn}
The smallest value that the constant $K$ can take is $1$ since
\begin{equation}
    1=\int_{\mathcal{D}}\left|\phi_j(\mathbf{t})\right|^2 d \nu(\mathbf{t}) \leq \sup _{\mathbf{t} \in \mathcal{D}}\left|\phi_j(\mathbf{t})\right|^2 \int_{\mathcal{D}} d \nu(\mathbf{t}) \leq K^2 .
\end{equation}
When the equality $K=1$ is reached, it is necessary to have $\left|\phi_j(\mathbf{t})\right|=1$ for $\nu$. 

Consider the weight-$k$ Hadamard matrix defined in Definition~\ref{defn:weight-$k$ Hadamard matrix} with normalization factor $\frac{1}{\sqrt{2^n}}$. 
Denote the columns of this normalized weight-$k$ Hadamard matrix as $c_1,c_2,\ldots,c_D$. Since they are columns of the original Hadamard matrix, they are mutually orthogonal. 
Moreover, the BOS that consists of all columns of the original Hadamard matrix has the bounding constant $K=1$. 
Then the bounding constant for its subsystem is at most $K=1$. 
Since $K\geq 1$, we have the bounding constant $K$ for the BOS of columns of weight-$k$ Hadamard matrix is one.

We can consider functions in a BOS, of the form
\begin{equation}
\label{eq: functions on BOS}
    f(\mathbf{t})=\sum_{k=1}^N x_k \phi_k(\mathbf{t}), \quad \mathbf{t} \in \mathcal{D} .
\end{equation}
Let $\mathbf{t}_1, \ldots, \mathbf{t}_\Gamma \in \mathcal{D}$ be some sampling points and suppose we have given the sample values
\begin{equation}
    y_{\ell}=f\left(\mathbf{t}_{\ell}\right)=\sum_{k=1}^N x_k \phi_k\left(\mathbf{t}_{\ell}\right), \quad \ell \in[\Gamma].
\end{equation}
We can introduce the sampling matrix $\mathbf{A} \in \mathbb{C}^{\Gamma \times N}$ with entries
\begin{equation}
\label{eq: random sampling matrix}
    A_{\ell, k}=\phi_k\left(\mathbf{t}_{\ell}\right), \quad \ell \in[\Gamma], k \in[N].
\end{equation}
the vector $\mathbf{y}=\left[y_1, \ldots, y_\Gamma\right]^{\top}$ of sample values (measurements) can be written in the form
\begin{equation}
\label{eq: sampling function}
    \mathbf{y}=\mathbf{A} \mathbf{x}
\end{equation}
where $\mathbf{x}=\left[x_1, \ldots, x_N\right]^{\top}$ is the vector of coefficients in Equation~\eqref{eq: functions on BOS}.

Sometimes, we are interested in discrete systems, where instead of having $\mathcal{D} \subset \mathbb{R}^N$, we have $\mathcal{D} = \mathbb{F}_2^n$. 
In this case, we can consider the $\Gamma$ rows of the sampling matrix $\mathbf{A}\in\mathbb{C}^{\Gamma \times 2^n}$ to be sampled from the $2^n$ rows of the transform matrix describing this system $\mathbf{T}\in\mathcal{C}^{2^n\times 2^n}$ at random.

For a discrete BOS described by a transfer matrix $\mathbf{T}\in\mathbf{C}^{N\times N}$, we consider a random sampling matrix $\mathbf{A}\in\mathcal{C}^{\Gamma\times N}$, where $\mathbf{A}$ is a sub-matrix of $\mathbf{T}$ with its rows sampled from all rows of $\mathbf{T}$ uniformly at random. 
As in Equation~\eqref{eq: sampling function}, we define the vector $\mathbf{y}=\left[y_1, \ldots, y_\Gamma\right]^{\top}$ as the sampling result, where
\begin{equation}
    y_{\ell}=\sum_{k=1}^N  T_{\ell,k}x_k + e_\ell, \quad \ell \in[\Gamma],
\end{equation}
and $\mathbf{x}=\left[x_1, \ldots, x_N\right]^{\top}$ is the vector of coefficients to be reconstructed. 
Here, a term $e$ is added to represent noisy sampling. 
To write it in a more compact form
\begin{equation}
    \mathbf{y} = \mathbf{A}\mathbf{x} + \mathbf{e}.
\end{equation}

To reconstruct $\mathbf{x}$ with random sampling results in the presence of noise $\mathbf{e}$, robust sparse recovery methods can be used, which is introduced in Theorem~\ref{thm:compressed_sensing_main_thm}. 
These methods require the sampling matrix $\mathbf{A}$ to have the restricted isometry property (RIP).

\begin{defn}(The $M$th restricted isometry constant, \cite[Definition~6.1]{FoucartRauhut2013})
\label{def:RIP}
The $M$th restricted isometry constant $\delta_M=\delta_M(\mathbf{A})$ of a matrix $\mathbf{A}\in\mathbb{C}^{\Gamma \times N}$ is the smallest $\delta>0$ such that
\begin{equation}
    (1-\delta)\|\mathbf{x}\|_2^2 \leq\|\mathbf{A} \mathbf{x}\|_2^2 \leq(1+\delta)\|\mathbf{x}\|_2^2
\end{equation}
for all $M$-sparse vector $\mathbf{x}\in\mathbb{C}^N$. 
Equivalently, it is given by
\begin{equation}
\label{eq: equivalent_definition_of_RIC}
\delta_M=\max _{\mathbf{M} \subseteq[N], \operatorname{card}(\mathbf{M}) \leq M}\left\|\mathbf{A}_\mathbf{M}^* \mathbf{A}_\mathbf{M}-\mathbf{I d}\right\|_{2 \rightarrow 2}
\end{equation}
where $\mathbf{A}_\mathbf{M}$ is the sub-matrix of $\mathbf{A}$ with only those columns with indices in set $\mathbf{M}$.
\end{defn}
With the restricted isometry constant, we can then define the restricted isometry property.
\begin{defn}[Restricted isometry property]
 A matrix $\mathbf{A}$ is said to satisfy the restricted isometry property is $\delta_M = \delta_M(\mathbf{A})$ is small for reasonably large $M$. 
\end{defn}

To achieve sparse reconstruction from noisy samples, one way is to utilize the basic pursuit, i.e.\ doing $\ell^1$-minimization. 
To be more specific, with sparse vector $\mathbf{x}$, sampling matrix $\mathbf{A}$, sampling noise $\mathbf{e}$, and noisy sample $\mathbf{y}=\mathbf{Ax+e}$. 
With $\|e\|_2\leq \eta$, the basis pursuit reconstruct $\mathbf{x}$ from $\mathbf{y}$ and $\mathbf{A}$ by solving a $\ell^1$-minimization problem
\begin{equation}
    \label{eq: basis_pursuit}
    \underset{\mathbf{z} \in \mathbb{C}^N}{\operatorname{minimize}}\|\mathbf{z}\|_1 \quad \text { subject to }\|\mathbf{A} \mathbf{z}-\mathbf{y}\|_2 \leq \eta.
\end{equation}

Notice that $\ell^1$-minimization problem is a convex optimization problem, so the solution $\mathbf{x}^{\sharp}$ to this problem is unique and is guaranteed to be found. 
As shown in \cite[Theorem~6.11]{FoucartRauhut2013}, if $\mathbf{x}$ is a $M$-sparse vector, and $\mathbf{A}$ has the restricted isometry property with the $2M$-th restricted isometry constant $\delta_{2M}\leq \frac{77-\sqrt{1337}}{82}$, then the unique solution $\mathbf{x}^{\sharp}$ to the $\ell^1$-minimization problem in Eq~\eqref{eq: basis_pursuit} is a good approximation of $\mathbf{x}$
\begin{equation}
    \|\mathbf{x}-\mathbf{x}^{\sharp}\|_p\leq CM^{1/p-1/2} \eta,\quad 1\leq p \leq 2
\end{equation}
where $C$ is a constant only depends on $\delta_{2M}$. 
The proof of this is based on the interpretation of the restricted isometry constant in Eq~\eqref{eq: equivalent_definition_of_RIC}. $\mathbf{A}_\mathbf{M}^* \mathbf{A}_\mathbf{M}$ is close to identity, and minimizing $\|\mathbf{z}\|_1$ will guarantee finding location of the support of $\mathbf{x}$.

With basis pursuit, as long as the sampling matrix $\mathbf{A}$ associated with a BOS has sufficiently good restricted isometry property, robust sparse reconstruction with high accuracy can be guaranteed. 
In the following, we will use in \cite[Theorem~12.32]{FoucartRauhut2013} to show that the sampling matrix associated with the BOS with columns of the normalized weight-$k$ Hadamard matrix can have a sufficiently small restricted isometry constant with high probability, as long as the number of samples $\Gamma$ satisfies Eq~\eqref{eq: number of samples} in Theorem~\ref{thm:compressed_sensing_main_thm}. 
This, in turn, leads to the proof of the reconstruction error bound in Theorem~\ref{thm:compressed_sensing_main_thm}.

\begin{proof}(Proof of Theorem~\ref{thm:compressed_sensing_main_thm})
    Consider the weight-$k$ Hadamard matrix $\mathbf{H}^{(k)}$, which is associated with a BOS system with constant $K=1$. 
    The sampling matrix $\mathbf{A}\in\mathbb{C}^{\Gamma\times D}$ is a matrix whose rows are independently randomly sampled from the rows of $\mathbf{H}^{(k)}$, where $\Gamma$ is the number of samples and $D$ is the number all possible $n$-qubit, $k$-body Pauli-$Z$ strings. 
    When $k=1$ and $\Gamma$ satisfies the conditions in Eq~\eqref{eq: number of samples}, \cite[Theorem~12.32]{FoucartRauhut2013} shows that the sampling matrix $\mathbf{A}$ has the restricted isometry property with the restricted isometry constant $\delta_M\leq \eta_1 + \eta_1^2 + \eta_2$, where $\eta_1, \eta_2\in (0, 1)$ are constants that can be chosen to be small.
    With the restricted isometry property of $\mathbf{A}$, \cite[Theorem~6.11]{FoucartRauhut2013} shows that the error for robust reconstruction is bounded by the sparsity $M$ and sampling noise $\eta$, which leads to the error bound for $\mathbf{x}^{\sharp}$ in Theorem~\ref{thm:compressed_sensing_main_thm}.
\end{proof}

\section{Technical lemmas for the lower bound}
\label{sec:Technical lemmas for the lower bound}
\begin{lem}
\label{lem:diamond_distance_unitary_channels}
Consider two quantum channels $\mathcal{U}(t)$ and $\mathcal{V}(t)$ defined through
\[
\mathcal{U}(t)(\rho)=e^{-iH_1t}\rho e^{iH_1t},\quad \mathcal{V}(t)(\rho)=e^{-iH_2t}\rho e^{iH_2t}.
\]
Then 
    \begin{equation}
        \|\mathcal{U}(t)-\mathcal{V}(t)\|_{\diamond} \leq 2\|H_1-H_2\|\cdot t.
    \end{equation}
\end{lem}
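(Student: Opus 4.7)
The plan is to combine two standard steps: reducing the diamond distance between unitary channels to the operator-norm distance between the underlying unitaries, and then controlling that operator-norm distance by $\|H_1-H_2\|\cdot t$ via a Duhamel-type identity.

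First, I would prove the purification-style bound $\|\mathcal{U}(t)-\mathcal{V}(t)\|_\diamond \leq 2\,\|e^{-iH_1t}-e^{-iH_2t}\|_\infty$. By definition of the diamond norm, it suffices to consider an arbitrary pure state $\ket{\psi}$ on the system plus an ancilla and bound the trace norm of
\[
\bigl((e^{-iH_1t}\otimes I)\ket{\psi}\bra{\psi}(e^{iH_1t}\otimes I)\bigr)-\bigl((e^{-iH_2t}\otimes I)\ket{\psi}\bra{\psi}(e^{iH_2t}\otimes I)\bigr).
\]
Setting $\ket{\phi_j}=(e^{-iH_jt}\otimes I)\ket{\psi}$, the trace norm of the difference of two rank-one projectors equals $2\sqrt{1-|\langle\phi_1|\phi_2\rangle|^2}$, which is at most $2\,\|\ket{\phi_1}-\ket{\phi_2}\|$. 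Since this last quantity is $\|((e^{-iH_1t}-e^{-iH_2t})\otimes I)\ket{\psi}\|$, it is bounded above by $\|e^{-iH_1t}-e^{-iH_2t}\|_\infty$, yielding the claimed factor of $2$ after maximizing over $\ket{\psi}$.

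Second, I would bound $\|e^{-iH_1t}-e^{-iH_2t}\|_\infty$ by $\|H_1-H_2\|\cdot t$ through the integral representation
\[
e^{-iH_1t}-e^{-iH_2t} = -i\int_0^t e^{-iH_1(t-s)}(H_1-H_2)e^{-iH_2 s}\,ds,
\]
which follows from differentiating $s\mapsto e^{-iH_1(t-s)}e^{-iH_2 s}$ and applying the fundamental theorem of calculus. Taking operator norms and using unitary invariance of $\|\cdot\|_\infty$ under multiplication by $e^{-iH_1(t-s)}$ and $e^{-iH_2 s}$ gives $\|e^{-iH_1t}-e^{-iH_2t}\|_\infty \leq \|H_1-H_2\|\cdot t$. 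Combining this with the first step yields the desired inequality.

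There is no real obstacle here; both ingredients are textbook. The only mild subtlety is the trace-norm-to-state-vector estimate in the first step, where one must be careful to use $\||\phi_1\rangle\langle\phi_1|-|\phi_2\rangle\langle\phi_2|\|_1=2\sqrt{1-|\langle\phi_1|\phi_2\rangle|^2}$ and then bound $1-|\langle\phi_1|\phi_2\rangle|^2$ by $\|\ket{\phi_1}-\ket{\phi_2}\|^2$; but this is a routine computation and, unlike Theorem~\ref{thm:hamiltonian_reshaping}, no higher-order error terms need be tracked.
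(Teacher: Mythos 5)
Your proof is correct and follows the same two-step strategy as the paper: reduce the diamond distance to the operator-norm distance $\|e^{-iH_1t}-e^{-iH_2t}\|$ (with the factor of $2$), then bound that by a Duhamel-type integral representation and unitary invariance. The only difference is that you explicitly justify the reduction $\|\mathcal{U}(t)-\mathcal{V}(t)\|_\diamond \le 2\|e^{-iH_1t}-e^{-iH_2t}\|$ via the rank-one trace-norm identity and the elementary estimate $\sqrt{1-|\langle\phi_1|\phi_2\rangle|^2}\le\|\ket{\phi_1}-\ket{\phi_2}\|$, whereas the paper asserts this standard inequality without proof.
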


\begin{proof}
    We define
    \begin{equation}
        \begin{aligned}
            U(t) = e^{-iH_1t},\quad V(t) = e^{-iH_2t}.
        \end{aligned}
    \end{equation}
    Consider
    \begin{equation}
            \frac{\dd\left(e^{-iH_1s}e^{-iH_2(t-s)}\right)}{\dd s} = -ie^{-iH_1s}\left(H_1-H_2\right)e^{-iH_2(t-s)},
    \end{equation}
    we have
    \begin{equation}
        \int_0^t \frac{\dd\left(e^{-iH_1s}e^{-iH_2(t-s)}\right)}{\dd s}\dd s = e^{-iH_1s}e^{-iH_2(t-s)}\Big|_0^t=e^{-iH_1t}-e^{-iH_2t}.
    \end{equation}
    Hence,
    \begin{equation}
        e^{-iH_1t}-e^{-iH_2t} = -i\int_0^t e^{-iH_1s}\left(H_1-H_2\right)e^{-iH_2(t-s)}ds.
    \end{equation}
    Since the diamond norm between $\mathcal{U}(t)$ and $\mathcal{V}(t)$ is bounded by the operator norm between the time evolution operators $e^{-iH_1t}$, $e^{-iH_2t}$, we have
    \begin{equation}
    \begin{split}
        \|\mathcal{U}(t)-\mathcal{V}(t)\|_{\diamond} &\leq 2\|e^{-iH_1t}-e^{-iH_2t}\|\\
        & = 2\left\|-i\int_0^t e^{-iH_1s}\left(H_1-H_2\right)e^{-iH_2(t-s)}ds\right\|\\
        &\leq 2\int_0^t \left\|H_1-H_2\right\|ds\\
        & = 2\|H_1-H_2\|t.
    \end{split}
    \end{equation}
\end{proof}

\section{The robust frequency estimation protocol}
\label{sec:robust_frequency_estimation_details}

In this section we introduce the robust frequency estimation protocol in \cite{LiTongNiGefenYing2023heisenberg}, which largely follows the idea of the robust phase estimation protocol in \cite{KimmelLowYoder2015robust}.
We modify the protocol in \cite{LiTongNiGefenYing2023heisenberg} because our goal is to obtain an accurate estimate with large probability rather than having an optimal mean-squared error scaling.
The key tool is the following lemma that allows us to incrementally refine the frequency estimate:

\begin{lem}
    \label{lem:frequency_est_refine}
    Let $\theta\in[a,b]$.
    Let $Z(t)$ be a random variable such that 
    \begin{equation}
        \label{eq:rv_correctness_condition}
        |Z(t)-e^{i\theta t}|< 1/2.
    \end{equation}
    Then we can correctly distinguish between two overlapping cases $\theta\in[a,(a+2b)/3]$ and $\theta\in[(2a+b)/3,b]$ with one sample of $Z(\pi/(b-a))$. 
\end{lem}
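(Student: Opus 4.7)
The plan is to reduce the lemma to a one-shot geometric test on the unit circle. I set $t = \pi/(b-a)$, so that as $\theta$ ranges over $[a,b]$, the phase $\theta t$ sweeps an arc of length exactly $\pi$, and the two overlapping sub-intervals $[a,(a+2b)/3]$ and $[(2a+b)/3,b]$ (each of length $(2/3)(b-a)$, overlapping in a central interval of length $(b-a)/3$) map to two overlapping arcs of length $2\pi/3$ with a common central arc of length $\pi/3$.

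It is convenient to recenter: let $\phi_0 := (a+b)\pi/(2(b-a))$ and $\alpha := \theta t - \phi_0$, so that $\alpha\in[-\pi/2,\pi/2]$. A direct computation shows that $\theta\in[a,(a+2b)/3]$ corresponds to $\alpha\in[-\pi/2,\pi/6]$ and $\theta\in[(2a+b)/3,b]$ corresponds to $\alpha\in[-\pi/6,\pi/2]$. The two cases therefore overlap exactly on $\alpha\in[-\pi/6,\pi/6]$, and outside the overlap we have the sharp separation
\[
\sin(\alpha) < -\tfrac{1}{2}\ \text{in the strict left case},\qquad \sin(\alpha) > \tfrac{1}{2}\ \text{in the strict right case}.
\]

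The test is then to look at the sign of $\mathrm{Im}(e^{-i\phi_0}Z(t))$. The hypothesis $|Z(t)-e^{i\theta t}|<1/2$ is invariant under multiplication by the unit phase $e^{-i\phi_0}$, giving $|e^{-i\phi_0}Z(t)-e^{i\alpha}|<1/2$ and in particular
\[
\bigl|\mathrm{Im}\bigl(e^{-i\phi_0}Z(t)\bigr)-\sin(\alpha)\bigr|<\tfrac{1}{2}.
\]
Combining this with the separation above, in the strict left case $\mathrm{Im}(e^{-i\phi_0}Z(t))<0$, while in the strict right case $\mathrm{Im}(e^{-i\phi_0}Z(t))>0$. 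Declaring ``left'' when $\mathrm{Im}(e^{-i\phi_0}Z(t))\le 0$ and ``right'' otherwise therefore correctly distinguishes the two cases whenever $\theta$ lies strictly in one of them; in the overlap either answer is admissible by the statement of the lemma.

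There is no real obstacle here: the argument is essentially just plane geometry on the unit circle. The only thing to be careful about is the bookkeeping verifying that the endpoints of the sub-intervals really do map to $\pm\pi/6$ after the shift by $\phi_0$, which is where the factor of $\pi/(b-a)$ in the sampling time is forced (any smaller $t$ would not provide the $\pm1/2$ margin in $\sin(\alpha)$ needed to absorb the $|Z(t)-e^{i\theta t}|<1/2$ error).
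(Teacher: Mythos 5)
Your proof is correct and is essentially the same as the paper's: both recenter by multiplying by $e^{-i\phi_0}$ with $\phi_0 = (a+b)\pi/(2(b-a))$, and both decide based on the sign of the imaginary part, using the $\pm 1/2$ separation of $\sin(\alpha)$ outside the central overlap. The only cosmetic difference is that you argue forward from the strict left/right cases while the paper argues from the observed sign back to which interval $\theta$ lies in; these are logically equivalent.
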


\begin{proof}
    These two situations can be distinguished by looking at the value of 
    \[
    f(\theta)=\sin\left(\frac{\pi}{b-a}\left(\theta-\frac{a+b}{2}\right)\right).
    \]
    We know from \eqref{eq:rv_correctness_condition} that
    \[
    \left|\Im\left(e^{-i\frac{(a+b)\pi}{2(b-a)}}Z(\pi/(b-a))\right)-f(\theta)\right|<1/2.
    \]
    where $t$ in \eqref{eq:rv_correctness_condition} is substituted by $\pi/(b-a)$ and a phase factor is added.
    
    If $$\Im\left(e^{-i\frac{(a+b)\pi}{2(b-a)}}Z(\pi/(b-a))\right)\leq 0,$$ then $f(\theta)<1/2$, which implies $\theta\in[a,(a+2b)/3]$. If $$\Im\left(e^{-i\frac{(a+b)\pi}{2(b-a)}}Z(\pi/(b-a))\right)> 0,$$ then $f(\theta)>-1/2$, which implies $\theta\in[(2a+b)/3,b]$.
\end{proof}

Using this lemma we will prove Theorem~\ref{thm:robust_frequency_estimation}, which we restate below:

\begin{thm*}
    Let $\theta\in[-A,A]$.
    Let $X(t)$ and $Y(t)$ be independent random variables satisfying
    \begin{equation}
        \label{eq:rv_correctness_condition_prob}
        \begin{aligned}
            &|X(t)-\cos(\theta t)|< 1/\sqrt{2}, \text{ with probability at least }2/3, \\
            &|Y(t)-\sin(\theta t)|< 1/\sqrt{2}, \text{ with probability at least }2/3.
        \end{aligned}
    \end{equation}
    Then with independent non-adaptive samples $X(t_1),X(t_2),\cdots,X(t_K)$ and $Y(t_1),Y(t_2),\cdots,Y(t_K)$, $t_j\geq 0$, for
    \begin{equation}
        \label{eq:number_of_samples_RPE}
        K=\mathcal{O}(\log(A/\epsilon)(\log(1/q)+\log\log(A/\epsilon))),
    \end{equation}
    \begin{equation}
    \label{eq:total_evolution_time_RPE}
    \begin{aligned}
        T=\sum_{j=1}^Kt_j=\mathcal{O}((1/\epsilon)(\log(1/q)+\log\log(A/\epsilon))),\quad \max_j t_j=\mathcal{O}(1/\epsilon),
    \end{aligned}
    \end{equation}
    we can obtain a random variable $\hat{\theta}$ such that
    \begin{equation}
        \Pr[|\hat{\theta}-\theta|>\epsilon]\leq q.
    \end{equation}
\end{thm*}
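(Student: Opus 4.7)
The plan is a bisection scheme that fixes a geometric sequence of times in advance, amplifies the per-scale confidence by repeated sampling, and applies Lemma~\ref{lem:frequency_est_refine} in post-processing to iteratively shrink the interval containing $\theta$. Concretely, set $K = \lceil \log_{3/2}(A/\epsilon)\rceil$ and define times $t_k = \pi(3/2)^{k-1}/(2A)$ for $k = 1,\ldots,K$; at each $t_k$, take $m = O(\log(K/q))$ independent samples of $X(t_k)$ and $Y(t_k)$. These choices are made before any data is collected, so the protocol is non-adaptive as required.

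In post-processing, start with $[a_0, b_0] = [-A,A]$ and iterate. At step $k$, aggregate the $m$ samples of $X(t_k)$ and $Y(t_k)$ via a median-of-means estimator to produce $\tilde{X}_k, \tilde{Y}_k$; by a Chernoff bound on the Bernoulli event $\{|X_i - \cos(\theta t_k)| < 1/\sqrt{2}\}$ of probability $\ge 2/3$ (and analogously for $Y$), the aggregate $\tilde{Z}_k = \tilde{X}_k + i\tilde{Y}_k$ satisfies the hypothesis of Lemma~\ref{lem:frequency_est_refine} with probability at least $1 - q/K$. By construction $t_k = \pi/(b_{k-1} - a_{k-1})$, so the lemma applied at scale $k$ correctly identifies one of the two overlapping thirds of $[a_{k-1},b_{k-1}]$; update $[a_k,b_k]$ accordingly. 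After $K$ iterations, $b_K - a_K \le 2A(2/3)^K \le 2\epsilon$, so setting $\hat\theta = (a_K+b_K)/2$ gives $|\hat\theta - \theta| \le \epsilon$. A union bound over the $K$ iterations bounds the overall failure probability by $q$. The sample count is $2Km = O(\log(A/\epsilon)(\log(1/q)+\log\log(A/\epsilon)))$, and $\sum_k t_k$ is a geometric series with ratio $3/2$ dominated by $t_K = \Theta(1/\epsilon)$, giving $T = O(m/\epsilon) = O((1/\epsilon)(\log(1/q)+\log\log(A/\epsilon)))$ with $\max_j t_j = t_K = O(1/\epsilon)$.

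The principal subtlety is verifying the hypothesis of Lemma~\ref{lem:frequency_est_refine}: median-of-means boosts the confidence that $\tilde{X}_k$ and $\tilde{Y}_k$ each lie within $1/\sqrt{2}$ of the corresponding true value to $1 - O(q/K)$, but it does not by itself reduce the per-coordinate error below $1/\sqrt{2}$, giving only $|\tilde{Z}_k - e^{i\theta t_k}| < 1$ rather than the required $<1/2$. This gap is bridged either by tightening the input guarantee at the source (taking more underlying samples in the construction of $X(t), Y(t)$ so that $|X-\cos|, |Y-\sin| < 1/(2\sqrt{2})$ with probability $\ge 2/3$, costing only constants), or by using a straightforward variant of Lemma~\ref{lem:frequency_est_refine} with tolerance $\alpha \in (1/2,1)$ and shrinking factor $1/2 + \arcsin(\alpha)/\pi < 1$; either option preserves the asymptotic scalings in $A/\epsilon$ and $1/q$.
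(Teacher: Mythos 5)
Your approach mirrors the paper's own proof closely: a fixed geometric schedule of evolution times $t_k \propto (3/2)^{k-1}/A$, per-scale confidence boosting by a median over $m = O(\log(L/q))$ repetitions, applying Lemma~\ref{lem:frequency_est_refine} to shrink the candidate interval by a constant factor, and a union bound over the $O(\log(A/\epsilon))$ scales; the sample-count and total-time bookkeeping agree with the paper's.

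The one place you depart from the paper is where you are right to be uneasy, and it exposes a genuine constant-factor gap in the paper's own argument. Under hypothesis \eqref{eq:rv_correctness_condition_prob}, median boosting only inherits the same per-coordinate error $1/\sqrt{2}$ with amplified confidence, so the combined estimate satisfies
\begin{equation*}
    |S(t) - e^{i\theta t}| = \bigl((X_{\mathrm{med}}-\cos(\theta t))^2 + (Y_{\mathrm{med}}-\sin(\theta t))^2\bigr)^{1/2} < 1,
\end{equation*}
not $<1/2$ as the paper's proof asserts, so the hypothesis of Lemma~\ref{lem:frequency_est_refine} is not actually verified. Your diagnosis of this gap is correct.

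Of your two proposed repairs, the first --- tighten the source so that each of $X(t)$, $Y(t)$ is within $1/(2\sqrt{2})$ with probability $\ge 2/3$ --- works and costs only constants; in the application of Section~\ref{sec:experimental_setup} it amounts to averaging more raw $\pm 1$ outcomes per call, though it does require amending the theorem's stated hypothesis and not just its proof. Your second repair, however, does not work as stated: the hypothesis only delivers $|S(t) - e^{i\theta t}| < 1$, and the generalized lemma with tolerance $\alpha$ shrinks the interval by the factor $1/2 + \arcsin(\alpha)/\pi$, which tends to $1$ as $\alpha \to 1$. Since nothing in \eqref{eq:rv_correctness_condition_prob} provides a fixed $\alpha < 1$ for the combined complex error, the shrinking factor is not bounded away from one and the bisection need not terminate in $O(\log(A/\epsilon))$ steps. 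The first repair is the one to keep.
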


\begin{proof}
    We let $\eigen=\epsilon/3$.
    We build a random variable $S(t)$ satisfying \eqref{eq:rv_correctness_condition}, with which we will iteratively narrow down the interval $[a,b]$ containing $\theta$ until $|b-a|\leq 2\eigen$, at which point we choose $\hat{\theta}=(a+b)/2$. If $\theta\in[a,b]$ we will then ensure $|\hat{\theta}-\theta|\leq \eigen$. However, each iteration will involve some failure probability, which we will analyze later.

    To build the random variable $S(t)$, we first use $m$ independent samples of $X(t)$ and then take median $X_{\mathrm{median}}(t)$, which satisfies 
    \[
    |X_{\mathrm{median}}(t)-\cos(\theta t)|\leq 1/\sqrt{2}
    \]
    with probability at least $1-\delta/2$, where by \eqref{eq:rv_correctness_condition_prob} and the Chernoff bound 
    \[
    \delta = c_1 e^{-c_2 m},
    \]
    for some universal constant $c_1,c_2$.
    Similarly we can obtain $Y_{\mathrm{median}}(t)$ such that
    \[
    |Y_{\mathrm{median}}(t)-\cos(\theta t)|\leq 1/\sqrt{2}
    \]
    with probability at least $1-\delta/2$. With these medians we then define
    \[
    S(t) = X_{\mathrm{median}}(t) + iY_{\mathrm{median}}(t).
    \]
    This random variable satisfies
    \[
    |S(t)-e^{i\theta t}|\leq 1/2
    \]
    with probability at least $1-\delta$ using the union bound. It therefore allows us to solve the discrimination task in Lemma~\ref{lem:frequency_est_refine} with probability at least $1-\delta$.

    Whether each iteration proceeds correctly or not, the algorithm terminates after 
    \[
    L=\lceil \log_{3/2}(A/\eigen) \rceil
    \]
    iterations. In the $l$th iteration we sample $X(s_l)$ and $Y(s_l)$ where $s_l=(\pi/2A)(3/2)^{l-1}$. 
    We use $m$ samples of $X(s_l)$ and $Y(s_l)$ for computing the median in each iteration, and therefore the failure probability is at most $\delta=c_1 e^{-c_2 m}$. 
    With probability at most $L\delta$ using the union bound, one of the iteration fails. 
    In order to ensure the protocol succeeds with probability at least $1-q$, it suffices to let $\delta \leq q/L$, and it therefore suffices to choose
    \[
    m = \lceil c_3 \log(L/q) \rceil=\mathcal{O}(\log(1/q)+\log\log(A/\epsilon)).
    \]
    The total number of samples (for either $X(t)$ or $Y(t)$) is therefore as described in \eqref{eq:number_of_samples_RPE}.
    All the $t$ in each sample added together is
    \begin{equation}
        \label{eq:total_time}
        T = m\sum_{l=1}^{L}  s_l = \frac{m\pi}{2A}\sum_{l=1}^L \left(\frac{3}{2}\right)^{l-1} \leq \frac{3\pi m}{2\epsilon}=\mathcal{O}((1/\epsilon)(\log(1/q)+\log\log(A/\epsilon))),
    \end{equation} 
    thus giving us \eqref{eq:total_evolution_time_RPE}.
\end{proof}

\bibliographystyle{alpha}
\bibliography{ref.bbl}

\end{document}